\def\draft{0}
\newcommand*{\myfont}{\fontfamily{bch}\selectfont}
\DeclareTextFontCommand{\textmyfont}{\myfont}
\newtheorem{theorem}{Theorem}[section]
\newtheorem{lemma}[theorem]{Lemma}
\newtheorem{observation}[theorem]{Observation}
\newtheorem{definition}[theorem]{Definition}
\newtheorem{claim}[theorem]{Claim}
\newtheorem{remark}[theorem]{Remark}
\newcommand{\F}{\mathbb{F}}
\newcommand{\R}{\mathbb{R}}
\newcommand{\Boo}{\{0,1 \}}
\newcommand{\bigO}{\mathcal{O}}
\newcommand{\paren}[1]{\left( #1 \right)}
\newcommand{\brac}[1]{\left[ #1 \right]}
\newcommand{\set}[1]{\left\{ #1 \right\}}
\newcommand{\setcond}[2]{\left\{ #1 \;\middle\vert\; #2 \right\}}
\newcommand{\su}{\textnormal{sum}}
\DeclareMathOperator*{\E}{\mathbb{E}}
\newcommand{\cP}{\mathcal{P}}
\newcommand{\Z}{\mathbb{Z}}
\definecolor{thmcolor}{RGB}{235, 235, 235}
\definecolor{citecolor}{RGB}{1, 210, 56}
\newtcolorbox{algobox}{colback=lightgray!5!white,colframe=lightgray!75!black}
\newtcolorbox{thmbox}{colback=thmcolor!5!white,colframe=black!75!black}
\newcommand{\anote}[1]{{\color{brown} [Amik: #1]}}
\newcommand{\mnote}[1]{{\color{red} [Madhu: #1]}}
\newcommand{\mpnote}[1]{{\color{pink} [Manaswi: #1]}}
\newcommand{\pnote}[1]{{\color{blue} [Prashanth: #1]}}
\newcommand{\snote}[1]{{\color{green} [Srikanth: #1]}}
\newcommand{\anote}[1]{}
\newcommand{\mnote}[1]{}
\newcommand{\mpnote}[1]{}
\newcommand{\pnote}[1]{}
\newcommand{\snote}[1]{}
\def\anon{0}
\date{July 3, 2025}
\begin{document} 
	\title{A Near-Optimal Polynomial Distance Lemma Over Boolean Slices}
 
    \if\anon1{}\else{    
    \author{Prashanth Amireddy\thanks{School of Engineering and Applied Sciences, Harvard University, Cambridge, Massachusetts, USA. Supported in part by a Simons Investigator Award and NSF Award CCF 2152413 to Madhu Sudan and a Simons Investigator Award to Salil Vadhan. Email: \texttt{pamireddy@g.harvard.edu}} \and
    Amik Raj Behera\thanks{Department of Computer Science, University of Copenhagen, Denmark. Supported by Srikanth Srinivasan's start-up grant from the University of Copenhagen. Email: \texttt{ambe@di.ku.dk} } \and
     Srikanth Srinivasan \thanks{Department of Computer Science, University of Copenhagen, Denmark. This work was funded by the European Research Council (ERC) under grant agreement no. 101125652 (ALBA).
     Email: \texttt{srsr@di.ku.dk} } \and 
     Madhu Sudan\thanks{School of Engineering and Applied Sciences, Harvard University, Cambridge, Massachusetts, USA. Supported in part by a Simons Investigator Award, NSF Award CCF 2152413 and AFOSR award FA9550-25-1-0112. Email: \texttt{madhu@cs.harvard.edu}}}
     }\fi

	\maketitle
        \pagenumbering{arabic}
\begin{abstract}
The celebrated Ore-DeMillo-Lipton-Schwartz-Zippel (ODLSZ) lemma asserts that $n$-variate non-zero polynomial functions of degree $d$ over a field $\mathbb{F}$, are non-zero over any ``grid'' (points of the form $S^n$ for finite subset $S \subseteq \mathbb{F}$) with probability at least $\max\{|S|^{-d/(|S|-1)},1-d/|S|\}$ over the choice of random point from the grid. In particular, over the Boolean cube ($S = \{0,1\} \subseteq \mathbb{F}$), the lemma asserts non-zero polynomials are non-zero with probability at least $2^{-d}$. In this work we extend the ODLSZ lemma optimally (up to lower-order terms) to ``Boolean slices'' i.e., points of Hamming weight exactly $k$. We show that non-zero polynomials on the slice are non-zero with probability $(t/n)^{d}(1 - o_{n}(1))$ where $t = \min\{k,n-k\}$ for every $d \leq k \leq (n-d)$. As with the ODLSZ lemma, our results extend to polynomials over Abelian groups. This bound is tight upto the error term as evidenced by multilinear monomials of degree $d$, and it is also the case that some corrective term is necessary. A particularly interesting case is the ``balanced slice'' ($k=n/2$) where our lemma asserts that non-zero polynomials are non-zero with roughly the same probability on the slice as on the whole cube.\\ 

\noindent
The behaviour of low-degree polynomials over Boolean slices has received much attention in recent years. However, the problem of proving a tight version of the ODLSZ lemma does not seem to have been considered before, except for a recent work of Amireddy, Behera, Paraashar, Srinivasan and Sudan (SODA 2025), who established a sub-optimal bound of approximately $((k/n)\cdot (1-(k/n)))^d$ using a proof similar to that of the standard ODLSZ lemma.\\

\noindent
While the statement of our result mimics that of the ODLSZ lemma, our proof is significantly more intricate and involves spectral reasoning which is employed to show that a natural way of embedding a copy of the Boolean cube inside a balanced Boolean slice is a good sampler.

\end{abstract} 

        \newpage 
        
\tableofcontents

        \newpage

\section{Introduction}\label{sec:intro}
The Ore-DeMillo-Lipton-Schwartz-Zippel (ODLSZ) \cite{ore1922hohere, DL78,Zippel79,Schwartz80} lemma captures the basic algebraic fact that a low-degree polynomial does not have many roots on a ``nice set'' of points. The standard nice set for this lemma is a grid $S^n$ (where $S$ is a finite subset of a field) and a version of this lemma states that no non-zero degree-$d$ polynomial can vanish on more that $d |S|^{n-1}$ points. This is easily seen to be tight: Take, for example, a univariate polynomial that has $d$ roots in $S$.

There also exist useful variants of this lemma for the case where $|S| < d.$ The example above shows that in general a degree-$d$ polynomial can vanish over all of $S^n$ and so some further condition is necessary. The most obvious condition is to simply force the polynomial to be non-zero on the grid $S^n.$ In the setting of the Boolean cube, i.e. $S = \{0,1\}$, which is the setting we study, this is equivalent to considering non-zero \emph{multilinear} polynomials of degree $d.$ In this setting (a variant of) the ODLSZ lemma states that a non-zero multilinear polynomial of degree $d$ is non-zero on at least $2^{n-d}$ points of $\{0,1\}^n.$ Again, this is tight: Take, e.g., a multilinear monomial of degree $
d.$

Though both these forms of the ODLSZ lemma are simple statements with easy inductive proofs, they have many different applications in the design of randomized algorithms \cite{Rabin-Vazirani-Matching}, probabilistically checkable proofs~\cite{BFLS, ALMSS}, pseudorandom constructions~\cite{DKSS, GRS-codingbook}, Boolean function analysis \cite{Nisan-Szegedy}, data communication~\cite{ABCO}, small-depth circuit lower bounds ~\cite{PaturiSaks, HRRY-threshold} and extremal combinatorics \cite{Saraf-Sudan}.

In this paper, we extend the ODLSZ lemma to a different nice set namely the \emph{Boolean slice}, which is an important subset of the Boolean cube $\{0,1\}^n$. For a parameter $k$, we use $\{0,1\}^n_k$ to denote the $k$th Boolean slice, i.e., the set of points in the cube of Hamming weight exactly $k$. The behavior of low-degree polynomials on Boolean slices has received quite a bit of attention recently with motivations from learning theory \cite{OW}, Boolean function analysis \cite{Wimmer, Filmus2014AnOB, FilmusKMW, Filmus-Ihringer19}, property testing~\cite{DDGKS17, KLMZ}, circuit lower bounds \cite{HRRY-threshold}, and local decoding algorithms \cite{ABPSS25-ECCC}. However, as far as we know, the natural question of finding a tight version of the ODLSZ lemma over Boolean slices has not been considered before. This is the question we address in this paper. 

More precisely, we consider the following question: 
\begin{center}
    \textcolor{red}{\textit{Given a polynomial $P$ of degree at most $d$ that does not vanish on $\{0,1\}^n_k$, how many zeroes can have $P$ have in this set?}}
\end{center}
This question makes sense when $d\leq t:= \min\{k,n-k\}$, since any function on $\{0,1\}^n_k$ can be expressed as a polynomial of degree $t.$

We give a near-optimal answer to this question for low-degree polynomials. More precisely, our main theorem is stated below. It holds for polynomials over any field and even in the case where the coefficients come from an \emph{Abelian group}\footnote{A multilinear polynomial over an Abelian group $G$ is of the form $\sum_{S\subseteq [n]} a_S \prod_{i\in S}x_i $ where $a_S\in G$ for each $S.$ Polynomials over such domains appear naturally in applications to circuit complexity~\cite{ACC-Torus-Learning} and additive combinatorics~\cite{TZ}.} (as is also true of the standard version of ODLSZ lemma over the Boolean cube).

\begin{thmbox}
\begin{restatable}[Main Theorem]{theorem}{mainthm}\label{thm:main}
There exists an absolute constant $\varepsilon>0$ so that the following holds. Fix an arbitrary Abelian group $G$ and a degree parameter $d \in \mathbb{N}$. For all natural numbers $n$ and $k$ such that $d\le k \leq n-d$, the following holds whenever $d \leq t^{\varepsilon}$ where $t = \min\{k,n-k\}$:\newline
For any degree-$d$ polynomial $P: \Boo^{n}_{k} \to G$ that does not vanish on $\{0,1\}^n_k$, we have
\begin{align*}
    \Pr_{\mathbf{x} \sim \Boo^{n}_{k}}[P(\mathbf{x}) \neq 0] \; \geq \; \bigg(\dfrac{t}{n}\bigg)^{d} \cdot \paren{1-\dfrac{1}{t^{\varepsilon}}}.
\end{align*}
\end{restatable}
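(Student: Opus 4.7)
The plan is to first establish the theorem in the balanced case $k = n/2$, where the bound becomes $2^{-d}(1-o(1))$, and then reduce the general $k$ to this case. For the balanced case, the central construction is a ``cube-in-slice'' embedding: given a perfect matching $\pi$ on $[n]$, define $\phi_\pi : \{0,1\}^{n/2} \to \{0,1\}^n_{n/2}$ by assigning the $j$-th pair of $\pi$ the values $(y_j, 1-y_j)$. Each image $C_\pi := \phi_\pi(\{0,1\}^{n/2})$ is a Boolean subcube of the balanced slice, and the pullback $P_\pi := P \circ \phi_\pi$ is a multilinear polynomial of degree at most $d$ in $n/2$ variables. Whenever $P_\pi \not\equiv 0$, the standard multilinear ODLSZ lemma applied to $P_\pi$ on $\{0,1\}^{n/2}$ forces the density of the non-zero set of $P$ inside $C_\pi$ to be at least $2^{-d}$.

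The heart of the argument is to transfer this $2^{-d}$ density bound from individual cubes $C_\pi$ back to the whole slice, which amounts to showing that the random family $\{C_\pi\}_\pi$ is a \emph{good sampler} for subsets of $\{0,1\}^n_{n/2}$. Let $A$ denote the non-zero set of $P$; by symmetry the random variable $|A \cap C_\pi|/|C_\pi|$ has expectation exactly $|A|/\binom{n}{n/2}$ over a uniformly random $\pi$. I would bound its variance by a spectral analysis of the Markov operator $K$ on the slice obtained by, starting from $x$, picking a uniformly random matching $\pi$ with $x \in C_\pi$ and then moving to a uniformly random $x' \in C_\pi$. This operator commutes with the symmetric group action on $[n]$ and hence diagonalises along the well-known isotypic (Johnson-scheme) decomposition $L^2(\{0,1\}^n_{n/2}) = V_0 \oplus V_1 \oplus \cdots \oplus V_{n/2}$. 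Establishing eigenvalue bounds on each $V_j$ with $j \leq d$ strong enough to push the sampler error well below $2^{-d}$ is the main technical obstacle, and it is what delivers the near-tight $(1-o(1))$ factor in the conclusion.

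Two loose ends remain. To guarantee that $P_\pi \not\equiv 0$ for typical $\pi$ (so that the per-cube bound is non-vacuous), I would study the degree-$d$ homogeneous component $P_{=d}$, express the leading coefficients of $P_\pi$ as an explicit linear function of the coefficients of $P_{=d}$ parameterised by $\pi$, and use a second-moment argument to conclude $\Pr_\pi[P_\pi \equiv 0] = o(1)$. To reduce the general case (say $k \leq n/2$; the other is symmetric) to the balanced one, I would sample a uniformly random $T \subseteq [n]$ with $|T| = 2k$, restrict $P$ by setting coordinates outside $T$ to $0$, and apply the balanced-slice bound to the restriction on $\{0,1\}^T_k$. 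Averaging gives
\[
\Pr_{x \sim \{0,1\}^n_k}[P(x) \neq 0] \;\geq\; \Pr_T[P|_T \not\equiv 0 \text{ on } \{0,1\}^T_k] \cdot 2^{-d}(1-o(1)),
\]
and a parallel sampler-style argument over random subsets $T$ shows $\Pr_T[P|_T \not\equiv 0] \geq (2k/n)^d(1-o(1))$, with a degree-$d$ monomial realising this bound tightly. Multiplying the two factors yields the claimed $(k/n)^d(1-o(1))$ lower bound.
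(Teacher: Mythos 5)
Your proposal for the balanced case $k=n/2$ follows the same architecture as the paper's: embed a copy of $\{0,1\}^{n/2}$ into the balanced slice via a random perfect matching, apply the cube ODLSZ lemma to the pullback, and analyze the associated Markov operator $W$ (your $K$) on the slice via its Johnson-scheme eigenspace decomposition to establish a sampler property. Combining a lower bound on $\Pr[x\in S, y\in S]$ (from per-cube ODLSZ) with an upper bound (from the sampler estimate) does yield $\rho \geq 2^{-d}(1-o(1))$, exactly as in the paper. That part of the plan is sound and aligned with the paper's method. Two other pieces are off.

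\textbf{The coefficient second-moment argument is both unnecessary and incorrect as stated.} You propose to show $\Pr_\pi[P_\pi \equiv 0] = o(1)$ by expressing the leading coefficients of $P_\pi$ in terms of those of $P_{=d}$ and running a second-moment bound. This cannot be carried out in the generality required: the coefficients lie in an arbitrary Abelian group $G$, where squares and variances are not defined, and the representation of a slice function as a polynomial is not unique (on the balanced slice $\sum_i x_i$ is a constant), so ``$P_{=d}$'' is not well-defined. More importantly, the whole detour is unneeded: for any $u\in S$ and \emph{every} matching $\pi$ with $u\in C_\pi$, the pullback $P_\pi$ is non-zero because it takes the value $P(u)\neq 0$ at the point of $\{0,1\}^{n/2}$ corresponding to $u$. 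This gives the deterministic lower bound $\Pr_{y\sim W(u)}[y\in S]\geq 2^{-d}$ for every $u\in S$, hence $\Pr[x\in S \text{ and } y\in S]\geq \rho\cdot 2^{-d}$, which is what the argument actually requires; compare \Cref{lem:lower-bd}.

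\textbf{The reduction from general $k$ to the balanced case has a genuine gap.} You assert that ``a parallel sampler-style argument over random subsets $T$'' yields $\Pr_T[P|_T \not\equiv 0 \text{ on } \{0,1\}^T_k]\geq (2k/n)^d(1-o(1))$. But non-vanishing of $P|_T$ is \emph{not} a concentration phenomenon. For the witness $P=x_1\cdots x_d$, one has $\Pr_T[P|_T\not\equiv 0]=\Pr_T[\{1,\dots,d\}\subseteq T]\approx(2k/n)^d$, which can be polynomially or exponentially small, yet the expected number of non-zeroes of $P$ whose support lies inside $T$ is $\rho\binom{2k}{k}\gg 1$. The distribution is extremely heavy-tailed, so no second-moment/Chebyshev bound can extract a meaningful lower bound on the probability of non-emptiness. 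The correct mechanism is algebraic, not probabilistic: one needs a basis in which a non-zero ``witness'' coefficient of $P$ can be tracked under restriction to $T$. The paper shows that the degree-$d$ homogeneous multilinear monomials form a basis for $\mathcal{P}_d(n,k,\mathbb{Z}_q)$ precisely for ``good'' slices $k$ (those whose $p$-ary digits are compatible with those of $d$; see \Cref{defn:good-slices}), via Wilson's dimension count and Lucas's theorem, and then handles arbitrary $k$ by a further random restriction (fixing $O(d)$ variables to $1$) to land on a nearby good slice (\Cref{lem:all-slices}). Your sketch has no mechanism for producing a surviving witness, so this step does not go through as written.

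A smaller technical point: to push the sampler error ``well below $2^{-d}$'' for $d$ as large as $n^{\Omega(1)}$, second-eigenvalue bounds on $W$ alone are insufficient — $\mu(W)$ is only $O(\log n/\sqrt n)$, so the expander-mixing error term $\mu(W)\rho$ dominates $\rho^2\approx 4^{-d}$ once $d\gg \log n$. The paper additionally uses hypercontractivity for the Johnson-scheme noise operator to exploit the small density of $S$; without some analogous level-dependent tool, your spectral plan only reaches $d=O(\log n)$.
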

\end{thmbox}
We prove \Cref{thm:main} in \Cref{ssec:general-proof}.\newline
At a high level, the uniform distribution on $\Boo^{n}_{k}$ is similar to the $(k/n)$-biased distribution, i.e. the distribution where each coordinate is independently $1$ with probability $k/n$. With slight modifications to the proof of ODLSZ lemma, one can show (see for example  \cite[Claim 6.8]{Dinur2017Agreement}) that the probability  of sampling a non-zero point from $(k/n)$-biased distribution is $(t/n)^{d}$, where $t = \min\set{k, n-k}$. The bound given by \Cref{thm:main} is equal to this bound up to small error terms.

\paragraph{Tightness.} The bound given is easily seen to be nearly tight using essentially the same example as in the case of the Boolean cube. For $k\leq n/2$, the monomial $x_1\cdots x_d$ is non-zero with probability approximately $(k/n)^d$, and there is a similar example for $k > n/2.$ Moreover, it is also possible to see that for certain $k$, an error term is required. For example, assume that $G$ is the finite field $\F_2$, $k=n/2$ and $d=1.$ Then the linear polynomial $x_1+x_2+1$ is non-zero with probability $1/2 - \Theta(1/n)$, implying that the monomial does not yield exactly the optimal bound. In the case that the degree $d=1$, we can improve the error parameter and show a bound of $t/n - 1/n$ (\Cref{thm:deg1} in \Cref{sec:deg1}).

\paragraph{Proof Techniques.} The standard proofs of the ODLSZ lemma follow a simple inductive strategy, using the obvious univariate case for both the base case and each inductive step of the argument. The recent work of Amireddy, Behera, Paraashar, Srinivasan and Sudan~\cite{ABPSS25-ECCC} used a similar idea to show the following sub-optimal bound. Unfortunately, it is not clear how to make the inductive strategy work for the slice to get a tight answer.\\

\begin{lemma}[Suboptimal distance lemma for slices]\label{lemma:suboptimal-DLSZ-slice}
\cite[Lemma 5.1.6]{ABPSS25-ECCC}. For every Abelian group $G$ and non-negative integers $d,k,n$ with $n \geq 1$ and $d\le k \leq n-d$ the following holds: 
For every degree-$d$ polynomial $P: \Boo^{n}_{k} \to G$ that does not vanish on $\{0,1\}^n_k$, we have
\begin{align*}
\Pr_{{\bf x}\sim \{0,1\}^n_k}[P(\mathbf{x}) \neq 0] \geq \binom{n-2d}{k-d}\bigg /{n\choose k}.
\end{align*}
In particular, for $k=n/2$ (for an even $n$), the above probability is at least $4^{-d}$.
\end{lemma}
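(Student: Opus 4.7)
My plan is to proceed by induction on $n$, following the standard ODLSZ inductive template but with a case analysis tailored to the slice. First, the target bound $B(n,k,d) := \binom{n-2d}{k-d}/\binom{n}{k}$ satisfies the recursion
\[
B(n,k,d) \;=\; \frac{n-k}{n}\,B(n-1,k,d) \;+\; \frac{k}{n}\,B(n-1,k-1,d),
\]
which follows from Pascal's identity together with $\binom{n-1}{k} = \frac{n-k}{n}\binom{n}{k}$ and $\binom{n-1}{k-1} = \frac{k}{n}\binom{n}{k}$. This is exactly the law of total probability for a uniformly random slice point after conditioning on a single coordinate $x_i$: the $x_i=0$ branch lives on $\{0,1\}^{n-1}_k$ with weight $\frac{n-k}{n}$, and the $x_i=1$ branch on $\{0,1\}^{n-1}_{k-1}$ with weight $\frac{k}{n}$. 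So if some $i \in [n]$ has both restrictions $P|_{x_i=0}$ and $P|_{x_i=1}$ non-vanishing on their respective slices, applying the inductive hypothesis to each branch immediately yields $\Pr[P \ne 0] \ge B(n,k,d)$.

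I would take $d = 0$ (where $P$ is a non-zero constant and the bound equals $1$) together with the degenerate cases $d = k$ or $d = n-k$ (where the bound reduces to $1/\binom{n}{k}$, trivially met by any non-vanishing function) as base cases. In the inductive step one may therefore assume $d < t := \min(k, n-k)$, which in particular ensures $d \le k-1$ and $d \le n-1-k$, so the inductive hypothesis applies on both smaller slices. The key subclaim is then the existence of a good coordinate $i$ as above. I would prove this by contradiction: if for every $i$ one of the two restrictions vanishes on its slice, let $B := \{i : P|_{x_i=0} \equiv 0 \text{ on } \{0,1\}^{n-1}_k\}$ and $A := \{i : P|_{x_i=1} \equiv 0 \text{ on } \{0,1\}^{n-1}_{k-1}\}$. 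These cannot overlap (else $P \equiv 0$ on the whole slice) and together cover $[n]$, and any slice point where $P \ne 0$ must have $x_i=1$ on $B$ and $x_i=0$ on $A$. This pins the point down to $\mathbf{1}_B$, forces $|B|=k$, and shows that $P$ as a function on the slice is a non-zero $G$-scalar multiple of the single-point indicator $\mathbf{1}_{\{\mathbf{1}_B\}}$.

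The main obstacle is then to derive a contradiction from $\deg P \le d < t$: one needs that a single-point indicator on $\{0,1\}^n_k$ admits no multilinear polynomial representation of degree strictly less than $t$ over any Abelian group $G$. Over characteristic-zero fields this is standard from the harmonic / Young-module decomposition of slice functions, which places the indicator at top level $t$. For arbitrary $G$ I would argue by restricting to the $S_k \times S_{n-k}$-orbit decomposition of the slice, parameterized by $u := \sum_{i \in B} x_i \in \{0, \ldots, k\}$, and reducing to the uniqueness (valid over any Abelian group, by the integrality of forward differences) of the binomial-basis expansion of functions $\{0, \ldots, k\} \to G$: the products $\binom{u}{a}\binom{k-u}{b}$ with $a+b < t$ lie in the $\mathbb{Z}$-span of $\{\binom{u}{j}\}_{j<t}$ and so can never produce a function supported at the single value $u=k$.
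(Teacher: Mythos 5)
The paper does not prove this lemma itself; it is imported from \cite{ABPSS25-ECCC}, with the surrounding text indicating the cited proof is an ODLSZ-style induction. Your framework --- the recursion $B(n,k,d)=\frac{n-k}{n}B(n-1,k,d)+\frac{k}{n}B(n-1,k-1,d)$, the base cases $d=0$ and $d=\min(k,n-k)$, and the case analysis showing that if no coordinate $i$ has both restrictions $P|_{x_i=0},P|_{x_i=1}$ non-vanishing then $[n]$ is partitioned into disjoint sets $A,B$ with $|B|=k$ and $P$ a non-zero $G$-multiple of the point indicator $\mathbf{1}_{\{\mathbf{1}_B\}}$ --- is all correct and matches that template.

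The gap is in the last step, where you must show a point indicator on $\Boo^n_k$ has no degree-$(<t)$ multilinear representation over an arbitrary Abelian group $G$. That \emph{fact} is true, but the route you sketch does not go through: while $P$ is $S_k\times S_{n-k}$-invariant as a \emph{function}, this does not give it an invariant degree-$\le d$ multilinear representation over $G$, because one cannot average the coefficients $c_S$ over $S_k\times S_{n-k}$ without dividing by the group order. The unnormalised orbit-sum $\hat P=\sum_{\sigma}P\circ\sigma$ does lie in the $G$-span of the products $\binom{u}{a}\binom{k-u}{b}$ with $a+b\le d$, but one computes $\hat P(u)=k!(n-k)!\,g\cdot\mathbf{1}_{u=k}$, which may be identically zero over a torsion group (e.g.\ $G=\Z_2$, $k\ge 2$), so no contradiction is reached. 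A clean fix, which keeps the rest of your proof intact: assume $k\le n-k$ (the other case is symmetric), pick an injection $\phi:B\to A$, and embed $\Boo^k\hookrightarrow\Boo^n_k$ by $y\mapsto\mathbf{x}$ with $x_i=y_i$ and $x_{\phi(i)}=1-y_i$ for $i\in B$, all other coordinates $0$. Then $Q(\mathbf{y}):=P(\mathbf{x})$ is a multilinear polynomial on $\Boo^k$ of degree $\le d$, yet $Q=g\cdot\mathbf{1}_{\mathbf{y}=\mathbf{1}}$; by uniqueness of multilinear representations on the cube over any Abelian group (M\"obius inversion), $Q=g\cdot y_1\cdots y_k$, which has degree $k=t>d$, a contradiction. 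With this substitution the argument is complete.
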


A computation shows that for small $d$, the above implies that the fraction of points in $\{0,1\}^n_k$ where $P$ does not vanish is at least $((k/n)\cdot (1-(k/n)))^d$ (up to small error terms). When $k = n/2$, for example, this bound is $4^{-d}$ which is quadratically worse than \Cref{thm:main}.

To get the tight bound, we use a very different approach. We start with the above suboptimal bound, but combine it with spectral techniques, which we elaborate on next. Note that if the slice $k = n/2$, i.e. the balanced slice, then we get a bound of nearly $1/2^{d}$ which is essentially the same as the ODLSZ lemma over the Boolean cube $\Boo^{n}$ (\Cref{thm:DLSZ}). 
To prove this, the high-level idea is to consider the process of choosing a random subcube in the balanced Boolean slice $\Boo^{n}_{n/2}$ as follows: pair the $n$ coordinates into $n/2$ pairs uniformly at random, and in each such pair $\{x_i,x_j\}$, identify $x_i$ with the Boolean negation of $x_j$, i.e. $1-x_j.$ This gives us a random embedding of an $n/2$-dimensional cube in the slice $\Boo^{n}_{n/2}$ and the polynomial $P$ restricts to a degree-$d$ polynomial $Q$ on this subcube. If we could guarantee that $Q$ was always non-zero, then the standard ODLSZ lemma on the cube would give us the desired statement. Unfortunately, there are subcubes on which $P$ could be identically zero. The main technical lemma is to show that $Q$ is non-zero with high probability: intuitively, this is because the random process above is a good \emph{\underline{sampler}} of the balanced slice, i.e. the points in the randomly chosen subcube behave essentially like independent samples of the balanced slice.

Formally, the technical lemma is a statement about the approximate pairwise independence of two random points of the chosen subcube. We show (see \Cref{lemma:main-informal}) that the probability that two random points of this subcube lie in a set of density $\rho$ is roughly $\rho^2.$ This is done by analyzing a natural weighted graph $\Gamma$ on the balanced slice defined by the above sampling process. We show this via two arguments, depending on the regime of the degree parameter $d$.


For $d \leq C \log n$ for a constant $C > 0$, the main technical lemma follows from the use of the Expander mixing lemma \cite{Alon-Chung-EML}, which implies such a statement using bounds on the second-largest eigenvalue of the graph. To analyze the second-largest eigenvalue of $\Gamma$, we show that it can be embedded (as an induced subgraph) in a Cayley graph defined on the subgroup of $\F_2^n$ defined by points of even Hamming weight. The latter is easier to analyze using Boolean Fourier analysis, and an application of the eigenvalue interlacing theorem allows us to bound the eigenvalues of $\Gamma$. See \Cref{subsec:simple-proof-cayley} for more details. This easier case of the lemma is already interesting: for instance, it yields a different (arguably easier) proof of a junta theorem on the Boolean slice \cite{Filmus-Ihringer19}, analogous to a well-known theorem of Nisan and Szegedy \cite{Nisan-Szegedy}.

For $d = n^{\gamma}$ for a small constant $\gamma > 0$, we need to strengthen the guarantee of the sampler. To do so, we use the fact that the adjacency matrix for $\Gamma$ can be spectrally upper-bounded by another matrix\footnote{The technically accurate descriptor for this matrix is the `Noise operator in the Bose-Mesner algebra of the Johnson scheme.' See \Cref{sec:bal-slice} for details.} that satisfies a \emph{Hypercontractive inequality} (see \Cref{lemma:hypercontractive-log-Sobolev}). Intuitively, this is stronger than an eigenvalue bound, as the latter measures only the worst-case expansion of the underlying graph, while the former gives us stronger bounds on the expansion of smaller sets. Using this inequality alongside the Expander mixing lemma yields the desired pairwise independence. See \Cref{subsec:proof-johnson-scheme} for more details.



For imbalanced slices, i.e. $k \neq n/2$, we reduce to the balanced case via a random restriction idea (see \Cref{sec:all-slices}). The main conceptual idea is to obtain a basis for the space of polynomial functions on a slice. We note, essentially using an argument of Wilson \cite{WILSON90}, that for many distinct slices, the space of homogeneous multilinear monomials of degree $d$ forms a basis for the space of polynomials of degree $d$ on the slice (see \Cref{claim:homogeneous-spanning-positive-char}). Unlike other known bases for this space~\cite{Filmus2014AnOB}, this idea also works over fields of positive characteristic and even over cyclic groups of prime power order. For such `good' slices $k\leq n/2$, we reduce to a $2k$-dimensional cube via a random restriction (see \Cref{lemma:good-slice-positive-char}), which can easily be seen to leave the polynomial non-zero with probability $(2k/n)^d.$ Invoking the balanced case now concludes the lemma for the good slices.

Finally, to extend the main theorem to all slices, we note that for any slice $k$, there is a good slice not too `far away' (in the range $[k-\bigO(d),k]$) (see \Cref{lem:all-slices}). By setting a few variables at random to $1,$ we are able to reduce to a good slice.

\paragraph{Related Work.} As mentioned above, the study of low-degree polynomials over Boolean slices has received much attention in recent years. Closely related to this work is the work of Filmus~\cite{Filmus2014AnOB} that constructs a basis for the space of real-valued degree-$d$ polynomial functions over general Boolean slices. A recent result of Kalai, Lifshitz, Minzer and Ziegler \cite{KLMZ} constructs a \emph{dense model} for the balanced slice $\{0,1\}^n_{n/2}$ under the Gowers norm $U_d$; in particular, this implies that there is a subset $S$ of $\{0,1\}^n$ of constant density such that any polynomial of degree-$d$ has the same density over $S$ as it does over the balanced slice. In principle, both these works should be useful in order to prove a version of the ODLSZ lemma over Boolean slices. However, we note that each of these results is applicable over different domains ($\mathbb{R}$ or $\F_2$) while we prove a unified statement that holds over any Abelian group (and in particular over all fields).

\subsection{Applications of Optimal Distance Lemma}

To give some idea of the applicability of the ODLSZ lemma over the slice, we prove some variants of well-known theorems in combinatorics and Boolean function analysis.

\paragraph{Hyperplane covering.} Given a subset $S$ of the cube $\{0,1\}^n,$ we define the \emph{exact cover number of $S$}, denoted $\mathrm{ec}_n(S)$ to be the minimum number of hyperplanes (over some field $\F$) such that their union intersects $\{0,1\}^n$ exactly in the set $S$. A classical result of Alon and F\"{u}redi shows that for $S$ being the cube with a single point removed, $\mathrm{ec}_n(S) = n.$ This combinatorial result, which easily follows from with ODLSZ lemma over the cube, has seen many subsequent generalizations (e.g. \cite{Clifton-Huang,Sauermann-Wigderson-hyperplance-multiplicities, Bishnoi-hyperplance-subspace}).



Using just the sub-optimal version of the ODLSZ lemma (\Cref{lemma:suboptimal-DLSZ-slice}), we immediately get an optimal version of the hyperplane covering over a Boolean slice $\Boo^{n}_{k}$ with a missing point, instead of the whole Boolean cube $\Boo^{n}$. More precisely, for $S\subseteq \{0,1\}^n_k$, let $\mathrm{ec}_{n,k}(S)$ be the minimum number of hyperplanes (over some fixed field $\F$) such that their union intersects $\{0,1\}^n_k$ exactly in the set $S$. Following the idea of~\cite{Alon-Furedi}, we have the following.

\begin{thmbox}
\begin{restatable}{theorem}{hyperplaneslices}\label{thm:hyperplanes-slices}
Let $n,k$ be natural numbers with $k \in [n]$. Fix an arbitrary point $\mathbf{a} \in \Boo^{n}_{k}$. Then $\mathrm{ec}_{n,k}(\{0,1\}^n_k \setminus \{\mathbf{a}\}) = \min\{k,n-k\}.$
\end{restatable}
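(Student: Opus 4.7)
My plan is to prove matching upper and lower bounds on $\mathrm{ec}_{n,k}(\Boo^{n}_k \setminus \{\mathbf{a}\})$; throughout I set $t := \min\{k, n-k\}$.

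For the upper bound I would exhibit $t$ hyperplanes explicitly. Without loss of generality assume $k \leq n-k$, and let $I = \mathrm{supp}(\mathbf{a})$. The $k$ hyperplanes $\{x : x_j = 0\}$ for $j \in I$ together cover exactly those slice points with at least one coordinate in $I$ equal to $0$, which is precisely $\Boo^{n}_k \setminus \{\mathbf{a}\}$. The case $k > n-k$ is symmetric, using $\{x_j = 1\}$ for $j \notin I$.

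For the lower bound I would follow the classical Alon--F\"{u}redi template. Suppose hyperplanes defined by affine linear forms $L_1, \ldots, L_m$ over $\F$ cover $\Boo^n_k \setminus \{\mathbf{a}\}$ exactly. Then $P := L_1 \cdots L_m$ is a degree-$m$ polynomial on the slice that vanishes at every slice point other than $\mathbf{a}$, while $P(\mathbf{a}) \neq 0$ because $\mathbf{a}$ lies on none of the hyperplanes. Hence $P$ does not vanish on $\Boo^n_k$ and $\Pr_{\mathbf{x} \sim \Boo^n_k}[P(\mathbf{x}) \neq 0] = 1/\binom{n}{k}$. I may assume $m \leq t$, as otherwise the desired bound $m \geq t$ is immediate.

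Now I would invoke the suboptimal distance lemma (\Cref{lemma:suboptimal-DLSZ-slice}) with $d = m$; the hypothesis $m \leq k \leq n-m$ holds because $m \leq t$. The lemma yields $\binom{n-2m}{k-m}/\binom{n}{k} \leq 1/\binom{n}{k}$, i.e., $\binom{n-2m}{k-m} \leq 1$. A short arithmetic check finishes the argument: when $0 \leq m < t$, both $k-m \geq 1$ and $(n-2m) - (k-m) = (n-k) - m \geq 1$ hold, so $\binom{n-2m}{k-m} \geq n-2m \geq 2$, a contradiction. Hence $m \geq t$. No significant obstacle arises; notably, the tight \Cref{thm:main} is not needed here, since the suboptimal bound already strictly exceeds $1/\binom{n}{k}$ whenever $m < t$.
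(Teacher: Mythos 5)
Your proposal is correct and follows essentially the same approach as the paper: the identical explicit hyperplane construction for the upper bound, and for the lower bound the same contradiction argument via the product polynomial $P = L_1\cdots L_m$ combined with \Cref{lemma:suboptimal-DLSZ-slice}. The only cosmetic difference is that you carry both cases through $t=\min\{k,n-k\}$ and spell out the inequality $\binom{n-2m}{k-m}\ge n-2m\ge 2$, whereas the paper reduces WLOG to $k\le n/2$ and states $\binom{n-2m}{k-m}>1$ directly.
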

\end{thmbox}

\begin{proof}[Proof of \Cref{thm:hyperplanes-slices}]
Without loss of generality, we assume that $k \leq n/2$ and $\mathbf{a} = 1^{k} 0^{n-k}.$ Let $S$ denote $\{0,1\}^n_k \setminus \{\mathbf{a}\}.$

It is easy to see that $\mathrm{ec}_{n,k}(S) \leq k.$ The hyperplanes $H_{i} = \set{x_{i} = 0}$ for $i \in [k]$ cover exactly the points in $S.$

For the lower bound, assume for the sake of contradiction that there exists $m < k$ hyperplanes $H_{i} = \set{\ell_{i}(\mathbf{x}) = 0}$ (here $\ell_{i}(\mathbf{x})$ denotes a degree-$1$ polynomial and $i\in [m]$) 
covering exactly the points in $S.$
Then the polynomial $P(\mathbf{x}) := \prod_{i=1}^m \ell_i(\mathbf{x})$ is non-zero at exactly one point of $\{0,1\}^n_k$.

However, by \Cref{lemma:suboptimal-DLSZ-slice}, $P$ must be non-zero at at least $\binom{n-2m}{k-m}  > 1$ points (since $m < k \leq n/2$) of $\{0,1\}^n_k.$ Hence we arrive at a contradiction.
\end{proof}

\paragraph{A junta theorem for the slice.}
Nisan and Szegedy \cite{Nisan-Szegedy} showed that any Boolean function on $\Boo^{n}$ that has degree $d$ over $\mathbb{R}$ depends on $\bigO(d 2^{d})$ variables, i.e. it is a $\bigO(d 2^{d})$-junta. Chiarelli, Hatami, and Saks \cite{CHS-20} improved the bound to $\bigO(2^{d})$.  Filmus and Ihringer~\cite{Filmus-Ihringer19} extended this result to slices and showed that for a suitable range of $k$, any degree-$d$ (over $\mathbb{R}$) Boolean function on the slice $k$ is a restriction of a degree-$d$ function on $\{0,1\}^n$. Along with the result of~\cite{CHS-20}, this implies that such a function is an $\bigO(2^d)$-junta. While the results of~\cite{Nisan-Szegedy,CHS-20} are fairly elementary, the theorem of~\cite{Filmus-Ihringer19} is more involved, relying on the Log-Sobolev inequality and Hypercontractivity for the Boolean slice~\cite{LeeYau, DiaconisSaloffCoste}. 

Using \Cref{thm:main}, we show that we can avoid the use of advanced analytic techniques\footnote{We have two proofs of our main theorem. In the general case where $d$ can be as large as $n^{\Omega(1)}$, our proof also relies on hypercontractivity. However, in the case that $d \leq C \log n,$ which is also the main case of interest for junta theorems, our proof needs only basic Fourier analysis over the Boolean cube and the eigenvalue interlacing theorem.} in the proof of~\cite{Filmus-Ihringer19}, and give a direct proof (following the proof of~\cite{Nisan-Szegedy}) of the fact that any degree-$d$ Boolean function on the balanced slice $\{0,1\}^n_{n/2}$ depends on $\bigO(d2^d)$ variables (see \Cref{lem:improved-FI}). 
Plugging this into the proof of~\cite{Filmus-Ihringer19}, we can again recover the optimal bound of $\bigO(2^d).$ More details can be found in \Cref{sec:junta}.

 \paragraph{Organization of the paper.} We provide basic definitions and other preliminaries in~\Cref{sec:preliminaries}. We give the proof details of the distance lemma over the balanced slice (i.e., $k=n/2$) in~\Cref{sec:bal-slice} and then use this to get the same over {\em all} slices in~\Cref{sec:all-slices}, thus finishing the proof of our main theorem~\Cref{thm:main}. In~\Cref{sec:junta}, we present our alternate proof of the junta theorem over slices. Finally, we obtain an improvement of our main theorem for the case of linear functions (i.e., $d=1$) as~\Cref{thm:deg1} in~\Cref{sec:deg1}.

\section{Preliminaries}\label{sec:preliminaries}
\paragraph{Notations.}Let $(G, +)$ denote an Abelian group $G$ with addition as the binary operation. For any $g \in G$, let $-g$ denote the inverse of $g \in G$. For any $g \in G$ and integer $a \geq 0$, $a \cdot g$ (or simply $ag$) is the shorthand notation of $g + \ldots + g$ (taken $a$ times), and $-ag$ denotes $a\cdot (-g).$

For any $\mathbf{x} \in \Boo^{n}$, $|\mathbf{x}|$ denotes the Hamming weight of $\mathbf{x}$. For any $\mathbf{x}, \mathbf{y} \in \Boo^{n}$, let $\Delta(\mathbf{x},\mathbf{y})$ denote the Hamming distance between $\mathbf{x}$ and $\mathbf{y}$, i.e. $\Delta(\mathbf{x}, \mathbf{y}) = |\setcond{i \in [n]}{x_{i} \neq y_{i}}|$. For natural numbers $n$ and $k \leq n$, let $\Boo^{n}_{k}$ denote the subset of strings in $\Boo^{n}$ of Hamming weight exactly $k$.\newline
We denote the set of functions $f :\Boo^{n} \to G$ that can be expressed as a multilinear polynomial of degree $d$, with the coefficients being in $G$ by $\mathcal{P}_{d}(n,G)$. We also consider functions $f: \Boo^{n}_{k} \to G$. We denote the set of functions on $\Boo^{n}_{k}$ that can be expressed as a multilinear polynomial of degree $d$ with the coefficients in $G$ by $\mathcal{P}_{d}(n,k,G)$. We will simply write $\mathcal{P}_{d}(n,k)$ when $G$ is clear from the context.


For any natural numbers $n$ and $k \leq n$, $U_{n}$ denotes the uniform distribution on $\Boo^{n}$ and $U_{n,k}$ denotes the uniform distribution on $\Boo^{n}_{k}$. For a growing parameter $n$, $o_{n}(1)$ denotes a function that goes to $0$ as $n$ grows large.

\paragraph{Basic Tools.}We start with the standard ODLSZ lemma over the Boolean cube.\\

\begin{theorem}[ODLSZ lemma over $\{0,1\}^n$]
\label{thm:DLSZ}
Let $G$ be any Abelian group and let $P\in \mathcal{P}_d(n,G)$ be any non-zero polynomial. Then 
\[
\Pr_{\mathbf{x}\sim U_n} \left[P(\mathbf{x})\neq 0\right]\geq \frac{1}{2^d}.
\]
\end{theorem}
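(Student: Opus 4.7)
The plan is to prove \Cref{thm:DLSZ} by induction on the degree $d$, following the classical textbook proof of the ODLSZ lemma over the Boolean cube. The base case $d=0$ is immediate: a non-zero degree-$0$ polynomial is a non-zero constant in $G$, so it is non-zero at every point of $\{0,1\}^n$, giving probability $1 \geq 1/2^0$.

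For the inductive step, suppose the statement holds for all non-zero multilinear polynomials of degree at most $d-1$ over $G$, and let $P \in \mathcal{P}_d(n,G)$ be non-zero of degree exactly $d$. Since $\deg(P) = d \geq 1$, there is some variable, say $x_i$, that appears in a monomial of $P$ with non-zero coefficient. Using multilinearity, I would decompose
\begin{equation*}
    P(\mathbf{x}) \;=\; x_i \cdot Q(\mathbf{x}_{-i}) \;+\; R(\mathbf{x}_{-i}),
\end{equation*}
where $\mathbf{x}_{-i}$ denotes the remaining $n-1$ coordinates, $Q \in \mathcal{P}_{d-1}(n-1,G)$ is non-zero (by the choice of $x_i$), and $R \in \mathcal{P}_{d}(n-1, G)$.

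Applying the inductive hypothesis to $Q$, the set $T := \{\mathbf{x}_{-i} \in \{0,1\}^{n-1} : Q(\mathbf{x}_{-i}) \neq 0\}$ has size at least $2^{n-1}/2^{d-1} = 2^{n-d}$. The key observation is that for every $\mathbf{x}_{-i} \in T$, we have
\begin{equation*}
    P(1,\mathbf{x}_{-i}) - P(0,\mathbf{x}_{-i}) \;=\; Q(\mathbf{x}_{-i}) \;\neq\; 0 \quad \text{in } G,
\end{equation*}
so at least one of $P(0,\mathbf{x}_{-i})$ and $P(1,\mathbf{x}_{-i})$ is non-zero. Summing over $\mathbf{x}_{-i} \in T$ gives at least $|T| \geq 2^{n-d}$ points of $\{0,1\}^n$ where $P$ does not vanish, yielding the desired bound $1/2^d$.

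There is no real technical obstacle here; the only point worth flagging is that $G$ is only assumed to be an Abelian group, so we cannot divide out by $Q(\mathbf{x}_{-i})$ or otherwise invoke field structure. The argument above is robust to this, because it only uses the group-theoretic fact that $a - b \neq 0$ in $G$ forces at least one of $a, b$ to be non-zero, which is all that is needed to pair up $(0,\mathbf{x}_{-i})$ with $(1,\mathbf{x}_{-i})$ and charge one non-zero point to each $\mathbf{x}_{-i} \in T$.
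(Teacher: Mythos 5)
Your proof is correct, and it is the standard inductive argument for the ODLSZ lemma over the Boolean cube; the paper states \Cref{thm:DLSZ} as a known fact without giving a proof, so there is no paper proof to compare against. The one point worth smoothing in the write-up is the reduction to the case $\deg(P) = d$: if $\deg(P)=d'<d$, one should either invoke the already-established bound $1/2^{d'}\ge 1/2^d$ or phrase the induction as being on the pair $(n,d)$; this is cosmetic and does not affect correctness. Your observation that the argument only uses the group-theoretic fact that $a-b\neq 0$ forces $a\neq 0$ or $b\neq 0$ is exactly the right justification for why no field structure is needed, and is precisely what makes the lemma valid over an arbitrary Abelian group $G$.
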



Another important tool we require is Lucas's theorem which allows us to compute binomial coefficients modulo a prime $p$.\\

\begin{lemma}[Lucas's Theorem~\cite{Lucas}]\label{lemma:lucas}
 Let $p$ be a prime number and $a$ and $b$ be any two natural numbers. Denote $a$ and $b$ in their unique $p$-ary representations as:
\begin{align*}
    a = \sum_{i = 0}^{\ell - 1} a_{i} p^{i}, \quad \quad b = \sum_{i = 0}^{\ell - 1} b_{i} p^{i}, \quad \quad  a_{i}, b_{i} \in \{0,1,\ldots, p-1\}
\end{align*}
Then,
\begin{align*}
    \displaystyle\binom{a}{b} \; \equiv \; \prod_{i = 0}^{\ell - 1} \displaystyle\binom{a_{i}}{b_{i}} \; \mod{p},
\end{align*}
where we define $\binom{x}{y}$ to be $0$ if $x < y$.
\end{lemma}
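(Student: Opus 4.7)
The plan is to prove Lucas's theorem via a generating function identity in $\mathbb{F}_p[X]$. The first step would be to establish the \emph{Freshman's Dream}: for any prime $p$, we have $(1+X)^p \equiv 1 + X^p \pmod{p}$. This is immediate from the binomial theorem together with the divisibility $p \mid \binom{p}{k}$ for $1 \leq k \leq p-1$, which in turn follows from $\binom{p}{k} = p!/(k!(p-k)!)$ and the primality of $p$. Iterating this identity $i$ times yields $(1+X)^{p^i} \equiv 1 + X^{p^i} \pmod{p}$ for every $i \geq 0$.

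Next, I would use the $p$-ary expansion $a = \sum_{i=0}^{\ell-1} a_i p^i$ to factor
\[
(1+X)^a = \prod_{i=0}^{\ell-1} (1+X)^{a_i p^i} \equiv \prod_{i=0}^{\ell-1} \left(1 + X^{p^i}\right)^{a_i} \pmod{p}.
\]
Expanding each factor on the right using the ordinary binomial theorem produces
\[
\prod_{i=0}^{\ell-1} \sum_{j_i=0}^{a_i} \binom{a_i}{j_i} X^{j_i p^i} = \sum_{(j_0,\ldots,j_{\ell-1})} \left(\prod_{i=0}^{\ell-1} \binom{a_i}{j_i}\right) X^{\sum_i j_i p^i},
\]
where each $j_i$ ranges over $\{0,1,\ldots,a_i\}$, and in particular $j_i \in \{0,1,\ldots,p-1\}$.

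Finally, I would compare the coefficient of $X^b$ on both sides. The left-hand side contributes $\binom{a}{b}$ by the definition of the binomial coefficient. On the right, since each $j_i$ satisfies $0 \leq j_i \leq p-1$, the equation $b = \sum_i j_i p^i$ forces $j_i = b_i$ for all $i$ by the uniqueness of the base-$p$ representation of $b$. Hence the coefficient of $X^b$ on the right is precisely $\prod_{i=0}^{\ell-1} \binom{a_i}{b_i}$, which establishes Lucas's theorem. The case where some $b_i > a_i$ is handled cleanly by the convention $\binom{a_i}{b_i} = 0$: indeed, in that regime no admissible tuple $(j_0,\ldots,j_{\ell-1})$ with $j_i \leq a_i$ can achieve $j_i = b_i$, so the coefficient on the right vanishes, consistent with the convention. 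I do not anticipate any real obstacle, as the algebraic content is fully concentrated in the Freshman's Dream; the remainder is careful bookkeeping with base-$p$ digits.
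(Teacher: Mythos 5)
Your proof is correct and is the classical generating-function proof of Lucas's theorem; the paper itself does not prove this lemma but simply cites it as a known result from the literature, so there is no internal argument to compare against. Your derivation is sound: the Freshman's Dream $(1+X)^p \equiv 1+X^p \pmod p$, iterated to $(1+X)^{p^i} \equiv 1+X^{p^i}$, combined with the factorization over $p$-ary digits and the uniqueness of the base-$p$ representation when each $j_i$ is constrained to $\{0,\ldots,a_i\} \subseteq \{0,\ldots,p-1\}$, correctly extracts the coefficient of $X^b$ and handles the degenerate case $b_i > a_i$ via the convention $\binom{a_i}{b_i}=0$.
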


\noindent

We will need the following standard facts about expanders and Cayley graphs. We refer the reader to the survey~\cite{HLW} for more details.\\

\begin{definition}[Weighted Cayley Graph]
    Let $(G,+)$ be a finite Abelian group and $w:G \to \R^{\ge 0}$ be a {\em weight} function (we refer to the elements of non-zero weight as {\em generators}). We say that a weighted graph $\Gamma = \Gamma(G,w)$ defined as follows is a {\em weighted Cayley graph} over $G$.
    \begin{itemize}
        \item The vertices of $\Gamma$ are the elements of $G$.
        \item For every $g,g'\in G$, we add an edge $(g,g+g')$ with weight $w(g')$ to $\Gamma$.
    \end{itemize}
\end{definition}

The following lemma gives us a way of computing the eigenvalues of the adjacency matrix of weighted Cayley graphs over Abelian groups. \\

\begin{lemma}[Eigenvalues of Cayley graphs, see e.g.~\cite{HLW}]\label{lem:cayley-evals}
    Let $\Gamma = \Gamma(G,w)$ be a weighted Cayley graph over a finite Abelian group $G$, where $w:G \to \mathbb{R}^{\ge 0}$ is the corresponding weight function. Let $\chi:G \to \mathbb{C}^\times$ be an arbitrary group homomorphism (which we will refer to as a {\em character}). Then, $\chi$ is an eigenvector of the adjacency matrix of $\Gamma$ with eigenvalue equal to $\sum_{g\in G}w(g) \chi(G)$. 
\end{lemma}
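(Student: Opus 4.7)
The plan is to carry out a direct calculation using the definition of the weighted Cayley graph together with the multiplicativity of $\chi$. Let $A$ denote the adjacency matrix of $\Gamma$, indexed by elements of $G$. By the definition of $\Gamma(G,w)$, for every pair $h,h' \in G$ the entry $A[h,h']$ equals $w(h'-h)$, since the edge from $h$ to $h+g$ carries weight $w(g)$ and we set $g = h'-h$. Viewing $\chi$ as a column vector $(\chi(h))_{h \in G} \in \mathbb{C}^{|G|}$, I then want to compute the $h$-th coordinate of $A\chi$ and verify that it equals $\lambda \cdot \chi(h)$ for a single scalar $\lambda$ independent of $h$.

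Concretely, I would expand
\[
(A\chi)[h] \;=\; \sum_{h' \in G} A[h,h']\,\chi(h') \;=\; \sum_{g \in G} w(g)\,\chi(h+g),
\]
using the substitution $h' = h+g$. Because $\chi$ is a group homomorphism from $(G,+)$ to $(\mathbb{C}^\times,\cdot)$, we have $\chi(h+g) = \chi(h)\chi(g)$, so the sum factors as
\[
(A\chi)[h] \;=\; \chi(h) \cdot \sum_{g \in G} w(g)\,\chi(g).
\]
Setting $\lambda := \sum_{g \in G} w(g)\chi(g)$ then gives $A\chi = \lambda \chi$, which is the desired conclusion (and matches the statement once the stray $\chi(G)$ in the displayed formula is read as $\chi(g)$). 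Since this eigenvalue identity holds for every character $\chi$ of $G$, and the characters of a finite Abelian group form an orthogonal basis of $\mathbb{C}^{|G|}$, as a byproduct one obtains the full spectral decomposition of $A$, though that stronger statement is not needed here.

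There is essentially no hard step: the only thing to be careful about is the indexing convention for the adjacency matrix (whether edges are labeled by $g$ or by $-g$), and the use of the homomorphism property at exactly the right moment. In particular, no assumption on $w$ beyond being a function $G \to \mathbb{R}^{\ge 0}$ is used; the same computation would go through for complex-valued weights and would produce complex eigenvalues accordingly.
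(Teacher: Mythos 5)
Your proof is correct and is the standard direct computation; the paper does not include its own proof of this lemma (it is cited as a known fact from~\cite{HLW}), so there is nothing to contrast against. You also correctly flag the typo in the statement, where $\chi(G)$ should read $\chi(g)$.
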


\noindent
Following is a consequence of the \emph{expander mixing lemma}.\\

\begin{lemma}[Expander mixing lemma, see e.g.~\cite{HLW}~Lemma 2.5]\label{fact:eml}
    For a symmetric random walk matrix $W$ over vertices $V$ and every subset $S\subseteq V$, it holds that
    $$ \Pr_{u\sim V, v\sim N(u)}[u\in S\text{~and~}v\in S]\le \bigg(\frac{|S|}{|V|}\bigg)^2 + \mu(W)\cdot \frac{|S|}{|V|},$$ where $N(u)$ denotes the distribution over $V$ corresponding to taking a step from $u$ according to $W$ (i.e., the $u$-th row of $W$).
\end{lemma}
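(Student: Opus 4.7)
The plan is to write the probability as a quadratic form and then diagonalize it against the spectral structure of $W$. Concretely, let $\mathbf{1}_S \in \{0,1\}^V$ be the indicator vector of $S$. Since $u$ is uniform on $V$ and for each fixed $u$ the row $W(u,\cdot)$ is a probability distribution, we have
\[
\Pr_{u\sim V,\, v\sim N(u)}[u\in S\text{ and }v\in S] \;=\; \frac{1}{|V|}\sum_{u,v \in S} W(u,v) \;=\; \frac{1}{|V|}\,\mathbf{1}_S^T W\, \mathbf{1}_S.
\]
So the task reduces to upper bounding the quadratic form $\mathbf{1}_S^T W\, \mathbf{1}_S$.

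Next I would peel off the contribution of the top eigenvector. Because $W$ is a symmetric random walk matrix, the all-ones vector $\mathbf{1}$ satisfies $W\mathbf{1} = \mathbf{1}$, and $\mu(W)$ is the largest absolute value of any eigenvalue on the subspace $\mathbf{1}^{\perp}$. Write $\mathbf{1}_S = \alpha \mathbf{1} + \mathbf{w}$ with $\alpha = \langle \mathbf{1}_S, \mathbf{1}\rangle/\|\mathbf{1}\|^2 = |S|/|V|$ and $\mathbf{w}\perp \mathbf{1}$. Using $W\mathbf{1}=\mathbf{1}$ and the symmetry of $W$ to kill the two cross terms, this gives
\[
\mathbf{1}_S^T W\, \mathbf{1}_S \;=\; \alpha^2\, \mathbf{1}^T \mathbf{1} \;+\; \mathbf{w}^T W \mathbf{w} \;=\; \alpha^2\, |V| \;+\; \mathbf{w}^T W \mathbf{w}.
\]

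Finally I would bound the residual quadratic form. Since $W$ restricted to $\mathbf{1}^{\perp}$ has spectral norm $\mu(W)$, we get $|\mathbf{w}^T W \mathbf{w}| \le \mu(W)\,\|\mathbf{w}\|^2$, and by Pythagoras $\|\mathbf{w}\|^2 = \|\mathbf{1}_S\|^2 - \alpha^2\|\mathbf{1}\|^2 = |S| - |S|^2/|V| \le |S|$. Plugging these into the previous display and dividing by $|V|$ yields the desired inequality
\[
\Pr_{u\sim V,\, v\sim N(u)}[u\in S\text{ and }v\in S] \;\le\; \left(\frac{|S|}{|V|}\right)^{\!2} + \mu(W)\cdot \frac{|S|}{|V|}.
\]
There is essentially no real obstacle here: the lemma is a textbook fact, and the only point worth flagging is that $W$ may have negative eigenvalues, which is why we absorb the possible sign into $\mu(W)$ being defined as the second largest absolute eigenvalue rather than just the second largest one.
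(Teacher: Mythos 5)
Your proof is correct and is the standard textbook derivation (spectral decomposition of $\mathbf{1}_S$ against the top eigenvector, plus a Rayleigh-quotient bound on the residual). The paper itself does not prove this lemma — it is stated as a black box with a citation to the Hoory–Linial–Wigderson survey — so there is nothing in the paper to compare against; your argument matches the proof one would find in that reference, and your closing remark about absorbing negative eigenvalues into $\mu(W)=\max(|\mu_2|,|\mu_{n'}|)$ is exactly the right point to flag.
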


\section{Distance Lemma for the Balanced Slice}\label{sec:bal-slice}
In this section, we state the main technical lemma of our proof for \Cref{thm:main}. It is a statement on the ``expansion'' property of a graph $\Boo^{n}_{n/2}$, where the edge weights are given by a random process. We start by describing a random process that maps a string in $\Boo^{n}_{n/2}$ to a string in $\Boo^{n}_{n/2}$. In this section, we will always assume that $n$ is an even number.\\

\begin{definition}[The map $\Gamma$]
Let $\mathbf{a} \in \Boo^{n/2}$ and $\mathbf{u} \in \Boo^{n}_{n/2}$. Let $\mathbf{u}^{-1}\set{0}$ denote the set of coordinates where $\mathbf{u}$ is $0$, i.e. $\mathbf{u}^{-1}\set{0} \, = \, \setcond{i \in [n]}{u_{i} = 0}$. Similarly we have $\mathbf{u}^{-1}\set{1}$. let $\mathbf{u}^{-1}\set{1} = \set{i_{1},\ldots,i_{n/2}}$.\newline
For any perfect matching $\mathcal{M}$ between $\mathbf{u}^{-1}\set{0}$ and $\mathbf{u}^{-1}\set{1}$ ($\mathcal{M}$ is a bijection between these two sets), the function $\Gamma(\mathbf{u}, (\mathcal{M}, \mathbf{a}))$ is a balanced string $\mathbf{v} \in \Boo^{n}_{n/2}$ defined as follows:\newline
For every $k \in [n/2]$, $v_{i_{k}} = u_{i_{k}} \oplus a_{k}$ and $v_{\mathcal{M}(i_{k})} = u_{\mathcal{M}(i_{k})} \oplus a_{k}$.
\end{definition}

\noindent
In simple words, for every matching between the $0$-coordinates and $1$-coordinates of $\mathbf{u}$ and a string $\mathbf{a} \in \Boo^{n/2}$, we get a new balanced string $\mathbf{v}$ by flipping the endpoints of a subset of matching edges. Here the subset of matching edges whose endpoints are flipped is given by the string $\mathbf{a}$. Following is an example for $n = 8$.

\paragraph{Example.}Let $\mathbf{u} = 10101010$. Here $\mathbf{u}^{-1}\set{0} = \set{2,4,6,8}$ and $\mathbf{u}^{-1}\set{1} = \set{1,3,5,7}$. Let $\mathcal{M} = ((2,3),(6,1),(4,5),(8,7))$ and $\mathbf{a} = 0110$. Then $\Gamma(\mathbf{u}, (\mathcal{M}, \mathbf{a})) = \mathbf{v} = 00110110$ (endpoints of the $2^{nd}$ matching edge $(6,1)$ and the $3^{rd}$ matching edge $(4,5)$ are flipped).

\paragraph{}Next, we define a weighted graph on all the balanced strings with weights representing the probability of going from one balanced string to another for a random matching $\mathcal{M}$ and a random string $\mathbf{a}$ (using the map $\Gamma$).\newline
Let $n' = |\binom{n}{n/2}|$ denote the cardinality of the set of balanced strings $\Boo^{n}_{n/2}$. Let $G$ denote a weighted complete graph on $n'$ vertices, where the vertices denote strings in $\Boo^{n}_{n/2}$. For any two distinct balanced strings $\mathbf{u}, \mathbf{v} \in \Boo^{n}_{n/2}$, the weight of the edge $(\mathbf{u}, \mathbf{v})$, denoted by $w(\mathbf{u}, \mathbf{v})$ is:
\begin{align*}
    w(\mathbf{u}, \mathbf{v}) \; := \; \Pr_{\mathcal{M}, \mathbf{a}}[\Gamma(\mathbf{u}, (\mathcal{M}, \mathbf{a})) = \mathbf{v}],
\end{align*}
where the probability is over the choice of a random perfect labeled matching $\mathcal{M}$ between $\mathbf{u}^{-1}\set{0}$ and $\mathbf{u}^{-1}\set{1}$, and a uniformly random string $\mathbf{a} \in \Boo^{n/2}$. For every balanced string $\mathbf{u} \in \Boo^{n}_{n/2}$, we will denote by $W(\mathbf{u})$ the distribution on $\Boo^{n}_{n/2}$ where the probability of sampling $\mathbf{v}$ is equal to $w(\mathbf{u}, \mathbf{v})$. Let $W \in \mathbb{R}^{n' \times n'}$ denote the weighted adjacency matrix of $G$, i.e. 
\begin{align*}
    W[\mathbf{u}, \mathbf{v}] \, = \, w(\mathbf{u}, \mathbf{v}), \quad \quad \text{ for all } \mathbf{u}, \mathbf{v} \in \Boo^{n}_{n/2}
\end{align*}

\noindent
We are now ready to state the main technical lemma of our proof. It roughly says that if we sample a random vertex (which is a random balanced string) and its neighbour in the above-mentioned graph, then the two balanced strings behave ``almost like pairwise-independent'' points. In other words, the above-mentioned graph is a good sampler for the balanced slice $\Boo^{n}_{n/2}$.\\

\begin{thmbox}
\begin{restatable}[Main Lemma]{lemma}{mainlemma}\label{lemma:main-informal}
There exists a constant $\varepsilon > 0$ for which the following holds. Let $G$ and $W$ be as mentioned above and let $S \subseteq \Boo^{n}_{n/2}$ be an arbitrary subset of vertices with $|S| \geq 4^{-d} \cdot \binom{n}{n/2}$. Let $\rho$ denote the density of the set $S$. Then,
\begin{align*}
    \Pr_{\substack{\mathbf{x} \sim U_{n,n/2} \\ \mathbf{y} \sim W(\mathbf{x})}}[\mathbf{x} \in S \; \text{ and } \; \mathbf{y} \in S] \; \leq \; \rho^{2} \cdot \paren{1 + \dfrac{1}{n^{\varepsilon}}}.
\end{align*}
\end{restatable}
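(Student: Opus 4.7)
The plan is to reduce the desired pairwise-independence bound to a spectral statement about the random-walk matrix $W$ via the expander mixing lemma (\Cref{fact:eml}):
\[
\Pr_{\mathbf{x} \sim U_{n,n/2},\,\mathbf{y}\sim W(\mathbf{x})}[\mathbf{x} \in S,\ \mathbf{y} \in S] \;\le\; \rho^2 \;+\; \mu(W)\cdot \rho.
\]
Since $\rho\ge 4^{-d}$, the target inequality $\rho^2(1+n^{-\varepsilon})$ reduces to showing $\mu(W) \le \rho/n^\varepsilon$. I would split the argument into a \emph{small-$d$} regime ($d\le C\log n$), where this becomes a worst-case eigenvalue bound $\mu(W)\le n^{-\Omega(1)}$ for $W$, and a \emph{large-$d$} regime ($d$ up to $n^\gamma$), where a purely spectral bound is too weak and must be replaced by hypercontractivity.

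In the small-$d$ regime I would bound $\mu(W)$ by embedding $\Gamma$ as an induced subgraph of a weighted Cayley graph $\Gamma'$ on the even-weight subgroup $H = \{\mathbf{z}\in\F_2^n : |\mathbf{z}|\text{ is even}\}$. Writing the transition kernel of $\Gamma$ in terms of the XOR $\mathbf{z} = \mathbf{u}\oplus\mathbf{v}$ (which necessarily lies in $H$), one should obtain a $\mathbf{u}$-independent distribution $w'$ on $H$ that, after a small normalization, dominates the $\mathbf{u}$-th row of $W$ on the balanced slice. By \Cref{lem:cayley-evals}, the eigenvalues of $\Gamma'$ are $\E_{\mathbf{z}\sim w'}[\chi_T(\mathbf{z})]$ as $T$ ranges over subsets of $[n]$, and this reduces to a combinatorial quantity depending only on how the underlying random matching meets $T$. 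A moment calculation for uniform random matchings, handling the mild dependence between matching edges via negative correlation, should yield $|\E_{\mathbf{z}}[\chi_T(\mathbf{z})]|\le n^{-\Omega(|T|)}$ for $T\notin\{\emptyset,[n]\}$. Cauchy's interlacing theorem then transfers a bound $\mu(\Gamma')\le n^{-\Omega(1)}$ to $\mu(W)$, which is enough for the regime in question.

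For the large-$d$ regime the worst-case eigenvalue bound above is insufficient, since $\rho$ can be as small as $4^{-n^\gamma}$. Here I would instead spectrally dominate $W$ by a noise operator $T_\eta$ lying in the Bose--Mesner algebra of the Johnson scheme on $\Boo^n_{n/2}$. Every operator in this algebra is diagonalized by the canonical harmonic decomposition of $L^2(\Boo^n_{n/2})$, so spectral domination reduces to a level-by-level eigenvalue comparison for a suitably tuned $\eta = \eta(n,d)$. The hypercontractive/log-Sobolev inequality for the Johnson scheme (\Cref{lemma:hypercontractive-log-Sobolev}) then gives $\|T_\eta f\|_2\le \|f\|_{1+\eta^2}$, and applying it to $f = \mathbf{1}_S$ together with Cauchy--Schwarz yields
\[
\langle \mathbf{1}_S,\, W\mathbf{1}_S\rangle \;\le\; \langle \mathbf{1}_S,\, T_\eta \mathbf{1}_S\rangle \;\le\; \rho^{\,1+\Omega(1)},
\]
which is comfortably below $\rho^2 / n^\varepsilon$ for a sufficiently small absolute constant $\varepsilon>0$.

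The main technical obstacle I anticipate is the spectral-domination step for the Johnson scheme: writing $W$ explicitly in the harmonic basis and verifying the per-level inequalities against $T_\eta$ for the right choice of $\eta$ requires a careful Bose--Mesner computation, since $W$ is defined operationally (via a random matching plus random flips) rather than in the scheme's natural basis. A secondary subtlety is the Fourier calculation in the Cayley-graph step, where the indicators ``matching edge $e$ lies inside $T$'' are not independent across $e$; the $n^{-\Omega(|T|)}$ decay must then be extracted via explicit moment bounds for uniform random perfect matchings, rather than by a naive product argument.
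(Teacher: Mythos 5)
Your proposal matches the paper's two-regime strategy closely (Cayley-graph embedding plus interlacing plus expander mixing for $d\le C\log n$; Johnson-scheme harmonic decomposition plus hypercontractivity for $d\le n^{\gamma}$), and you correctly identify the central technical hurdle: computing the spectrum of the operationally-defined $W$ in the Bose--Mesner algebra. There are, however, two concrete issues.

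\textbf{Global spectral domination does not hold.} You propose to show $\langle \mathds{1}_S, W\mathds{1}_S\rangle \le \langle \mathds{1}_S, T_\eta \mathds{1}_S\rangle$ by a level-by-level eigenvalue comparison. This cannot work for all levels: the eigenvalues of $T_\rho$ are $\rho^{t(1-(t-1)/n)}$, which at level $t\approx n/2$ decay like $\rho^{\Theta(n)}$, whereas the proof's achievable bound on $\lambda_t$ for high $t$ is only $2^{-n^{\Omega(1)}}$ (see the paper's \Cref{lemma:eigenvalue-bound}). For $\rho = n^{-\Theta(1)}$, the $T_\rho$ eigenvalue at high levels is \emph{smaller} than the bound on $\lambda_t$, so $W\preceq T_\rho$ cannot be deduced level-by-level, and indeed need not be true. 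The paper instead splits $\sum_t \lambda_t\|\mathds{1}_S^{=t}\|_2^2$ into levels $t\le \tau$ (where $\lambda_t\le \rho^{2t(1-(t-1)/n)}$, captured by $\|T_\rho\mathds{1}_S\|_2^2$) and levels $t>\tau$ (bounded directly by $n\cdot 2^{-n^{\Omega(1)}}$ and absorbed as an additive error). Your argument needs the same split.

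\textbf{Final arithmetic is off.} The target is an upper bound of $\rho^2(1+n^{-\varepsilon})$, not a bound ``comfortably below $\rho^2/n^\varepsilon$'' (the latter is smaller than $\rho^2$ and would contradict the lower bound of \Cref{lem:lower-bd}). Hypercontractivity gives $\|\mathds{1}_S\|_p^2 = \rho^{2/p} = \rho^{2(1-o(1))}$; the exponent is $2 - \bigO(d/n^\kappa)$, and using $\rho \ge 4^{-d}$ this equals $\rho^2\cdot 4^{\bigO(d^2/n^\kappa)} \le \rho^2(1+n^{-\varepsilon})$ only because $d\le n^\gamma$ for small $\gamma$. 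Writing the result as $\rho^{1+\Omega(1)}$ is misleading: that is generically \emph{larger} than $\rho^2$ and would not close the argument. Two smaller notes: the Fourier decay $n^{-\Omega(|T|)}$ for the Cayley-graph characters is stronger than what is needed or established (the paper proves a uniform $\bigO(\log n / n)$ bound for $2\le |T|\le n-2$, which suffices after interlacing); and the eigenvalue bound you defer as ``a careful Bose--Mesner computation'' is carried out in the paper by picking the explicit eigenvector $f_t(\mathbf{x})=(x_1-x_2)\cdots(x_{2t-1}-x_{2t})$ of \Cref{lemma:bose-mesner-eigenspaces} and conditioning on structural properties of the random matching --- an approach you may find more tractable than a direct algebraic computation.
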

\end{thmbox}

\noindent
We will give two proofs for \Cref{lemma:main-informal}, for two regimes of the degree $d$:
\begin{enumerate}
    \item For degree $d \leq C\log n$ for some absolute constant $C > 0$, we give a simple argument using the spectral expansion properties of Cayley graphs and the expander mixing lemma. We prove this in \Cref{subsec:simple-proof-cayley}.
    \item For degree $d \leq n^{\gamma}$ for some absolute constant $\gamma > 0$, we rely on the spectrum of Johnson association schemes and use hypercontractivity for slice functions. We prove this in \Cref{subsec:proof-johnson-scheme}.
\end{enumerate}

\noindent
We will also need a lower bound on the probability in \Cref{lemma:main-informal}. This will hold for all degree $d$. Combining the upper and lower bounds (i.e.,~\Cref{lemma:main-informal} and~\Cref{lem:lower-bd} gives the final bound: see~\Cref{sec:everything}.)\\

\begin{lemma}[The lower bound]\label{lem:lower-bd}
Let $G$ be the graph as mentioned above and fix a degree parameter $d \in \mathbb{N}$. Let $P(\mathbf{x}):\Boo^{n}_{n/2} \to \mathbb{R}$ be a non-zero polynomial on the balanced slice $\Boo^{n}_{n/2}$ with $\deg(P) \leq d$. If $S \subseteq \Boo^{n}_{n/2}$ denote the set of non-zeroes of $P(\mathbf{x})$, then,
\begin{align*}
     \Pr_{\substack{\mathbf{x} \sim U_{n,n/2} \\ \mathbf{y} \sim W(\mathbf{x})}}[\mathbf{x} \in S \; \text{ and } \; \mathbf{y} \in S] \; \geq \; \dfrac{|S|}{\binom{n}{n/2}} \cdot \dfrac{1}{2^{d}}.
\end{align*}
\end{lemma}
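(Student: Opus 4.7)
The plan is to reinterpret the random process defining $W$ so that a uniformly random pair $(\mathbf{x},\mathbf{y}) \sim (U_{n,n/2}, W(\mathbf{x}))$ is equivalent to first choosing a uniformly random \emph{subcube} of the slice and then sampling $\mathbf{x}$ and $\mathbf{y}$ independently and uniformly within it. Specifically, call an (unlabeled) perfect matching $\mathcal{C}$ of $[n]$ into $n/2$ pairs a \emph{cube embedding}; its \emph{subcube} $C(\mathcal{C}) \subseteq \Boo^{n}_{n/2}$ consists of the $2^{n/2}$ balanced strings $\mathbf{z}$ with $z_i \neq z_j$ for every pair $\{i,j\} \in \mathcal{C}$. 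A short counting argument (both sides have cardinality $n!/(n/2)!$ and each pair maps bijectively) shows that the joint distribution of $(\mathbf{x},\mathcal{M})$ produced by first sampling $\mathbf{x} \sim U_{n,n/2}$ and then a uniform labeled bijection $\mathcal{M}$ between $\mathbf{x}^{-1}\{0\}$ and $\mathbf{x}^{-1}\{1\}$ is identical to the distribution obtained by first sampling a uniform cube embedding $\mathcal{C}$ and then $\mathbf{x}$ uniformly from $C(\mathcal{C})$. Moreover, once $\mathbf{x}$ and $\mathcal{M}$ are fixed, as $\mathbf{a} \in \Boo^{n/2}$ varies uniformly the output $\mathbf{y} = \Gamma(\mathbf{x},(\mathcal{M},\mathbf{a}))$ ranges uniformly over all $2^{n/2}$ points of $C(\mathcal{C})$. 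Hence, in the reformulation, $\mathbf{x}$ and $\mathbf{y}$ are conditionally independent and uniform in $C(\mathcal{C})$.

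Given this reformulation, each cube embedding $\mathcal{C}$ naturally parametrizes $C(\mathcal{C})$ by $\Boo^{n/2}$ (the coordinate $a_k$ toggles the $k$-th pair of $\mathcal{C}$), and the polynomial $P$ restricts to a multilinear polynomial $Q_{\mathcal{C}} \in \mathcal{P}_d(n/2, G)$ on this cube. Let $p_{\mathcal{C}} := |S \cap C(\mathcal{C})|/2^{n/2}$ denote the fractional density of non-zeroes of $P$ inside the subcube. The standard ODLSZ lemma over the Boolean cube (\Cref{thm:DLSZ}) applied to $Q_{\mathcal{C}}$ yields a dichotomy: either $Q_{\mathcal{C}} \equiv 0$, in which case $p_{\mathcal{C}} = 0$, or $Q_{\mathcal{C}}$ is non-zero and $p_{\mathcal{C}} \ge 2^{-d}$. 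In both cases we obtain the pointwise inequality
\[
    p_{\mathcal{C}}^2 \;\ge\; \frac{p_{\mathcal{C}}}{2^d}.
\]

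Taking expectation over $\mathcal{C}$, the conditional independence of $\mathbf{x},\mathbf{y}$ given $\mathcal{C}$ gives
\[
    \Pr_{\mathbf{x} \sim U_{n,n/2},\, \mathbf{y}\sim W(\mathbf{x})}[\mathbf{x} \in S \text{ and } \mathbf{y} \in S] \;=\; \E_{\mathcal{C}}\!\left[p_{\mathcal{C}}^2\right] \;\ge\; \frac{1}{2^d}\,\E_{\mathcal{C}}[p_{\mathcal{C}}] \;=\; \frac{1}{2^d}\cdot\frac{|S|}{\binom{n}{n/2}},
\]
where the last equality uses that the marginal of $\mathbf{x}$ is uniform on $\Boo^{n}_{n/2}$ (as verified by the counting above). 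This is exactly the claimed bound. The only non-routine step is the equivalence established in the first paragraph; after that, the argument is immediate from ODLSZ on the cube. I do not anticipate a serious obstacle, since the combinatorial identity underlying the reformulation is clean, and once we condition on a subcube the polynomial simply becomes a degree-$d$ multilinear polynomial on a Boolean cube.
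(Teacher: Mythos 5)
Your proof is correct and follows essentially the same approach as the paper: both arguments restrict $P$ to the random $(n/2)$-dimensional subcube parametrized by the matching and invoke the standard ODLSZ lemma (\Cref{thm:DLSZ}) there. The paper conditions on a fixed $\mathbf{u}\in S$ and a matching $\mathcal{M}$ (so the restricted polynomial is automatically non-zero at the origin), whereas you condition on the unlabeled matching $\mathcal{C}$ and handle the ``$Q_{\mathcal{C}}\equiv 0$'' case via a trivial dichotomy; this is a reorganization of the same idea rather than a different route.
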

 \begin{proof}[Proof of \Cref{lem:lower-bd}]
Note that it is sufficient to show that
\begin{align*}
    \Pr_{\mathbf{y} \sim W(\mathbf{x})}[\mathbf{y} \in S \, | \, \mathbf{x} \in S] \; \geq \; \dfrac{1}{2^{d}}, \quad \quad \text{ for all } \; \mathbf{x} \in S.
\end{align*}
Fix an arbitrary point $\mathbf{u} \in S$ and fix an arbitrary matching $\mathcal{M}$ between $\mathbf{u}^{-1}\set{0}$ and $\mathbf{u}^{-1}\set{1}$. We will show that for $1/2^{d}$-fraction of $\mathbf{a} \in \Boo^{n/2}$, the string $\Gamma(\mathbf{u}, (\mathcal{M}, \mathbf{a})) \in S$.\newline

\noindent
Define the polynomial $Q(z_{1},\ldots,z_{n/2}) := P(\Gamma(\mathbf{u}, (\mathcal{M}, \mathbf{z})))$. Note that $\deg(Q) \leq \deg(P) \leq d$ and $Q(\mathbf{0}) = P(\mathbf{u}) \neq 0$. Now using the standard ODLSZ lemma (\Cref{thm:DLSZ}) on $Q(\mathbf{z})$, we get,
\begin{align*}
    \Pr_{{\bf z} \sim \Boo^{n/2}}[Q({\bf z}) \ne 0] \; \ge \; \dfrac{1}{2^{d}} \quad \Rightarrow \quad \Pr_{\mathbf{a} \sim \Boo^{n/2}}[\Gamma(\mathbf{u}, (\mathcal{M}, \mathbf{a})) \in S] \; \geq \; \dfrac{1}{2^{d}}
\end{align*}
Since the above lower bound holds for every matching $\mathcal{M}$ between $\mathbf{u}^{-1}\set{1}$ and $\mathbf{u}^{-1}\set{0}$, we have,
\begin{align*}
    \Pr_{\mathcal{M},\mathbf{a}}[\Gamma(\mathbf{u}, (\mathcal{M}, \mathbf{a})) \in S] \; \geq \; \dfrac{1}{2^{d}}
\end{align*}
Since the above lower bound holds for arbitrary choice of $\mathbf{u} \in S$, this completes the proof of \Cref{lem:lower-bd}.

\end{proof}

\paragraph{}Next, we observe that the random process mentioned above is ``$S_{n}$-invariant''\footnote{$S_{n}$ is the group of permutations on $n$ elements.}, i.e. the probabilities do not change even if we simultaneously permute the coordinates of $\mathbf{u}$ and $\mathbf{v}$ (using the same permutation for both of them).\\

\begin{observation}\label{obs:undirected-matrix}
For any $\mathbf{u}, \mathbf{v} \in \Boo^{n}_{n/2}$, the weight $w(\mathbf{u}, \mathbf{v})$ depends only on\footnote{Recall that $\Delta(\cdot, \cdot)$ represents the Hamming distance.} $\Delta(\mathbf{u}, \mathbf{v})$. We have,
\begin{align*}
    w({\bf u},{\bf v}) \; = \; \dfrac{\Delta!(n/2-\Delta)!}{2^{n/2} \cdot (n/2)!} \; = \; \dfrac{1}{2^{n/2} \cdot {n/2\choose \Delta}}, \quad \quad \text{ where } \, 2\Delta=\Delta({\bf u},{\bf v}) \in [0,n]
\end{align*}
To see the above probability, observe that the $\frac{1}{2^{n/2}}$ factor corresponds to sampling the right $\mathbf{a}$ and the $\frac{\Delta!(n/2-\Delta)!}{(n/2)!}$ factor corresponds to picking a matching $\mathcal{M}$ that results in the output ${\bf v}$).
\end{observation}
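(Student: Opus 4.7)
The plan is to directly count the pairs $(\mathcal{M},\mathbf{a})$ for which $\Gamma(\mathbf{u},(\mathcal{M},\mathbf{a}))=\mathbf{v}$ and divide by the total number of pairs $(n/2)!\cdot 2^{n/2}$ over which the random process ranges. First I would introduce the two ``disagreement'' sets
\[
A \;:=\; \mathbf{u}^{-1}\{0\}\cap \mathbf{v}^{-1}\{1\}, \qquad B \;:=\; \mathbf{u}^{-1}\{1\}\cap \mathbf{v}^{-1}\{0\}.
\]
Since both $\mathbf{u}$ and $\mathbf{v}$ have Hamming weight $n/2$, the sets $A$ and $B$ have the same cardinality, which must equal $\Delta := \Delta(\mathbf{u},\mathbf{v})/2$.

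Next I would characterize the ``good'' pairs $(\mathcal{M},\mathbf{a})$ combinatorially. Since $\Gamma$ flips both endpoints of the $k$-th matching edge precisely when $a_k=1$, a short case analysis on the four possible types of endpoints (according to membership in $A$, $B$, $\mathbf{u}^{-1}\{0\}\setminus A$, or $\mathbf{u}^{-1}\{1\}\setminus B$) shows that $\Gamma(\mathbf{u},(\mathcal{M},\mathbf{a}))=\mathbf{v}$ if and only if every matching edge of $\mathcal{M}$ is either (i) an edge between $B$ and $A$, whose corresponding bit of $\mathbf{a}$ must be $1$, or (ii) an edge between $\mathbf{u}^{-1}\{1\}\setminus B$ and $\mathbf{u}^{-1}\{0\}\setminus A$, whose corresponding bit of $\mathbf{a}$ must be $0$. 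In particular, once a compatible $\mathcal{M}$ is chosen, the string $\mathbf{a}$ is completely forced.

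The final step is the count. A compatible matching $\mathcal{M}$ is specified by an arbitrary bijection $B \to A$ together with an arbitrary bijection $\mathbf{u}^{-1}\{1\}\setminus B \to \mathbf{u}^{-1}\{0\}\setminus A$, giving $\Delta!\cdot (n/2-\Delta)!$ choices in total; for each such $\mathcal{M}$ exactly one $\mathbf{a}$ is good. Dividing by $(n/2)!\cdot 2^{n/2}$ yields
\[
w(\mathbf{u},\mathbf{v}) \;=\; \frac{\Delta!\,(n/2-\Delta)!}{2^{n/2}\cdot (n/2)!} \;=\; \frac{1}{2^{n/2}\binom{n/2}{\Delta}},
\]
which depends only on $\Delta(\mathbf{u},\mathbf{v})$, as claimed.

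I do not foresee any real obstacle here; the only subtlety is remembering that the edges of $\mathcal{M}$ are \emph{labeled} (by the ordering $\mathbf{u}^{-1}\{1\}=\{i_1,\dots,i_{n/2}\}$), which is what lets us pair up the bit $a_k$ with a specific edge and so forces $\mathbf{a}$ uniquely from a compatible $\mathcal{M}$ in the count above.
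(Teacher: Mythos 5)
Your proof is correct and follows essentially the same counting argument the paper sketches in its parenthetical remark: the matching must carry the disagreement set $B$ bijectively to $A$ and the agreement set to the complementary agreement set, and then $\mathbf{a}$ is forced, giving $\Delta!\,(n/2-\Delta)!$ favorable pairs out of $(n/2)!\cdot 2^{n/2}$. You simply spell out the case analysis and the observation $|A|=|B|=\Delta$ that the paper leaves implicit.
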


\noindent
Note that the above observation in particular implies that the weighted adjacency matrix $W$ is a real \underline{symmetric} matrix, and thus has real eigenvalues. Both of our proofs for \Cref{lemma:main-informal} will be based on upper bounding the eigenvalues of $W$.

\subsection{Simple Proof Using Cayley Graphs}\label{subsec:simple-proof-cayley}
In this section, we prove a version of \Cref{lemma:main-informal} using a simple and (mostly) self-contained argument. This version holds for degrees $d \leq C \log n$ for some absolute constant $C$.\\

Let $1=\mu_1 \ge \mu_2 \ge \dots \ge \mu_{n'}$ be the eigenvalues of $W$ and let $\mu(W)$ denote the second largest eigenvalue in absolute value, i.e. $\mu(W):=\max(|\mu_2|,|\mu_{n'}|)$. A small value of $\mu(W)$ suggests that the random walk represented by $W$ is ``expanding'' (see \Cref{fact:eml}). The main lemma of this subsection is the following, which shows that $\mu(W)$ is small, i.e. $W$ is a good expander. In the rest of the subsection, let $m = n/2$.\\

\begin{lemma}[$W$ is a good expander]\label{lem:w-exp}
Let $W$ denote the $n' \times n'$ matrix as described before. Then,
\begin{align*}
    \mu(W) \le {\bigO}\bigg(\frac{\log n}{\sqrt{n}}\bigg).
\end{align*}
\end{lemma}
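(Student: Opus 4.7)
My plan is to embed $W$ as a principal submatrix of a weighted Cayley graph $\tilde W$ on $\F_2^n$ supported on the subgroup $E_n\le \F_2^n$ of even-weight vectors, compute the spectrum of $\tilde W$ using Boolean Fourier analysis, and transfer the resulting bound to $W$ via Cauchy's eigenvalue interlacing theorem. The embedding is immediate from \Cref{obs:undirected-matrix}: since $w(\mathbf u,\mathbf v)$ depends only on $\mathbf u \oplus \mathbf v$, one may define a translation-invariant weight $\tilde w : \F_2^n \to \R$ by setting $\tilde w(g) := 1/(2^{n/2}\binom{n/2}{|g|/2})$ for $g$ of even Hamming weight and $\tilde w(g) := 0$ otherwise. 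Then $W$ is exactly the principal submatrix of the Cayley graph $\tilde W$ (with weights $\tilde w$) indexed by the balanced slice $\Boo^n_{n/2}$, which lies inside a single coset of $E_n$, so Cauchy interlacing reduces the task to bounding the non-leading eigenvalues of $\tilde W$ restricted to $E_n$.

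By \Cref{lem:cayley-evals}, the eigenvalues of this restriction are indexed by the characters $\chi_S(\mathbf x) = (-1)^{|\mathbf x \cap S|}$ of $E_n$ (with $S,S'\subseteq [n]$ defining the same character iff $S' \in \{S,[n]\setminus S\}$), and equal
\[
\lambda_S \;=\; \sum_{g \in E_n} \tilde w(g)\,(-1)^{|g\cap S|}.
\]
The involutive symmetry $g \mapsto [n]\setminus g$ (which preserves both $E_n$ and the weight $\tilde w$) immediately gives $\lambda_S = (-1)^{|S|}\lambda_S$, so $\lambda_S = 0$ whenever $|S|$ is odd. For even $|S|=s$, I would use the Krawtchouk identity $\sum_{|g|=2t}(-1)^{|g\cap S|} = [z^{2t}](1-z)^s(1+z)^{n-s}$ together with the Beta-integral representation $\binom{n/2}{t}^{-1} = (n/2+1)\int_0^1 x^t(1-x)^{n/2-t}\,dx$ to interchange sum and integral; the substitution $x = \sin^2\theta$ then collapses $\lambda_S$ to the explicit trigonometric formula
\[
\lambda_S \;=\; 2^{-s-1}(n/2+1)\int_0^{\pi/2} (\cos 2\theta)^s\bigl[\sin^{n-2s}(\theta+\tfrac{\pi}{4}) + \cos^{n-2s}(\theta+\tfrac{\pi}{4})\bigr]\sin(2\theta)\,d\theta.
\]

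The main obstacle is then bounding this integral uniformly across all even $s\in[2,n-2]$. The $\cos^{n-2s}(\theta+\pi/4)$ piece is pointwise at most $2^{-(n-2s)/2}$, since $|\cos(\theta+\pi/4)| \le 1/\sqrt 2$ on $[0,\pi/2]$, so together with the overall $2^{-s-1}$ prefactor it contributes exponentially little for every $s$. The $\sin^{n-2s}(\theta+\pi/4)$ piece concentrates near $\theta = \pi/4$, where $\cos(2\theta)$ vanishes linearly; a Laplace-type estimate --- approximating $\sin^{n-2s}(\theta+\pi/4)$ by a Gaussian of variance $\Theta(1/n)$ centered at $\pi/4$ and $(\cos 2\theta)^s$ by $(2|\theta-\pi/4|)^s$ --- bounds its contribution by a Gaussian moment of order $(s-1)!!\cdot n^{-(s-1)/2}$. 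The dominant case is $s=2$, which already gives $O(1/\sqrt n)$, and all larger even $s$ give strictly smaller bounds; the most delicate range is when $n-2s$ is small and the Gaussian approximation degrades, but there the pointwise bound on the $\cos^{n-2s}$ piece together with the tiny $2^{-s-1}$ prefactor already forces $|\lambda_S|$ to be exponentially small. Taking the maximum over $S \ne \emptyset,[n]$ and invoking Cauchy interlacing then yields $\mu(W) = \bigO(\log n/\sqrt n)$ as desired.
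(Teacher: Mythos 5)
Your skeleton --- embed $W$ as a principal submatrix of a weighted Cayley graph supported on the even-weight subgroup, invoke Cauchy interlacing, then compute eigenvalues via characters --- is exactly the paper's, and your observation that the $g\mapsto[n]\setminus g$ symmetry of the weight kills $\lambda_S$ for \emph{every} odd $|S|$ is a small strengthening of the paper's corresponding remark (which only dismisses $|A|\in\{1,n-1\}$). Where you genuinely diverge is in estimating the surviving eigenvalues. The paper normalizes to $\mu'_A/\mu'_\emptyset = \E_{\mathbf x\sim\mathcal D}[\chi_A(\mathbf x)]$, concentrates $|\mathbf x|$ by a Chernoff bound, peels off two random coordinates of $A$, and finishes with hypergeometric tail estimates; that combinatorial route pays a $\log n$ from the width of the concentration window. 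You instead close $\lambda_S$ in one shot via the Krawtchouk generating function, the Beta identity $1/\binom{n/2}{t}=(n/2+1)\int_0^1 x^t(1-x)^{n/2-t}\,dx$, and the substitution $x=\sin^2\theta$. I checked your trigonometric formula and it is exactly right: from $\cos\theta-\sin\theta=\sqrt2\cos(\theta+\tfrac\pi4)$, $\cos\theta+\sin\theta=\sqrt2\sin(\theta+\tfrac\pi4)$, and $\cos(\theta+\tfrac\pi4)\sin(\theta+\tfrac\pi4)=\tfrac12\cos2\theta$ one recovers precisely the stated expression. The ensuing Laplace argument is sound: the $\cos^{n-2s}(\theta+\tfrac\pi4)$ piece is pointwise at most $2^{-(n-2s)/2}$ and hence (with the $2^{-s-1}$ prefactor) exponentially negligible, while the $\sin^{n-2s}(\theta+\tfrac\pi4)$ piece concentrates at $\theta=\pi/4$ where $(\cos 2\theta)^s$ has an $s$-fold zero, giving a Gaussian moment $\asymp n\cdot\Gamma(\tfrac{s+1}{2})(2/(n-2s))^{(s+1)/2}$ that is dominated by $s=2$ and yields $\bigO(1/\sqrt n)$. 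If carried out rigorously this actually \emph{improves} the paper's bound by removing the $\log n$. The one point that needs more care than the proposal gives is uniformity in $s$: the Taylor approximation $(\cos 2\theta)^s\approx(2|\theta-\tfrac\pi4|)^s$ is controlled over the effective Gaussian window only when $s^2\ll n$, so the argument should be split explicitly into $s\lesssim\log n$ (where the Laplace estimate applies with good error control) and $s\gtrsim\log n$ (where the crude bound $|\lambda_S|\le 2^{-s-1}(n/2+1)\pi=\mathrm{poly}(1/n)$ already beats $1/\sqrt n$); these regimes then have to be stitched together. As presented this is a correct sketch whose remaining work is entirely in asymptotic bookkeeping; it is more analytic than the paper's concentration argument but, once completed, gives a cleaner constant.
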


We now prove \Cref{lem:w-exp}, i.e., we show that $W$ is a good expander. The idea of the proof is to show that one can turn $W$ into a (weighted) Cayley graph by adding additional edges and vertices and deduce that the original graph is an expander by using the expansion of the Cayley graph. In particular, we will show that $W$ is an induced subgraph of a Cayley graph and use the interlacing of eigenvalues to prove the expansion.

\begin{proof}[Proof of \Cref{lem:w-exp}]
    For the proof, we will assume that $m$ is even; the odd case is handled similarly. Let $\{0,1\}^n_{odd}$ and $\{0,1\}^n_{even}$ denote the sets of points in $\{0,1\}^n$ that are of odd Hamming weight and even Hamming weight respectively. We will now define the weighted Cayley graph.

     Let $V'=\{0,1\}^{n}_{even}$ (note that $\Boo^{n}_{n/2} \subseteq V'$ as $n$ is assumed to be even). Note that $V'$ is an Abelian group with addition defined by performing coordinate sums modulo 2 (in particular, we may identify \{0,1\} with $\F_2$). We shall define a weighted Cayley graph $W'$ over vertices $V'$ by specifying its generators (and their weights) as follows. The set of generators is $\mathcal{S} =\{0,1\}^n_{even}$ and a generator ${\bf x} \in \mathcal{S}$ has weight
     \begin{align*}
         w(\mathbf{x}) \; = \; \dfrac{1}{2^{m} \cdot \binom{m}{\Delta}}, \quad \text{ where } \; |\mathbf{x}| = 2 \Delta \quad \text{ and } \quad 0 \leq \Delta \leq m
     \end{align*}

     With the above definition for $V'$ and $W'$, we note that the induced subgraph of $W'$ when restricted to the balanced slice $\Boo^{n}_{n/2} \subseteq V'$, is identical to $W$. Hence, by applying the eigenvalue interlacing theorem, we have the following.
     \begin{claim}[Eigenvalue interlacing, see e.g.~\cite{horn1991topics}]
        Let $\mu'_1 \ge \mu'_2 \ge \dots \ge \mu'_{|V'|}$ be the eigenvalues of $W'$. Then $\mu_2 \le \mu_2'$ and $\mu'_{|V'|} \le \mu_{n'}$. Hence, $\mu(W) \le \max(|\mu'_2|,|\mu'_{|V'|}|)$.
     \end{claim}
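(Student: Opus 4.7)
The plan is to derive the claim from Cauchy's interlacing theorem for principal submatrices of real symmetric matrices. The preceding paragraph has already established that the induced subgraph of the weighted Cayley graph $W'$ on the vertex subset $\Boo^{n}_{n/2} \subseteq V'$ coincides with $W$; from this it follows that, after an appropriate reordering of vertices, the matrix $W$ is an $n' \times n'$ principal submatrix of the $|V'|\times |V'|$ real symmetric matrix $W'$. The one check needed is that the Cayley edge-weight assigned by $W'$ to a pair $\mathbf{u},\mathbf{v} \in \Boo^{n}_{n/2}$, namely $w(\mathbf{u}\oplus \mathbf{v}) = 1/(2^m \binom{m}{\Delta})$ with $|\mathbf{u}\oplus \mathbf{v}| = 2\Delta$, agrees with the entry $W[\mathbf{u},\mathbf{v}]$ given by \Cref{obs:undirected-matrix}; this is immediate from the identity $\Delta(\mathbf{u},\mathbf{v}) = |\mathbf{u}\oplus \mathbf{v}|$.

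I would then invoke Cauchy's interlacing theorem in the following form: if $A$ is an $N\times N$ real symmetric matrix with eigenvalues $\alpha_1 \ge \cdots \ge \alpha_N$ and $B$ is an $M\times M$ principal submatrix of $A$ with eigenvalues $\beta_1 \ge \cdots \ge \beta_M$, then $\alpha_k \ge \beta_k \ge \alpha_{k + N - M}$ for every $k \in [M]$. Taking $A = W'$ and $B = W$, so that $N = |V'|$ and $M = n'$, the instance $k = 2$ gives $\mu'_2 \ge \mu_2$, while the instance $k = n'$ gives $\mu_{n'} \ge \mu'_{n' + |V'| - n'} = \mu'_{|V'|}$. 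These are precisely the two inequalities asserted in the claim.

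Finally, to obtain the bound on $\mu(W)$, I would chain these together with the trivial inequality $\mu_2 \ge \mu_{n'}$ (part of the ordering of eigenvalues of $W$) to conclude
\[
\mu'_{|V'|} \;\le\; \mu_{n'} \;\le\; \mu_2 \;\le\; \mu'_2.
\]
Thus both $\mu_2$ and $\mu_{n'}$ lie in the interval $[\mu'_{|V'|}, \mu'_2]$, which is contained in $[-M^*, M^*]$ where $M^* := \max(|\mu'_2|, |\mu'_{|V'|}|)$. Therefore $|\mu_2|, |\mu_{n'}| \le M^*$, and hence $\mu(W) = \max(|\mu_2|, |\mu_{n'}|) \le \max(|\mu'_2|, |\mu'_{|V'|}|)$, as desired. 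I do not anticipate a real obstacle; the only subtle point is correctly tracking the index shift by $N - M = |V'| - n'$ in the interlacing inequality, and verifying that the Cayley weight formula matches the edge weight of $W$ on the restricted vertex set.
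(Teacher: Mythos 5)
Your proposal is correct and matches the paper's approach exactly: the paper also identifies $W$ as a principal submatrix of the symmetric matrix $W'$ (via the induced-subgraph observation) and then cites the Cauchy interlacing theorem, which is precisely the inequality $\alpha_k \ge \beta_k \ge \alpha_{k+N-M}$ you invoke at $k=2$ and $k=n'$. Your final chain $\mu'_{|V'|} \le \mu_{n'} \le \mu_2 \le \mu'_2$ and the conclusion $\mu(W) \le \max(|\mu'_2|,|\mu'_{|V'|}|)$ are the intended argument the paper leaves implicit behind its citation.
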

     The above claim allows us to bound $\mu(W)$ by bounding the absolute values of the eigenvalues of $W'$ (except the largest). To do this, we will first fix an eigenbasis for $W'$. The characteristic vectors of the first $(n-1)$ variables forms such an eigenbasis (because if $\mathbf{x} \in V'$, then $x_{n}$ can be expressed as a $\F_2$-linear combination of $x_{1},\ldots,x_{n-1}$). That is, for $A \subseteq [n-1]$, the characteristic vector ${\chi}_A \in \R^{2^{n-1}}$ is defined as ${\chi}_A({\bf x}) := (-1)^{\sum_{i\in A}{x_i}}$ for ${\bf x}\in V'$. The corresponding eigenvalue of $\chi_A$ is denoted by $\mu'_A$ (with a slight abuse of notation), and by~\Cref{lem:cayley-evals}, is equal to
     $$\mu'_A = \sum_{{\bf y}\in \mathcal{S}}w({\bf y})\chi_A({\bf y}).$$

     It will be convenient to normalize the weights of the generators in $\mathcal{S}$ to make it a probability distribution. More formally, let $\mathcal{D}$ be the probability distribution over $\mathcal{S}$ where the probability of sampling a point ${\bf x} \in \mathcal{S}$ is equal to $w({\bf x})/\sum_{{\bf y}\in \mathcal{S}}w({\bf y})$. Thus, $\mu'_{\emptyset} = \sum_{{\bf y}\in \mathcal{S}}w({\bf y})$ and $\mu'_A/\mu'_{\emptyset} =\E_{{\bf x}\sim \mathcal{D}}[\chi_A({\bf x})]$.\\

     \noindent 
     We will now show that $0 < \mu'_\emptyset \le \bigO(\sqrt{m})$ and $|\E_{{\bf x}\sim \mathcal{D}}[\chi_A({\bf x})]|=|\mu'_A/\mu'_{\emptyset}| \le O\big(\frac{\log m}{m}\big)$ for all {\em non-empty} $A \subseteq [2m-1]$. This would in turn give that $\mu(W) \le \max_{A \ne \emptyset}(|\mu'_A|) \le {\bigO}\bigg(\frac{\sqrt{m}\log m}{m}\bigg) = {\bigO}(\log m/\sqrt{m})$, finishing the proof of~\Cref{lem:w-exp}.

    The proof of \Cref{clm:eig-1} follows from a simple counting argument and can be found in \Cref{app:claims-simple-proof}.

    \begin{claim}\label{clm:eig-1}
        $0<\mu'_\emptyset \le \bigO(\sqrt{m}).$
    \end{claim}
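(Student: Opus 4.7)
The quantity $\mu'_\emptyset = \sum_{\mathbf{y}\in \mathcal{S}} w(\mathbf{y})$ is a positive real number since all weights are non-negative and $w(\mathbf{0}) = 1/2^m > 0$, so the ``$0 < \mu'_\emptyset$'' part is immediate. The real content is the upper bound $O(\sqrt{m})$.

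My plan is to first group the sum by the Hamming weight of the generator. Since the number of elements of $\mathcal{S} = \{0,1\}^{2m}_{even}$ of Hamming weight $2\Delta$ is $\binom{2m}{2\Delta}$ and each such element carries weight $\frac{1}{2^m \binom{m}{\Delta}}$, I would write
\[
  \mu'_\emptyset \;=\; \frac{1}{2^m}\sum_{\Delta=0}^{m} \frac{\binom{2m}{2\Delta}}{\binom{m}{\Delta}}.
\]
The goal then reduces to showing that each term satisfies $\binom{2m}{2\Delta}/\binom{m}{\Delta} \le O(\sqrt{m})\cdot \binom{m}{\Delta}$, because summing over $\Delta$ and applying $\sum_\Delta \binom{m}{\Delta} = 2^m$ would cancel the $2^{-m}$ out front and leave $O(\sqrt{m})$.

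To establish the per-term inequality, the cleanest route I would take is the algebraic identity
\[
  \binom{2m}{2\Delta}\binom{2\Delta}{\Delta}\binom{2m-2\Delta}{m-\Delta} \;=\; \binom{2m}{m}\binom{m}{\Delta}^{2},
\]
which can be verified by writing both sides in terms of factorials (both collapse to $(2m)!/(\Delta!^{2}(m-\Delta)!^{2})$). Rearranging gives
\[
  \frac{\binom{2m}{2\Delta}}{\binom{m}{\Delta}} \;=\; \binom{m}{\Delta}\cdot \frac{\binom{2m}{m}}{\binom{2\Delta}{\Delta}\binom{2m-2\Delta}{m-\Delta}}.
\]
The remaining task is to show the rightmost ratio is $O(\sqrt{m})$, uniformly in $\Delta$. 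This is a standard Stirling calculation: using the asymptotic $\binom{2k}{k} = \Theta(4^{k}/\sqrt{k})$ (for $k\ge 1$, and trivially bounded for $k=0$), the $4^{m}$'s cancel and what remains is a factor of the form $\sqrt{\Delta(m-\Delta)/m}$, which is at most $O(\sqrt{m})$ since $\Delta(m-\Delta)\le m^{2}/4$. Edge cases $\Delta \in \{0,m\}$ would be handled separately (the corresponding terms contribute $O(1)$ to the sum).

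The main (mild) obstacle is simply to handle the boundary values of $\Delta$ carefully, where the asymptotic form of $\binom{2k}{k}$ doesn't apply, and to make sure the $O(\sqrt{m})$ bound is genuinely uniform in $\Delta$; both are routine. No deep ideas are required beyond the factorial identity and Stirling.
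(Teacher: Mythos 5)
Your proof is correct and follows essentially the same route as the paper: both group the sum by Hamming weight, use the factorial identity $\binom{2m}{2\Delta}/\binom{m}{\Delta}^2 = \binom{2m}{m}/\bigl(\binom{2\Delta}{\Delta}\binom{2m-2\Delta}{m-\Delta}\bigr)$, and then apply the Stirling estimate $\binom{2k}{k}=\Theta(4^k/\sqrt{k})$ together with AM--GM to bound each ratio by $O(\sqrt{m})$. The only cosmetic difference is that you phrase it as a per-term bound $\binom{2m}{2\Delta}/\binom{m}{\Delta}\le O(\sqrt{m})\binom{m}{\Delta}$ and then sum, whereas the paper factors out $\max_\Delta\binom{2m}{2\Delta}/\binom{m}{\Delta}^2$ against $\sum_\Delta\binom{m}{\Delta}/2^m=1$; these are the same argument.
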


    It remains to show that $|\E_{{\bf x}\sim \mathcal{D}}[\chi_A({\bf x})]| \le \bigO\big(\frac{\log m}{m}\big)$ for all {\em non-empty} $A \subseteq [2m-1]$. Note that the distribution $\mathcal{D}$ has some symmetry in the sense that all the points of a given Hamming weight have the same probability. Furthermore, two given points, one of weight $2 \Delta$ and the other of weight $(2m-2 \Delta)$ also have the same probability mass (for arbitrary $0 \leq \Delta \leq m$). This leads to $\E_{{\bf x}\sim \mathcal{D}}[\chi_A({\bf x})]$ being equal to $0$ if $|A|=1$ or $2m-1$, and hence it suffices to focus on the regime $2\le |A| \le 2m-2$.

    We show the following concentration inequality for the distribution $\mathcal{D}$. The proof of \Cref{clm:D-conc} can be found in \Cref{app:claims-simple-proof}.

    \begin{claim}\label{clm:D-conc}
        $\Pr_{{\bf x} \sim \mathcal{D}}[||{\bf x}|-m| > \sqrt{50 m \log m}] \le \bigO(1/m^2)$.
    \end{claim}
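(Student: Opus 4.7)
The plan is to compute the distribution of $|\mathbf{x}|$ under $\mathcal{D}$ in closed form, show that it is unimodal with Gaussian-type decay around $\Delta = m/2$, and then conclude by a routine tail sum.

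First, since $w(\mathbf{x})$ depends only on $|\mathbf{x}| = 2\Delta$ and there are $\binom{2m}{2\Delta}$ strings in $\mathcal{S}$ of weight $2\Delta$, the probability $q_\Delta := \Pr_{\mathbf{x} \sim \mathcal{D}}[|\mathbf{x}| = 2\Delta]$ is proportional to $\binom{2m}{2\Delta}/\binom{m}{\Delta}$. The key structural observation, obtained after a small cancellation between $\binom{2m}{2\Delta+2}/\binom{2m}{2\Delta}$ and $\binom{m}{\Delta}/\binom{m}{\Delta+1}$, is that the consecutive ratio simplifies to
\[
\frac{q_{\Delta+1}}{q_\Delta} \;=\; \frac{2m - 2\Delta - 1}{2\Delta + 1}.
\]
In particular, $q_\Delta$ is unimodal with peak near $\Delta = m/2$ and symmetric under $\Delta \mapsto m - \Delta$. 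This clean ratio identity is the main technical step; everything that follows is elementary.

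Next, applying the elementary inequality $(1-x)/(1+x) \le e^{-2x}$ with $x = (2j+1)/m \in [0,1)$, together with $\sum_{j=0}^{t-1}(2j+1) = t^2$, I would obtain the telescoping bound
\[
\frac{q_{m/2 + t}}{q_{m/2}} \;=\; \prod_{j=0}^{t-1} \frac{m - 2j - 1}{m + 2j + 1} \;\le\; \exp\!\left(-\frac{2t^2}{m}\right) \qquad \text{for } 0 \le t \le m/2,
\]
and the same bound for $-m/2 \le t < 0$ by the symmetry $q_\Delta = q_{m-\Delta}$. Finally, setting $T_0 := \lceil \sqrt{50 m \log m}/2 \rceil$, so that $\exp(-2T_0^2/m) \le m^{-25}$, a standard geometric-series estimate for $\sum_{|t| \ge T_0} \exp(-2t^2/m)$ combined with $q_{m/2} \le 1$ gives
\[
\Pr_{\mathbf{x} \sim \mathcal{D}}\!\left[\,\big||\mathbf{x}| - m\big| > \sqrt{50 m \log m}\,\right] \;\le\; O\!\left(m^{-24}\right),
\]
which is much stronger than the required $O(1/m^2)$. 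The main obstacle is the ratio simplification; once that is verified, the pointwise Gaussian decay of $q_{m/2+t}$ and the final tail bound are purely mechanical.
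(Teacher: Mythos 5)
Your proof is correct, but it takes a genuinely different route from the paper's. The paper bounds the tail by re-using the estimate $\binom{2m}{2d}\big/\binom{m}{d}^2 \le \bigO(\sqrt{m})$ already derived (via Stirling) in the proof of \Cref{clm:eig-1}, writing each weight as $\frac{\binom{2m}{2d}}{2^m\binom{m}{d}} = \frac{\binom{2m}{2d}}{\binom{m}{d}^2}\cdot\frac{\binom{m}{d}}{2^m}$, and then applying an off-the-shelf Chernoff bound to the binomial tail $\sum_{d\in B}\binom{m}{d}/2^m$; this yields $\bigO(\sqrt{m})\cdot \bigO(1/m^3) = \bigO(1/m^{2.5})$. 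Your argument instead computes the consecutive ratio in closed form, $q_{\Delta+1}/q_\Delta = \frac{2m-2\Delta-1}{2\Delta+1}$ (which I verified is exactly right after the cancellation you describe), deduces unimodality and the symmetry $q_\Delta = q_{m-\Delta}$, telescopes via $(1-x)/(1+x)\le e^{-2x}$ to get the pointwise Gaussian decay $q_{m/2+t}\le q_{m/2}\,e^{-2t^2/m}$, and sums the tail. Your approach is more self-contained (it never invokes Chernoff or the Stirling estimate of \Cref{clm:eig-1}), reveals the exact shape of the distribution, and yields a sharper exponent $\bigO(m^{-24})$; the paper's is shorter only because it amortizes against computations that are needed for \Cref{clm:eig-1} anyway. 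Both are valid; either would serve.
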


     Assuming \Cref{clm:D-conc}, it suffices to show that $\E_{{\bf x}\sim \mathcal{D}}[\chi_A({\bf x}) ~\mid~|{\bf x}| \in m\pm \sqrt{50 m \log m}] \le \bigO\big(\frac{\log m}{m}\big)$ to conclude that $\E_{{\bf x}\sim \mathcal{D}}[\chi_A({\bf x})] \le \bigO\big(\frac{\log m}{m}\big)$. We will show that this holds even conditioned on  $|{\bf x}|=2 \Delta$ for every $\Delta \in (m\pm \sqrt{50m\log m})/2$. However, recall that $\mathcal{D}$ is uniform when restricted to $\{0,1\}^{2m}_{2 \Delta}$. Therefore, we can equivalently upper bound the quantity $|\E_{{\bf x}\in \{0,1\}^{2m}_{2\Delta}}[\chi_A({\bf x})]|$ to conclude the proof. Now, we note that $\E_{{\bf x}\in \{0,1\}^{2m}_{2\Delta}}[\chi_A({\bf x})] = \E_{B \sim {[2m]\choose |A|}}[\chi_B({\bf c})]$, where ${\bf c}$ is an arbitrary point in $\{0,1\}^{2m}_{2\Delta}$ (we will fix it to be $0^{2m-2\Delta}1^{2\Delta}$). Hence, it suffices to show that 
     \begin{align}\label{eqn:exp-B}
        \bigg|\E_{B \sim {[2m] \choose k}}[\chi_B({\bf c})]\bigg| \; \le \; \bigO\bigg(\frac{\log m}{m}\bigg),
     \end{align}
     for every $2 \leq k \leq (2m-2)$ (since we assumed that $2\le |A| \le 2m-2$). We may further assume that $k\le m$ without loss of generality, as $\chi_B({\bf c}) = \chi_{\overline{B}}({\bf c})$.\\

     \noindent    
    To help with the analysis, we will choose $B\sim {[2m] \choose k}$ by first choosing a subset $C$ of $[2m]$ of size $(k-2)$ (which is non-negative) uniformly at random and then choosing two elements $b_1\ne b_2$ from $\overline{C}=[2m]\setminus C$ uniformly at random.
    
    For a subset $C \subseteq [2m]$, we will use the notation $\text{wt}(C)$ to denote the number of 1's in ${\bf c}$ when restricted to the coordinates indexed by $C$. Let $p:=\text{wt}([2m])/(2m)$ denote the fractional Hamming weight of ${\bf c}$. We say that a subset $C\subseteq [2m]$ is {\em good}, if $\big| |\overline{C}| - 2\text{wt}(\overline{C}) \big| \le \sqrt{{2000 \log m}}$. We claim that $\Pr_{C \sim {[2m] \choose k-2}}[C\text{~is not good}] \le \bigO(1/m^2)$. This essentially follows from standard tail bounds for the hypergeometric distribution but needs some care to handle small $k$. We divide this analysis into two cases.

    \begin{itemize}
        \item {\bf Case 1: $k \le \sqrt{50 m \log m}$.} We note that $|\overline{C}|=2m-k+2 \in 2m \pm \sqrt{50 m \log m}$, and similarly $\text{wt}(\overline{C}) \in \text{wt}([2m]) \pm \sqrt{50m\log m} \subseteq m \pm 2\sqrt{50m\log m}$. Using these bounds, it follows that $\big| |\overline{C}|-2\text{wt}(\overline{C}) \big| \le \sqrt{{2000\log m}}$ for sufficiently large $m$ (for every choice of $C\in {[2m] \choose k-2}$). Hence all choices of $C$ are good in this case.
        
        \item {\bf Case 2: $k > \sqrt{50 m \log m}$.} We note that $\text{wt}({C})$ is distributed according to a hypergeometric distribution --- it corresponds to the number of successes in $k-2$ draws with replacement from a population of $2m$ total states and $\text{wt}([2m])=2d$ success states. Using a standard tail bound~\cite{hoeffding1994probability}, we obtain that $\Pr_{C\sim {[2m]\choose k-2}}\big[|\frac{\text{wt}(C)}{k-2} - p|>\sqrt{\frac{4\log k}{k}}\big]= \bigO(1/k^4)= \bigO(1/m^2)$. Using $\big|p-\frac{1}{2}\big| \le \sqrt{\frac{50\log m}{m}}$, we thus get that $||C|-2\text{wt}(C)| \le 4\sqrt{50m\log m}$ with probability at least $1-\bigO(1/m^2)$. Because $2m=|C|+\overline{C}$ and $2d=\text{wt}(C)+\text{wt}(\overline{C})$, with probability at least $1-\bigO(1/m^2)$, we have $||\overline{C}|-2\text{wt}(\overline{C})| \le 6\sqrt{50m\log m}$, i.e., $C$ is good.
    \end{itemize}

    We now show that conditioned on $C$ being good, the expectation of $\chi_B({\bf c})$ is upper bounded by $\bigO\big(\frac{\log m}{m}\big)$ in absolute value. This would then prove~\eqref{eqn:exp-B}.  For ease of notation, let $n_0=|\overline{C}|=2m-k+2 \ge m$ and $d_0=\text{wt}(\overline{C}) \in \frac{n_0}{2} \pm \Theta(\sqrt{n_0\log n_0})$ (since $C$ is good).  We now note that $\chi_B({\bf c}) = \chi_{C}({\bf c})\cdot \chi_{\{b_1,b_2\}}({\bf c})$, so it suffices to bound $|\E_{b_1,b_2}[(-1)^{c_{b_1}+c_{b_2}}]|$. The idea now is that $c_{b_1}$ and $c_{b_2}$ almost behave like two independent draws, so the expectation is {\em roughly} the square of $|\E_b[(-1)^{c_b}]|$ for a uniformly random coordinate $b\in \overline{C}$, which is equal to $\big|\frac{n_0-2d_0}{n_0}\big| \le O(\sqrt{\frac{\log n_0}{n_0}}) \le O(\sqrt{\frac{\log m}{m}})$ as $n_0 \ge m$. More precisely, we have the following:
    \begin{align*}
        |\E_{b_1,b_2}[(-1)^{c_{b_1}+c_{b_2}}]| & = \frac{|{d_0\choose 2}+{n_0-d_0 \choose 2}-d_0(n_0-d_0)|}{{n_0\choose 2}}\\
        & = \frac{|(n_0-2d_0)^2-n_0|}{n_0(n_0-1)}\\
        & \le \bigO\bigg(\frac{\log n_0}{n_0}\bigg)\tag{as $d_0 \in n_0/2 \pm \Theta(\sqrt{n_0\log n_0})$} \le \bigO\bigg(\frac{\log n}{n}\bigg).
    \end{align*}
    
    This finishes the proof of~\Cref{lem:w-exp}.
\end{proof}


\begin{proof}[Proof of \Cref{lemma:main-informal} for $d \leq C \log n$] 
We use \Cref{lem:w-exp} to prove an upper bound for \Cref{lemma:main-informal} that holds for all $d \leq C \log n$ for an absolute constant $C > 0$. Using the expander mixing lemma (\Cref{fact:eml}), we obtain
$$\Pr_{{\bf x}\sim U_{n,n/2}, \\ {\bf y}\sim N({\bf x})}[{\bf x}\in S\text{~and~}{\bf y}\in S] \le \rho^2 + \rho \cdot \bigO\paren{\frac{\log n}{\sqrt{n}}},$$ where $S$ denotes the non-zeroes of $P$ in $\{0,1\}^n_{n/2}$ and $\rho=|S|/{n \choose n/2}$. Thus assuming $d\le C\log n$ for small enough constant $C$ and using $\rho \ge 4^{-d}$ (\Cref{lemma:suboptimal-DLSZ-slice}), we get that the above probability is at most $\rho^2(1+1/n^\varepsilon)$ for sufficiently small constant $\varepsilon$. Hence, this finishes the proof of~\Cref{lemma:main-informal} in the regime $d \le C\log n$.

\end{proof}

\subsection{Proof via Johnson Association Schemes}\label{subsec:proof-johnson-scheme}
In this section, we prove a stronger version of \Cref{lemma:main-informal}, i.e. we will give a tighter upper bound. This will hold for all $d \leq n^{\gamma}$ for some absolute constant $\gamma > 0$. In this subsection, we will always assume that $n$ is divisible by $2$. We will first give some preliminaries on \emph{Johnson association schemes} and functions on the balanced slice.\\

\subsubsection{Preliminaries for Johnson Association Schemes}
We now discuss that the matrix $W$ has some useful properties. Recall $n' = \binom{n}{n/2}$. Consider the set of $n' \times n'$ dimensional matrices satisfying the following property: For every entry $(\mathbf{u}, \mathbf{v})$, the entry only depends on the Hamming distance $\Delta(\mathbf{u}, \mathbf{v})$ as mentioned in \Cref{obs:undirected-matrix}. The set of all such matrices forms an algebra known as \emph{Bose-Mesner algebra of the $(n,n/2)$ Johnson association scheme}. Note that the matrices in this algebra are invariant under the action of the symmetric group $S_{n}$ on the coordinates, i.e., if we apply a permutation $\pi \in S_{n}$ on the coordinates of the rows and columns simultaneously, the matrix remains invariant. Filmus \cite{Filmus2014AnOB} showed that there exist vector spaces that form orthogonal eigenspaces for matrices in Bose-Mesner algebra of the $(n,n/2)$ Johnson association scheme with certain useful properties.\\

We equip the space of functions on the balanced slice $\Boo^{n}_{n/2}$ with an inner product. We consider the following inner product on functions on the balanced slice: For any two functions $f,g:\Boo^{n}_{n/2} \to \mathbb{R}$
\begin{align*}
    \langle f,g \rangle \; = \; \mathbb{E}_{\mathbf{x} \sim U_{n,n/2}}[f(\mathbf{x}) \cdot g(\mathbf{x})]
\end{align*}
The $p^{th}$ norm of a function $f: \Boo^{n}_{n/2} \to \mathbb{R}$ is defined as follows:
\begin{align*}
    \lVert f \rVert_{p} \; = \; \mathbb{E}_{\mathbf{x} \sim U_{n,n/2}}[|f(\mathbf{x})|^{p}]^{1/p} \; = \; \dfrac{1}{n'} \paren{\sum_{\mathbf{x} \in \Boo^{n}_{n/2}} |f(\mathbf{x})|^{p} }^{1/p}
\end{align*}

\paragraph{}Let us define a particular function on the balanced slice $\Boo^{n}_{n/2}$. For every $t \in \set{0,1,\ldots,n/2}$, let $f_{t}(\mathbf{x}): \Boo^{n}_{n/2} \to \mathbb{R}$ be the following function: For $t = 0$, $f_{t} = 1$ and for $t\geq 1$,
\begin{equation}\label{eqn:nice-eigenvector}
    f_{t}(x_{1},\ldots,x_{n}) \; = \; (x_{1} - x_{2}) \cdot (x_{3} - x_{4}) \cdots (x_{2t-1} - x_{2t})
\end{equation}
We are going to interpret $f_{t}(\mathbf{x})$ as a vector, i.e. for every $\bm{\alpha} \in \Boo^{n}_{n/2}$, the $\bm{\alpha}^{th}$ entry of the vector is $f_{t}(\bm{\alpha})$. We will use the following result from \cite{Filmus2014AnOB}.\\

\begin{lemma}\label{lemma:bose-mesner-eigenspaces}
\cite[Lemma 18]{Filmus2014AnOB}. There exists orthogonal\footnote{The orthogonality is with respect to the inner product defined in the previous paragraph.} vector spaces $\mathcal{V}_{n,0}, \ldots, \mathcal{V}_{n,n/2}$ for which the following holds. Let $W$ be any matrix in the Bose-Mesner algebra for the $(n,n/2)$ Johnson association scheme. Then,
\begin{enumerate}
    \item The spaces $\mathcal{V}_{n,0}, \ldots, \mathcal{V}_{n,n/2}$ are orthogonal eigenspaces for the matrix $W$ with corresponding eigenvalues (not necessarily distinct) $\lambda_{0}, \lambda_{1}, \ldots, \lambda_{n/2}$.
    \item For every integer $t \in \set{0,1,\ldots,n/2}$, the function $f_{t}(\mathbf{x})$, as described in \Cref{eqn:nice-eigenvector} lies in $\mathcal{V}_{n,t}$. In other words, $f_{t}$ is an eigenvector of $W$ in the $t^{th}$ eigenspace with eigenvalue $\lambda_{t}$.
\end{enumerate}
\end{lemma}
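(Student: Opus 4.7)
The plan is to build the spaces $\mathcal{V}_{n,t}$ directly from the natural polynomial filtration on $\Boo^{n}_{n/2}$, and then to deduce eigenspace-hood from the $S_n$-symmetry of matrices in the Bose-Mesner algebra already recorded in \Cref{obs:undirected-matrix}. Concretely, let $\mathcal{P}_{\leq t} \subseteq \R^{\Boo^{n}_{n/2}}$ denote the subspace of real-valued functions on the balanced slice that admit a multilinear polynomial representative of total degree at most $t$. A direct counting argument gives $\dim \mathcal{P}_{\leq t} = \binom{n}{t}$ for $t \le n/2$, so that the chain $\mathcal{P}_{\leq 0} \subsetneq \mathcal{P}_{\leq 1} \subsetneq \cdots \subsetneq \mathcal{P}_{\leq n/2} = \R^{\Boo^{n}_{n/2}}$ is strictly increasing. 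I would then define
\[
\mathcal{V}_{n,t} \;:=\; \mathcal{P}_{\leq t} \cap \mathcal{P}_{\leq t-1}^{\perp},
\]
with the convention $\mathcal{P}_{\leq -1} = \set{0}$. By construction the $\mathcal{V}_{n,t}$'s are pairwise orthogonal and together span $\R^{\Boo^{n}_{n/2}}$.

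For part (1), observe that each $\mathcal{P}_{\leq t}$, and hence each $\mathcal{V}_{n,t}$, is invariant under the permutation action of $S_n$ on coordinates, and by \Cref{obs:undirected-matrix} every $W$ in the Bose-Mesner algebra commutes with this $S_n$-action. The classical decomposition of the permutation module $M^{(n/2,n/2)}$ identifies $\mathcal{V}_{n,t}$ with the Specht module $S^{(n-t,t)}$, and these Specht modules are pairwise non-isomorphic irreducible $S_n$-modules. Schur's lemma then forces $W$ to act as a single scalar $\lambda_t$ on the whole of $\mathcal{V}_{n,t}$, giving the desired simultaneous eigenspace decomposition.

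For part (2), I would place the explicit $f_t$ from \Cref{eqn:nice-eigenvector} inside $\mathcal{V}_{n,t}$. Since $f_t$ is a multilinear polynomial of degree exactly $t$, it already lies in $\mathcal{P}_{\leq t}$, so it suffices to verify $f_t \perp \mathcal{P}_{\leq t-1}$. By linearity this reduces to showing
\[
\avg{\mathbf{x} \sim U_{n,n/2}}{\chi_{S}(\mathbf{x}) \cdot f_t(\mathbf{x})} \;=\; 0
\]
for every monomial $\chi_S(\mathbf{x}) = \prod_{i \in S} x_i$ with $|S| < t$. Because $|S| < t$, pigeonhole yields some $j \in [t]$ with $\set{2j-1, 2j} \cap S = \emptyset$. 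Let $\sigma$ be the involution on $\Boo^{n}_{n/2}$ that swaps coordinates $2j-1$ and $2j$. Then $\sigma$ preserves the uniform measure $U_{n,n/2}$ and fixes $\chi_S$, while the factor $(x_{2j-1} - x_{2j})$ of $f_t$ flips sign and the remaining factors of $f_t$ are invariant under $\sigma$. Hence the above expectation equals its own negative, and so vanishes.

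The main obstacle is the representation-theoretic input used in part (1): that each $\mathcal{V}_{n,t}$ is irreducible as an $S_n$-module, so that Schur's lemma actually forces scalar behaviour. If one wants to avoid invoking the full Specht-module classification, a self-contained substitute is to verify by direct dimension counting that $\dim \mathcal{V}_{n,t} = \binom{n}{t} - \binom{n}{t-1}$, to show that $\mathcal{V}_{n,t}$ is generated as an $S_n$-module by the single vector $f_t$ (by counting its $S_n$-orbit relative to the stabiliser coming from the coordinate pairing), and to conclude that any $S_n$-invariant self-adjoint operator on $\mathcal{V}_{n,t}$ must be a scalar with eigenvalue $\lambda_t = \innerprod{Wf_t, f_t}/\innerprod{f_t, f_t}$. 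Every other step---the filtration, the orthogonality, and the involution argument---is elementary.
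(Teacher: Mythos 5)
The paper does not prove this lemma; it is cited verbatim from \cite[Lemma 18]{Filmus2014AnOB}, so there is no internal proof to compare against. Your reconstruction is essentially the standard argument underlying that reference: build $\mathcal{V}_{n,t}$ as the successive orthogonal complements in the degree filtration $\mathcal{P}_{\le 0}\subsetneq\cdots\subsetneq\mathcal{P}_{\le n/2}$, invoke $S_n$-equivariance plus irreducibility (the Specht/Young decomposition $M^{(n/2,n/2)}\cong\bigoplus_{t\le n/2}S^{(n-t,t)}$) for part (1), and an involution/pigeonhole argument for part (2).

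Two remarks rather than objections. First, the equality $\dim\mathcal{P}_{\le t}=\binom{n}{t}$ that you describe as ``a direct counting argument'' is itself a nontrivial Wilson-type fact --- it is essentially \Cref{lem:dim-wilson} of this paper specialised to coefficients in $\mathbb{R}$ --- so you are silently black-boxing a statement the paper proves separately. Second, for part (1) the real-type/Schur input is unavoidable and is correctly isolated; your suggested self-contained substitute (counting $\dim\mathcal{V}_{n,t}=\binom{n}{t}-\binom{n}{t-1}$ and showing $f_t$ generates) needs one extra observation to actually yield scalar action, namely that the commutant of the $S_n$-action on $\mathbb{R}^{\Boo^n_{n/2}}$ has dimension exactly $n/2+1$ (the number of $S_n$-orbits on pairs, i.e., the number of Hamming-distance classes), so once the space is split into $n/2+1$ nonzero invariant summands the commutant must be one-dimensional on each. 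The pigeonhole $+$ sign-flip argument for part (2) is correct as written and is a clean way to see $f_t\perp\mathcal{P}_{\le t-1}$.
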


\paragraph{}For a function $f: \Boo^{n}_{n/2} \to \mathbb{R}$ and $0 \leq t \leq n/2$, let $f^{=t}$ denote the component of $f$ in the $t^{th}$ eigenspace $\mathcal{V}_{n,t}$ (see \Cref{lemma:bose-mesner-eigenspaces}).
We now define the following noise operator for functions on the slice.\\ 

\begin{definition}[Noise operator for functions on slice]
(See \cite[Section 2]{Filmus-Ihringer19}). For a parameter $\rho \in (0,1]$, the \emph{noise operator} $T_{\rho}$ maps functions on the slice $\Boo^{n}_{n/2}$ to functions on the slice $\Boo^{n}_{n/2}$, defined as:
\begin{align*}
    T_{\rho}f \; = \; \sum_{t=0}^{n/2} \; \rho^{t \paren{1 - \frac{t-1}{n}}} \, f^{=t}.
\end{align*}
\end{definition}

\paragraph{}Now we state a hypercontractive inequality for the noise operator $T_{\rho}$. Lee and Yau \cite{LeeYau} proved a log-Sobolev inequality which combined with a result of Diaconnis and Saloff-Coste \cite{DiaconisSaloffCoste} implies the following hypercontractive inequality. The following lemma is a simplified form of the more general inequality. For more details, refer to \cite[Section 2]{Filmus-Ihringer19} and \cite[Section 2]{Filmus2014AnOB}.\\

\noindent
\begin{lemma}[Hypercontractive inequality using log-Sobolev inequality]\label{lemma:hypercontractive-log-Sobolev}
Fix any multilinear polynomial $f$ on the balanced slice $\Boo^{n}_{n/2}$. Then there exists a constant $c$ such that for every $1 \leq p \leq q \leq \infty$ with $\frac{q-1}{p-1} \leq \exp{(c \log 1/\rho)}$,
\begin{align*}
    \lVert T_{\rho} f \rVert_{q} \; \leq \; \lVert f \rVert_{p}.
\end{align*}
\end{lemma}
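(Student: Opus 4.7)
The plan is to invoke two classical ingredients already mentioned in the paper: the Lee--Yau log-Sobolev inequality for the Markov chain on the balanced slice, and Gross's equivalence between log-Sobolev inequalities and hypercontractivity of the associated Markov semigroup (in the finite-state form due to Diaconis--Saloff-Coste). The lemma is essentially a repackaging of these two results in the language of the noise operator $T_\rho$ defined above.

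First, I would interpret $T_\rho$ as a Markov semigroup. Using the spectral decomposition of \Cref{lemma:bose-mesner-eigenspaces}, define a self-adjoint operator $L$ on functions on $\Boo^n_{n/2}$ by declaring that $L$ acts on $\mathcal{V}_{n,t}$ as multiplication by $t(1-(t-1)/n)$. Setting $s = \log(1/\rho)$, the $t$-th eigenvalue of $e^{-sL}$ is exactly $\rho^{t(1-(t-1)/n)}$, so $T_\rho = e^{-sL}$. Up to a constant scaling, $L$ is the generator of the continuous-time random transposition walk on $\Boo^n_{n/2}$ (i.e., swap a uniformly random pair of coordinates), and the Dirichlet form is $\mathcal{E}(f,f) = \langle f, Lf\rangle$.

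Next, I would apply the log-Sobolev inequality of Lee--Yau: there is an absolute constant $\alpha > 0$ such that for every $f:\Boo^n_{n/2}\to \R$,
\[
\mathrm{Ent}(f^2) \;\leq\; \alpha \cdot \mathcal{E}(f,f),
\]
where $\mathrm{Ent}(g) := \E[g\log g] - \E[g]\log \E[g]$. This is the only nontrivial analytic input and is the reason we gain an exponential improvement over Poincar\'e-style bounds.

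Finally, I would invoke Gross's hypercontractivity theorem, in the form worked out by Diaconis--Saloff-Coste for finite-state Markov semigroups: a log-Sobolev inequality with constant $\alpha$ for the generator $L$ is equivalent to the semigroup bound $\|e^{-sL}f\|_{q(s)} \leq \|f\|_p$ whenever $q(s) - 1 \leq (p-1) e^{4s/\alpha}$. Substituting $s = \log(1/\rho)$ and $c = 4/\alpha$ gives $\|T_\rho f\|_q \leq \|f\|_p$ whenever $(q-1)/(p-1) \leq \exp(c\log(1/\rho))$, which is exactly the claim. The main obstacle, if any, is purely notational: verifying that the Johnson-scheme eigenvalue exponent $t(1-(t-1)/n)$ used to define $T_\rho$ agrees with the spectrum of the Lee--Yau generator, so that the constants line up cleanly; the underlying mathematical content lives entirely in the two cited works.
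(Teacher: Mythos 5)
Your proposal is correct and follows essentially the same route as the paper: both reduce to the Lee--Yau log-Sobolev inequality for the random-transposition walk on the slice combined with the Diaconis--Saloff-Coste equivalence between log-Sobolev and hypercontractivity of the associated semigroup. The only difference is one of packaging: the paper cites \cite[Lemma 27]{Filmus2014AnOB} directly (after identifying $T_\rho$ with the heat operator $H_{(n-1)\log(1/\rho)/2}$), whereas you unpack the log-Sobolev~$\Rightarrow$~hypercontractivity step explicitly; the analytic content is identical, and your observation that the exponent $t(1-(t-1)/n)$ matches the generator's spectrum up to the scaling $(n-1)/2$ is exactly the bookkeeping the paper's Remark carries out.
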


\noindent
\begin{remark}
In \cite{Filmus2014AnOB} and  \cite{Filmus-Ihringer19}, the above lemma is stated in a slightly different way, so we take a moment to clarify that here. As mentioned in \cite[Page 2]{Filmus-Ihringer19} (see the line after Equation (2)), the noise operator $T_{\rho}$ is equivalent to the expected value after applying random $\mathrm{Po}(\frac{n-1}{2} \log (1/\rho))$ transpositions. Using the comment in \cite{Filmus2014AnOB} (after Definition 26), $T_{\rho}$ is equivalent to $H_{g(\rho)}$, where $H_{\cdot}$ is the Heat operator and $g(\rho) = \frac{n-1}{2} \log (1/\rho)$. We get \Cref{lemma:hypercontractive-log-Sobolev} by using \cite[Lemma 27]{Filmus2014AnOB} (the reader should be careful that the $\rho$ in that lemma is the \emph{log Sobolev constant} and is different from the $\rho$ in \Cref{lemma:hypercontractive-log-Sobolev}).
\end{remark}

\subsubsection{Proof of the Sampling Guarantee}
One of the key steps in our proof of \Cref{lemma:main-informal} will be \Cref{lemma:eigenvalue-bound} which gives an upper bound on the eigenvalues of the matrix $W$. It says that for small $t$, the eigenvalue $\lambda_{t}$ is quite small (roughly $1/n^{t}$) and for larger $t$, the eigenvalue is still exponentially small (roughly $1/2^{t}$). To upper bound the eigenvalues, the idea is to choose a suitable eigenvector and argue about its non-zero coordinates. In particular, we will work with the eigenvector $f_{t}(\mathbf{x})$ as stated in \Cref{eqn:nice-eigenvector} and \Cref{lemma:bose-mesner-eigenspaces}.

\noindent
\begin{lemma}\label{lemma:eigenvalue-bound}
Let $\lambda_{0},\lambda_{1},\ldots,\lambda_{n/2}$ denote the eigenvalues of $W$ and let $\tau := n^{\delta}$ for a sufficiently small constant $\delta > 0$. Then,
\begin{itemize}
    \item For $1 \leq t \leq \tau$, $\lambda_{t} \leq 1/n^{\Omega(t)}$
    \item For $t > \tau$, $\lambda_{t} \leq 1/2^{n^{\Omega(1)}}$.
\end{itemize}
\end{lemma}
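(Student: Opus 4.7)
The plan is to compute the eigenvalues $\lambda_t$ of $W$ exactly using the explicit eigenvectors from \Cref{lemma:bose-mesner-eigenspaces}, and then bound them in each regime. By \Cref{obs:undirected-matrix} the entries of $W$ depend only on $\Delta(\mathbf{u},\mathbf{v})$, so $W$ lies in the Bose--Mesner algebra of the $(n,n/2)$ Johnson scheme; thus each $\mathcal{V}_{n,t}$ is a full eigenspace of $W$ with some eigenvalue $\lambda_t$, and in particular $f_t(\mathbf{x}) = \prod_{s=1}^t (x_{2s-1}-x_{2s})$ is an eigenvector with eigenvalue $\lambda_t$. I will evaluate $(W f_t)(\mathbf{u})$ at the alternating balanced string $\mathbf{u}=(1,0,1,0,\ldots,1,0)$, which satisfies $f_t(\mathbf{u})=1$, giving $\lambda_t = \E_{\mathcal{M},\mathbf{a}}[f_t(\mathbf{v})]$.

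For this $\mathbf{u}$, we have $v_c = u_c \oplus a_{k(c)}$ where $k(c)$ is the label of the edge of $\mathcal{M}$ containing $c$. Substituting $b_k := 1-2a_k \in \set{-1,+1}$ (independently uniform) and using $u_{2s-1}=1$, $u_{2s}=0$ gives $v_{2s-1}-v_{2s} = (b_{k(2s-1)}+b_{k(2s)})/2$, so
\[
f_t(\mathbf{v}) \;=\; 2^{-t}\sum_{\sigma}\,\prod_{s=1}^t b_{k(c_s(\sigma))},
\]
where the sum runs over all $2^t$ selections $\sigma$ (one coordinate from each pair $\{2s-1,2s\}$) and $c_s(\sigma)$ is the selected coordinate of pair $s$. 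Averaging over $\mathbf{a}$ (equivalently $\mathbf{b}$) kills every term in which some label has odd multiplicity. Since each matching label appears on exactly two coordinates of $[n]$ (the two endpoints of its matching edge), the even-multiplicity constraint forces the selected $t$ coordinates to be a union of $t/2$ complete matching edges of $\mathcal{M}$; in particular $\lambda_t=0$ for odd $t$.

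For even $t$, each surviving selection is encoded by a perfect matching $\pi$ on the $t$ pairs together with a coordinate orientation for each edge $\{s,s'\}\in\pi$ (two choices per edge, since exactly one endpoint of the corresponding matching edge lies in $\mathbf{u}^{-1}\set{0}$ and the other in $\mathbf{u}^{-1}\set{1}$). This gives $(t-1)!!\cdot 2^{t/2}$ configurations, and each is realized by the random $\mathcal{M}$ with probability $(n/2 - t/2)!/(n/2)!$. Using $(t-1)!! = t!/(2^{t/2}(t/2)!)$ and simplifying yields the closed form
\[
\lambda_t \;=\; \frac{\binom{t}{t/2}}{2^t\,\binom{n/2}{t/2}} \qquad (t \text{ even}).
\]

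Finally, $\binom{t}{t/2}\leq 2^t$ gives $\lambda_t \leq 1/\binom{n/2}{t/2}$. For $1 \leq t \leq \tau = n^\delta$, using $\binom{n/2}{t/2} \geq (n/(2t))^{t/2}$ yields $\lambda_t \leq (2t/n)^{t/2} \leq 1/n^{\Omega(t)}$ as long as $\delta$ is a sufficiently small absolute constant. For $t > \tau$, I split into two subregimes: when $\tau < t \leq n/4$, the same bound gives $\lambda_t \leq 2^{-t/2} \leq 2^{-n^{\Omega(1)}}$; when $n/4 < t \leq n/2$, standard entropy estimates yield $\binom{n/2}{t/2} = 2^{\Omega(n)}$, so $\lambda_t \leq 2^{-\Omega(n)}$. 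The main obstacle is the combinatorial bookkeeping for surviving selections, but once one observes that every matching label has exactly two witnesses among the $n$ coordinates, the characterization of surviving selections as unions of complete matching edges becomes transparent, and the counting reduces to the standard enumeration of perfect matchings with orientations.
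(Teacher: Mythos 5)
Your proposal is correct, and it takes a genuinely different route from the paper. Both proofs start the same way: use \Cref{lemma:bose-mesner-eigenspaces} to conclude $f_t$ is a $\lambda_t$-eigenvector, evaluate at the alternating balanced string $\mathbf{u}$ to get $\lambda_t = \E_{\mathcal{M},\mathbf{a}}[f_t(\mathbf{v})]$, and exploit the independence of the bits $a_k$. From there the paper argues combinatorially: it partitions matchings into ``$t$-good'' and ``$t$-bad'' (and further into ``$t$-self-good''), shows via an independence argument (\Cref{obs:bad-matching}) that bad matchings contribute zero, and then upper-bounds $\lambda_t$ by $\Pr[\mathcal{M}\text{ good}]$ (resp.\ a self-good variant) using a separate counting claim (\Cref{claim:upper-bound-good-matching}). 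You instead pass to $\pm1$ variables $b_k = 1-2a_k$, expand $f_t(\mathbf{v}) = 2^{-t}\prod_s(b_{k(2s-1)}+b_{k(2s)})$ into $2^t$ monomials, and observe that taking expectation over $\mathbf{b}$ kills every monomial with an odd-multiplicity label; since each label lies on exactly two coordinates, surviving monomials are exactly unions of $t/2$ complete matching edges. Counting these as oriented perfect matchings of $[t]$ gives the exact closed form $\lambda_t = \binom{t}{t/2}\big/\big(2^t\binom{n/2}{t/2}\big)$ for even $t$ and $\lambda_t=0$ for odd $t$, from which both regimes of the lemma drop out by elementary binomial estimates. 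This buys you more than the paper claims: an exact eigenvalue, nonnegativity of $\lambda_t$ for free, the vanishing for odd $t$, and the elimination of the good/self-good machinery and the auxiliary appendix claim. The paper's argument, in exchange, is somewhat more robust to the precise structure (it only needs an upper bound) but is looser and requires more case analysis.

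One small remark: your splitting of the $t>\tau$ regime into $t\le n/4$ and $t>n/4$ is unnecessary — the single estimate $\binom{n/2}{t/2} \ge 2^{t/2}$ holds for all $t \le n/2$ (since $t/2 \le (n/2)/2$), giving $\lambda_t \le 2^{-t/2} \le 2^{-\tau/2} = 2^{-n^{\Omega(1)}}$ uniformly.
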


\noindent
Before going into the proof of \Cref{lemma:eigenvalue-bound}, we will define a property on bipartite matching. For a matching $\mathcal{M}$, if $(i,j) \in \mathcal{M}$, then we will use the notation $\mathcal{M}(i) = j$ and $\mathcal{M}(j) = i$.\\

\noindent
\begin{definition}[Good and self-good matching]
Consider a complete bipartite graph $K_{n/2,n/2}$ on vertex set $(L \bigcup R)$, where $L = \set{1,3,\ldots,n-1}$ and $R = \set{2,4,\ldots,n}$. We will refer to a matching $\mathcal{M}$ between $L$ and $R$ as \emph{\underline{$t$-good}} if the following holds:
\begin{align*}
    \mathcal{M}(2i-1) \in \set{2,4,\ldots,2t} \quad \text{ or } \quad \mathcal{M}(2i) \in \set{1,3,\ldots,2t-1}, \quad \quad \text{ for all } \, i \in [t]
\end{align*}
We will call a good matching $\mathcal{M}$ a \emph{\underline{$t$-self good}} matching if for every subset $T \subseteq [t]$ of size $t/2$, there exists $i \in T$ such that
\begin{align*}
    \mathcal{M}(2i-1) = 2i
\end{align*}
\end{definition}

\noindent
We will simply refer to matchings that are not $t$-good as \emph{$t$-bad} matchings. We will refer to matchings that are $t$-good but not $t$-self good as \emph{$t$ non-self good} matchings. In our proof of \Cref{lemma:eigenvalue-bound}, it will be useful to have an upper bound on the probability that a random matching $\mathcal{M}$ is a $t$-good matching or a $t$-self good matching. We upper bound these probabilities in the following claim. The proof is a straightforward counting argument with standard binomial estimations. We omit the proof here and it can be found in \Cref{app:good-matching}.\\

\noindent
\begin{restatable}[Upper bound on probability of (self) good matchings]{claim}{upperboundgoodmatching}\label{claim:upper-bound-good-matching}
Consider the complete bipartite graph $K_{n/2,n/2}$ on $L \bigcup R$ where $L = \set{1,3,\ldots,n-1}$ and $R = \set{2,4,\ldots,n}$. Let $\tau = n^{\delta}$ for a sufficiently small $\delta > 0$. Then,
\begin{align*}
    \Pr_{\mathcal{M}}[\mathcal{M} \; \text{ is a $t$-good matching}] \; \leq \; \dfrac{1}{n^{\Omega(t)}}, \quad \quad \text{ for all } \; t \leq \tau, 
\end{align*}
where the above probability is over the choice of a uniformly random matching $\mathcal{M}$. Also,
\begin{align*}
    \Pr_{\mathcal{M}}[\mathcal{M} \; \text{ is a $t$-self good matching}] \; \leq \; \dfrac{1}{n^{\Omega(t)}}, \quad \quad \text{ for all } \; t > \tau,
\end{align*}
where the above probability is over the choice of a uniformly random matching $\mathcal{M}$.
\end{restatable}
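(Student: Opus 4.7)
The plan is to prove each bound by identifying a simple combinatorial condition that any $t$-good (respectively $t$-self good) matching must satisfy, and then bounding the probability of that condition by a direct union bound over partial matchings.

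\textbf{Bounding $t$-good matchings.} Let $L' := \set{1,3,\ldots,2t-1}$, $R' := \set{2,4,\ldots,2t}$, and call an edge of $\mathcal{M}$ a ``top edge'' if it lies entirely in $L' \times R'$. The structural step is to observe that any $t$-good $\mathcal{M}$ contains at least $\lceil t/2\rceil$ top edges. Indeed, for every $i \in [t]$, the $t$-good condition exhibits a top edge $e_i \in \mathcal{M}$ (the edge incident to $2i-1$ if $\mathcal{M}(2i-1)\in R'$, and the edge incident to $2i$ otherwise). Each top edge has a single vertex in $L'$ and a single vertex in $R'$, so any given top edge can serve as $e_i$ for at most one $i$ via its $L$-endpoint $2i-1$ and at most one $i$ via its $R$-endpoint $2i$; hence the $t$ witnesses $e_1,\ldots,e_t$ collapse to at least $\lceil t/2\rceil$ distinct edges. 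A union bound over all partial matchings of size $\lceil t/2\rceil$ in $L'\times R'$ then gives
\[
\Pr[\mathcal{M} \text{ is } t\text{-good}] \;\le\; \binom{t}{\lceil t/2\rceil}^{2}\cdot \lceil t/2\rceil! \cdot \frac{(n/2-\lceil t/2\rceil)!}{(n/2)!} \;=\; \frac{\binom{t}{\lceil t/2\rceil}^{2}}{\binom{n/2}{\lceil t/2\rceil}}.
\]
Using $\binom{t}{\lceil t/2\rceil}\le 2^{t}$ and $\binom{n/2}{m}\ge (n/(2m))^{m}$, this bound simplifies to at most $(Ct/n)^{t/2}$ for an absolute constant $C$, which is $n^{-\Omega(t)}$ whenever $t\le n^{\delta}$ for sufficiently small $\delta>0$.

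\textbf{Bounding $t$-self good matchings.} Define $I := \set{i \in [t] : \mathcal{M}(2i-1) = 2i}$. The self-good condition asserts that $I$ intersects every subset of $[t]$ of size $t/2$, which is equivalent to $|I| > t/2$: if $|I|\le t/2$, then the complement $[t]\setminus I$ has size at least $t/2$ and yields a violating $T$. Hence $\mathcal{M}$ must contain at least $\lfloor t/2\rfloor+1$ of the $t$ diagonal edges $\set{(2i-1,2i)}_{i\in [t]}$. A union bound over the choice of these forced diagonal edges gives
\[
\Pr[\mathcal{M} \text{ is } t\text{-self good}] \;\le\; \binom{t}{\lfloor t/2\rfloor+1} \cdot \frac{(n/2-\lfloor t/2\rfloor-1)!}{(n/2)!} \;\le\; 2^{t}\cdot \paren{\frac{4}{n}}^{t/2},
\]
which is $n^{-\Omega(t)}$ for every $t\le n/2$ and $n$ sufficiently large, and in particular in the regime $t > \tau$.

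\textbf{Main obstacle.} The arithmetic above is routine binomial-factorial algebra; the real content of the proof lies in getting the two structural claims right: the ``each top edge witnesses at most two indices'' double-counting in the first bound, and the reformulation of the universal size-$t/2$ quantifier as a lower bound on the number of diagonal fixed points in the second. Minor care is needed for parity when $t$ is odd, but it does not affect the asymptotic rate.
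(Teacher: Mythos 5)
Your proof is correct, and your argument for the $t$-good bound takes a genuinely cleaner route than the paper's. The paper counts $t$-good matchings by case-splitting on the number $k \in [0,t/2]$ of indices $i$ for which $\mathcal{M}(2i-1) \notin R'$, writing the probability as the sum
\[
\sum_{k=0}^{t/2} \frac{\binom{t}{k}\binom{t-k}{k}(t-k)!(n/2-t+k)!}{(n/2)!},
\]
then bounding each factor crudely by $t^{t}$ and summing. Your observation --- that goodness yields a witness top edge $e_i$ for each $i\in[t]$ and that each top edge has exactly one $L'$-endpoint and one $R'$-endpoint, hence is $e_i$ for at most two indices, so there are at least $\lceil t/2\rceil$ distinct top edges --- replaces this enumeration with a single clean union bound over size-$\lceil t/2\rceil$ partial matchings in $L'\times R'$, avoiding the case decomposition entirely while yielding a bound of the same quality. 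It also makes the underlying combinatorial reason transparent. Your treatment of the self-good part is essentially identical to the paper's: both reduce the universal quantifier over size-$t/2$ sets $T$ to a lower bound on the number of diagonal fixed points $\mathcal{M}(2i-1)=2i$, then union bound; you are slightly more precise in noting the threshold is $|I|>t/2$ rather than $|I|\ge t/2$, which is correct (if $|I|=t/2$ one can take $T=[t]\setminus I$ as a violating witness), though the weaker $\ge t/2$ already suffices for the bound the paper records.
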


\paragraph{}Now we have all the essentials with us to prove the \Cref{lemma:eigenvalue-bound}. As mentioned earlier, the idea would be to fix a non-zero coordinate of the eigenvector $f_{t}(\mathbf{x})$ and consider that coordinate in $Wf_{t}$. 
\begin{proof}[Proof of \Cref{lemma:eigenvalue-bound}]
Recall that the rows and columns of the matrix $W$ are indexed by points of $\Boo^{n}_{n/2}$. Fix any particular $t \in \set{1,\ldots,n/2}$.  We know that $f_{t}$ is an eigenvector of the matrix $M$ with eigenvalue $\lambda_{t}$, i.e. for every $\mathbf{u} \in \Boo^{n}_{n/2}$, we have,
\begin{align*}
    (Wf_{t})[\mathbf{u}] \; = \; \lambda_{t} \cdot f_{t}(\mathbf{u}) \quad
    \Rightarrow \quad \mathbb{E}_{\mathbf{v} \sim W(\mathbf{u})}[f_{t}(\mathbf{v})] \; = \; \lambda_{t} \cdot f_{t}(\mathbf{u})
\end{align*}
Fix any $\mathbf{u}$ for which $f_{t}(\mathbf{u}) =1$. For convenience, let $\mathbf{u} = 1010\ldots10$. Then we have,
\begin{align*}
    \lambda_{t} \; = \; \mathbb{E}_{\mathbf{v} \sim W(\mathbf{u})}[f_{t}(\mathbf{v})]
\end{align*}

\paragraph{}We make the following observation about bad matchings.\\

\begin{observation}\label{obs:bad-matching}
Fix $\mathcal{M}$ to be any $t$-bad matching. For simplicity in notation, assume that $\mathcal{M}(1) \notin \set{2,4,\ldots,2t}$ and $\mathcal{M}(2) \notin \set{1,3,\ldots,2t-1}$. Let $\mathcal{M}(2) = (2j-1)$ for some $j > t$. Then we have $v_{1} = u_{1} \oplus a_{1}$ and $v_{2} = u_{2} \oplus a_{j}$. Since $a_{1}$ and $a_{j}$ are mutually independent, this implies that $v_{1}$ and $v_{2}$ are mutually independent too. This implies that the expected value of $(v_{1} - v_{2})$ over the random choice of $\mathbf{a}$ is $0$ (conditioned on a bad matching $\mathcal{M}$). Using the independence of bits in $\mathbf{a}$, we have,
\begin{gather*}
    \mathbb{E}_{\mathbf{v} \sim W(\mathbf{u})}[f_{t}(\mathbf{v}) \, | \, \mathcal{M} \text{ is $t$-bad}] \; =  \; \mathbb{E}_{(a_{1},a_{j})} \; [(v_{1} - v_{2})] \cdot \mathbb{E}_{a_{2},\ldots,a_{n/2} \setminus a_{j}}  \; \brac{\prod_{k>1}^{t} (v_{2k-1} - v_{2k})} \\
    \Rightarrow \mathbb{E}_{\mathbf{v} \sim W(\mathbf{u})}[f_{t}(\mathbf{v}) \, | \, \mathcal{M} \text{ is $t$-bad}] \; = \; 0
\end{gather*}
\end{observation}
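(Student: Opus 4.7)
The proof plan exploits the mutual independence of the random bits $a_1, \ldots, a_{n/2}$ that $\Gamma$ assigns to the distinct matching edges. I would first unpack the $t$-bad hypothesis: there is some index $i \in [t]$ for which $\mathcal{M}(2i-1) \notin \{2,4,\ldots,2t\}$ and $\mathcal{M}(2i) \notin \{1,3,\ldots,2t-1\}$. Relabeling so that $i=1$ (the indices $2,\ldots,t$ play no distinguished role in the factorization), write $\mathcal{M}(1) = 2\ell$ and $\mathcal{M}(2) = 2j-1$ with $\ell, j > t$. Using the natural labeling $i_k = 2k-1$ of $\mathbf{u}^{-1}\{1\}$ for $\mathbf{u} = 1010\cdots 10$, the definition of $\Gamma$ gives $v_{2k-1} = 1 \oplus a_k$ for every $k \in [n/2]$, and in particular $v_2$ is flipped by the bit assigned to the edge $(2j-1, 2)$, namely $v_2 = 0 \oplus a_j = a_j$.

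Next I would track, for each factor $(v_{2k-1} - v_{2k})$ of $f_t(\mathbf{v})$, which coordinates of $\mathbf{a}$ it actually depends on, and verify that $(v_1 - v_2)$ is the unique factor involving either $a_1$ or $a_j$. For $k \in \{2, \ldots, t\}$, the first term $v_{2k-1} = 1 \oplus a_k$ uses a bit different from $a_1$ and $a_j$ (since $k \neq 1$ and $k \le t < j$). For the second term $v_{2k}$, I would argue that $\mathcal{M}(2k) \neq 1$ because $\mathcal{M}(1) = 2\ell$ with $\ell > t \ge k$, and that $\mathcal{M}(2k) \neq 2j-1$ because $\mathcal{M}(2j-1) = 2$ while $2k \neq 2$ for $k \ge 2$. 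Hence $v_{2k}$ is controlled by some bit $a_m$ with $m \notin \{1, j\}$.

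Finally, by mutual independence of the coordinates of $\mathbf{a}$, I would split the expectation as
\[\mathbb{E}_{\mathbf{a}}\left[\prod_{k=1}^t (v_{2k-1} - v_{2k})\right] \;=\; \mathbb{E}_{a_1, a_j}[v_1 - v_2] \,\cdot\, \mathbb{E}\left[\prod_{k=2}^t (v_{2k-1} - v_{2k})\right],\]
and observe that $v_1 = 1 \oplus a_1$ and $v_2 = a_j$ are independent uniform bits in $\{0,1\}$, each with mean $1/2$, so $\mathbb{E}[v_1 - v_2] = 0$ and the full product vanishes.

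The only real subtlety is the bookkeeping of the middle paragraph: one must invoke the defining property of a matching to ensure that no other factor $(v_{2k-1} - v_{2k})$ secretly reintroduces a dependence on $a_1$ or $a_j$ through its second coordinate. Once this is checked, the rest is a one-line independence computation.
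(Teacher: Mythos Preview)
Your proof is correct and follows essentially the same approach as the paper's observation: identify the ``bad'' pair, note that $v_1$ and $v_2$ are governed by distinct independent bits $a_1$ and $a_j$, and factor the expectation so that $\mathbb{E}[v_1 - v_2] = 0$ kills the whole product. You actually go a bit further than the paper by explicitly verifying (via the matching property) that no other factor $(v_{2k-1} - v_{2k})$ for $k \in \{2,\ldots,t\}$ depends on $a_1$ or $a_j$, a step the paper leaves implicit in the phrase ``using the independence of bits in $\mathbf{a}$.''
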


\paragraph{}This gives us the following:
\begin{align*}
    \lambda_{t} \; = \; \mathbb{E}_{\mathbf{v} \sim W(\mathbf{u})}[f_{t}(\mathbf{v}) \, | \, \mathcal{M} \, \text{ is $t$-good}] \cdot \Pr_{\mathcal{M}}[\mathcal{M} \, \text{ is $t$-good}] \, + \, \mathbb{E}_{\mathbf{v} \sim W(\mathbf{u})}[f_{t}(\mathbf{v}) \, | \, \mathcal{M} \, \text{ is $t$-bad}] \cdot \Pr_{\mathcal{M}}[\mathcal{M} \, \text{ is $t$-bad}]
\end{align*}
From \Cref{obs:bad-matching}, we know that the expected value of $f_{t}(\mathbf{v})$ for $\mathbf{v} \in N(\mathbf{u})$ conditioned on a bad matching $\mathcal{M}$ is $0$.
This gives us
\begin{align*}
    \lambda_{t} \;  = \; \mathbb{E}_{\mathbf{v} \sim W(\mathbf{u})}[f_{d}(\mathbf{v}) \, | \, \mathcal{M} \, \text{ is $t$-good}] \cdot \Pr_{\mathcal{M}}[\mathcal{M} \, \text{ is $t$-good}]
\end{align*}

\paragraph{Case 1 - $t \leq \tau$:}Note that the expectation is over a random choice of $\mathbf{a} \sim \Boo^{n/2}$. For any $\mathbf{v} \in \Boo^{n}_{n/2}$, $f_{t}(\mathbf{v}) \in \set{-1,0,1}$. In other words, the absolute value of the expectation is at most $1$. Using this, we now have,
\begin{align*}
    \lambda_{t} \; \leq \; |\mathbb{E}_{\mathbf{v} \sim W(\mathbf{u})}[f_{t}(\mathbf{v}) \, | \, \mathcal{M} \, \text{ is $t$-good}]| \cdot \Pr_{\mathcal{M}}[\mathcal{M} \, \text{ is $t$-good}] \; \leq \; \Pr_{\mathcal{M}}[\mathcal{M} \, \text{ is $t$-good}]
\end{align*}
For $t \leq \tau$, \Cref{claim:upper-bound-good-matching} implies that $\lambda_{t} \leq 1/n^{\Omega(t)}$. This shows the first item of \Cref{lemma:eigenvalue-bound}.

\paragraph{Case 2 - $t > \tau$:}We make the following observation about non-self good matchings.\\

\begin{observation}\label{obs:self-good-matching}
Fix a $t$ non-self good matching $\mathcal{M}$. By definition of $t$-self good matchings, we know that there exists a set $T \subseteq [t]$ of size $t/2$ such that for every $i \in T$, $\mathcal{M}(2i-1) \neq 2i$. Assume without loss of generality that $T = [t/2]$ and $\mathcal{M}(2i) = (2j_{i}-1)$ for $i \in T$. Note that for every $i \in T$, if $a_{i} \neq a_{j_{i}}$, then $f_{t}(\mathbf{x}) = 0$. In other words, for $f_{t}$ to be non-zero, it is necessary that for every $i \in [t/2]$, $a_{i} = a_{j_{i}}$. This implies that $f_{t}$ is non-zero for at most $\frac{1}{2^{t/4}}$ choices of $\mathbf{a}$.
\end{observation}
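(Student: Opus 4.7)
The plan is to trace how each factor $(v_{2i-1} - v_{2i})$ of $f_t(\mathbf{v})$ depends on $\mathbf{a}$, extract a system of $\F_2$-linear constraints on $\mathbf{a}$, and then bound its rank below by $t/4$ via a short graph-theoretic argument. With the proof's fixed choice $\mathbf{u} = 1010\cdots 10$ and the enumeration $i_k = 2k-1$ of $\mathbf{u}^{-1}\{1\}$, the bit $a_k$ controls the matching edge incident to $2k-1$, so $v_{2k-1} = 1 \oplus a_k$. For $i \in T = [t/2]$, the hypothesis $\mathcal{M}(2i) = 2j_i - 1$ says the matching edge at $2i$ is indexed by $j_i$, giving $v_{2i} = 0 \oplus a_{j_i} = a_{j_i}$. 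Thus the $i$-th factor equals $(1 \oplus a_i) - a_{j_i}$, which is zero iff $a_i \neq a_{j_i}$. Consequently, $f_t(\mathbf{v}) \neq 0$ forces the $t/2$ equations $a_i = a_{j_i}$ ($i \in [t/2]$) over $\F_2$ to hold; the number of $\mathbf{a}$ satisfying them equals $2^{n/2 - r}$, where $r$ is the $\F_2$-rank of the system.

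It remains to prove $r \geq t/4$. Encode the equations as a multigraph $H$ with vertex set $[n/2]$ and edges $\{i, j_i\}$ for $i \in [t/2]$; standard linear algebra gives $r = |V(H)| - C(H)$, where $|V(H)|$ counts only vertices appearing in some edge. Two structural facts drive the bound. First, $j_i \neq i$, because for $i \in T$ non-self-goodness gives $\mathcal{M}(2i-1) \neq 2i$, which by bijectivity of $\mathcal{M}$ forces $\mathcal{M}(2i) \neq 2i-1$; hence $H$ has no self-loops. Second, the $j_i$ are pairwise distinct because $\mathcal{M}$ is a bijection. Thus in the directed version of $H$ every vertex has in-degree at most $1$ and every $[t/2]$-vertex has out-degree exactly $1$. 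In particular, any connected component $K$ containing $v_K$ vertices of $[t/2]$ has exactly $v_K$ edges, and since a connected graph on $m$ vertices requires at least $m-1$ edges, $K$ has at most $v_K + 1$ vertices overall, i.e., at most one vertex from outside $[t/2]$.

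The rank bound now follows from a per-component case analysis. If $K$ is internal (no outside vertex), then $v_K \geq 2$ (no self-loops) and $K$ has $v_K$ edges on $v_K$ vertices, so it is unicyclic and contributes $v_K - 1 \geq v_K/2$ to $r$. If $K$ has exactly one outside vertex, then $K$ is a tree on $v_K + 1$ vertices contributing $v_K \geq v_K/2$ to $r$. Summing, $r \geq \tfrac{1}{2}\sum_K v_K = \tfrac{1}{2} \cdot \tfrac{t}{2} = t/4$, yielding the claimed $1/2^{t/4}$ bound. The main obstacle is identifying the right graph model and establishing the ``at most one external vertex per component'' property; the extremal configuration saturating the bound consists of $t/4$ mutual pairs $\{i,k\} \subseteq [t/2]$ with $j_i = k$ and $j_k = i$, each contributing a rank-$1$ pair of parallel edges.
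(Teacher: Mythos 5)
Your proof is correct. The paper's observation asserts the $1/2^{t/4}$ bound essentially without justification, stopping at the remark that the $t/2$ equalities $a_i=a_{j_i}$ ($i\in[t/2]$) must hold. You supply exactly the missing argument: the constraints are the edge-equalities of a multigraph $H$ on $[n/2]$, whose $\F_2$-rank is $|V(H)|-C(H)$, and you correctly extract the two structural facts ($j_i\ne i$ from non-self-goodness plus bijectivity of $\mathcal{M}$, and distinctness of the $j_i$'s from injectivity) that force every vertex to have in-degree $\le 1$ and every $[t/2]$-vertex out-degree exactly $1$. This in turn gives at most one external vertex per component, so each component contributes at least half its edge count to the rank, yielding $r\ge t/4$. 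The per-component rank count is the right tool here: a naive greedy/matching argument (choosing pairwise disjoint edges $\{i,j_i\}$) would not reach $t/4$ in general, since a $3$-cycle has maximum matching $1$ but rank $2$; your accounting via $|V|-C$ correctly captures the extra rank in unicyclic components and is tight against the $2$-cycle extremal configuration you identify.
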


\noindent
Using \Cref{obs:self-good-matching}, we get the following upper bound on the expected value conditioned on a $t$ non-self good matching $\mathcal{M}$:
\begin{align*}
     \mathbb{E}_{\mathbf{v} \sim W(\mathbf{u})}[f_{t}(\mathbf{v}) \, | \, \mathcal{M} \; \text{is $t$ non-self good}] \; \leq \; |\mathbb{E}_{\mathbf{v} \sim W(\mathbf{u})}[f_{t}(\mathbf{v}) \, | \, \mathcal{M} \; \text{is $t$ non-self good}]| \; \leq \; \dfrac{1}{2^{t/4}}
\end{align*}
Thus finally we have,
\begin{align*}
    \lambda_{t} \; = \;  \mathbb{E}_{\mathbf{v} \sim W(\mathbf{u})}[f_{t}(\mathbf{v}) \, | \, \mathcal{M} \, \text{ is $t$-good}] \cdot \Pr_{\mathcal{M}}[\mathcal{M} \, \text{ is $t$-good}] \\ \\
    = \; \mathbb{E}_{\mathbf{v} \sim W(\mathbf{u})}[f_{t}(\mathbf{v}) \, | \, \mathcal{M} \, \text{ is $t$-self good}] \cdot \Pr_{\mathcal{M}}[\mathcal{M} \, \text{ is $t$-self good}] \\ + \,  \mathbb{E}_{\mathbf{v} \sim W(\mathbf{u})}[f_{t}(\mathbf{v}) \, | \, \mathcal{M} \, \text{ is $t$ non-self good}] \cdot \Pr_{\mathcal{M}}[\mathcal{M} \, \text{ is $t$ non-self good}] \\ \\
    \leq \; \Pr_{\mathcal{M}}[\mathcal{M} \, \text{ is $t$-self good}] \, + \,  \mathbb{E}_{\mathbf{v} \sim W(\mathbf{u})}[f_{t}(\mathbf{v}) \, | \, \mathcal{M} \, \text{ is $t$ non-self good}] \\
    \Rightarrow \; \lambda_{t} \; \leq \;  \dfrac{1}{2^{t/4}} \, + \, \dfrac{1}{n^{\Omega(t)}} \; \leq \; \dfrac{1}{2^{n^{\Omega(1)}}}
\end{align*}
This shows the second item of \Cref{lemma:eigenvalue-bound} and completes the proof of \Cref{lemma:eigenvalue-bound}.

\end{proof}

We are now ready to prove our main lemma (\Cref{lemma:main-informal}) in the setting when $d \leq n^{\kappa}$. We recall the statement below.\\

\mainlemma*

\begin{proof}[Proof of \Cref{lemma:main-informal} for $d \leq n^{\gamma}$]
Let $\mathds{1}_{S}$ denote the $n'$-dimensional characteristic vector for the subset $S$. Then,
\begin{align*}
    \Pr_{\substack{\mathbf{x} \sim U_{n,n/2} \\ \mathbf{y} \sim W(\mathbf{x}) }}[\mathbf{x} \in S \; \text{ and } \; \mathbf{y} \in S] \; = \; \langle \mathds{1}_{S}, \; W \mathds{1}_{S} \rangle,
\end{align*}
where $\langle f,g \rangle = \mathbb{E}_{\mathbf{x} \sim U_{n,n/2}}[f(\mathbf{x}) g(\mathbf{x})]$. Let $\mathcal{V}_{n,0},\ldots,\mathcal{V}_{n,n/2}$ be the orthogonal basis for the space of functions on $\Boo^{n}_{n/2}$ as stated in \Cref{lemma:bose-mesner-eigenspaces}. Let $\mathds{1}_{S}^{=t}$ denote the component of $\mathds{1}_{S}$ in the $t^{th}$ eigenspace $\mathcal{V}_{n,t}$. We have,
\begin{align*}
    W \mathds{1}_{S} \; = \; \sum_{t=0}^{n/2+1} W \mathds{1}_{S}^{=t} \; = \; \sum_{t=0}^{n/2+1}\lambda_{t} \mathbf{1}_{S}^{=t},
\end{align*}
where for the final equality we used the fact that $\mathds{1}_{S}^{=t}$ is an eigenvector for $W$ (first item of \Cref{lemma:bose-mesner-eigenspaces}). Using this, we have,
\begin{align*}
    \langle \mathds{1}_{S}, \; W \mathds{1}_{S} \rangle \; = \; \left\langle \sum_{t=0}^{n/2+1} \mathds{1}_{S}^{=t}, \; \sum_{t=0}^{n/2+1} \lambda_{t} \mathds{1}_{S}^{=t} \right\rangle \; = \; \sum_{t=0}^{n/2+1} \lambda_{t} \lVert \mathds{1}_{S}^{=t} \rVert_{2}^{2},
\end{align*}
where for the final equality we used the orthogonality of $\mathcal{V}_{n,t}$'s.

\paragraph{}Using the orthogonality $\mathcal{V}_{n,t}$'s, we have,
\begin{align*}
    \lVert T_{\rho} \mathds{1}_{S} \rVert_{2}^{2} \; = \; \sum_{t=0}^{n/2+1} \; \rho^{2t \paren{1 - \frac{t-1}{n}}} \, \lVert \mathds{1}_{S}^{=t} \rVert_{2}^{2}
\end{align*}
Let $\tau := n^{\delta}$ for small enough $\delta > 0$ such that \Cref{lemma:eigenvalue-bound} holds. If we have $\rho = 1/n^{\delta'}$ for a small enough $\delta'$ depending on $\delta$, then for $t \leq \tau$, we have,
\begin{align*}
    \lambda_{t} \, \leq \, \rho^{2t} \, \leq \, \rho^{2t \paren{1 - \frac{t-1}{n}}}
\end{align*}
This implies that for $t \leq \tau$, we have,
\begin{equation}
    \sum_{t=0}^{\tau} \lambda_{t} \lVert \mathds{1}_{S}^{=t} \rVert_{2}^{2} \; \leq \; \sum_{t=0}^{\tau} \; \rho^{2t \paren{1 - \frac{t-1}{n}}} \, \lVert \mathds{1}_{S}^{=t} \rVert_{2}^{2} \; \leq \; \lVert T_{\rho} \mathds{1}_{S} \rVert_{2}^{2}
\end{equation}
For $t > \tau$, we have,
\begin{equation}
    \sum_{t=\tau+1}^{n/2+1} \lambda_{t} \lVert \mathds{1}_{S}^{=t} \rVert_{2}^{2} \; \leq \; \sum_{t=\tau}^{n/2+1} \dfrac{1}{2^{n^{\Omega(1)}}} \lVert \mathds{1}_{S}^{=t} \rVert_{2}^{2} \; \leq \; n \cdot \dfrac{1}{2^{n^{\Omega(1)}}},
\end{equation}
where we upper bounded $\lVert \mathds{1}_{S}^{=t} \rVert_{2}^{2}$ by $1$. Combining these two together, we get,
\begin{align*}
    \sum_{t=0}^{n/2+1} \lambda_{t} \lVert \mathds{1}_{S}^{=t} \rVert_{2}^{2} \; \leq \; \lVert T_{\rho} \mathds{1}_{S} \rVert_{2}^{2} \, + \, \dfrac{n}{2^{n^{\Omega(1)}}}
\end{align*}
Now applying the hypercontractivity theorem for the noise operator $T_{\rho}$, we get,
\begin{align*}
    \lVert T_{\rho} \mathds{1}_{S} \rVert_{2} \; \leq \; \lVert \mathds{1}_{S} \rVert_{p},
\end{align*}
where $1/(p-1) \leq \exp{(c \log 1/\rho)} = \exp{(c \delta' \log n)}$.
We also have that for any $p$, the norm $\lVert \mathds{1}_{S} \rVert_{p}$ is equal to $(|S|/n')^{1/p}$. Using this, we get,
\begin{align*}
     \lVert T_{\rho} \mathds{1}_{S} \rVert_{2}^{2} \; \leq \; \paren{\dfrac{|S|}{n'}}^{2/p} \; = \; \paren{\dfrac{|S|}{n'}}^{2 (1-1/n^{\kappa})},
\end{align*}
for some constant $\kappa$ depending on the constant $c$ from the hypercontractive inequality \Cref{lemma:hypercontractive-log-Sobolev} and $\rho$. Plugging this back in, we have,
\begin{gather*}
    \Pr_{\substack{\mathbf{x} \sim U_{n,n/2} \\ \mathbf{y} \sim W(\mathbf{x}) }}[\mathbf{x} \in S \; \text{ and } \; \mathbf{y} \in S] \; = \; \langle \mathds{1}_{S}, \; W \mathds{1}_{S} \rangle 
    = \; \sum_{t=0}^{n/2+1} \lambda_{t} \lVert \mathds{1}_{S}^{=t} \rVert_{2}^{2}\\ 
    \leq \; \paren{\dfrac{|S|}{n'}}^{2 (1-1/n^{\kappa})} \, + \, \dfrac{1}{2^{n^{\Omega(1)}}} = \paren{\dfrac{|S|}{n'}}^{2}\cdot \left(\paren{\dfrac{|S|}{n'}}^{-2/n^{\kappa}} + \frac{(n'/|S|)^2}{2^{n^{\Omega(1)}}}\right).
\end{gather*}
From the hypothesis of \Cref{lemma:main-informal}, we know that $|S|/n' \geq 4^{-d}$. Using this in the parenthetical term above, we get
\begin{align*}
    \Pr_{\substack{\mathbf{x} \sim U_{n,n/2} \\ \mathbf{y} \sim W(\mathbf{x}) }}[\mathbf{x} \in S \; \text{ and } \; \mathbf{y} \in S] \; \leq \; \paren{\dfrac{|S|}{n'}}^{2}\cdot 
    \left(4^{\bigO(d/n^\kappa)} + \frac{4^{\bigO(d)}}{2^{n^{\Omega(1)}}}
    \right) \;
    \leq \; \paren{\dfrac{|S|}{n'}}^{2} \cdot \paren{1 + \dfrac{1}{n^{\varepsilon}}}.
\end{align*}
for a small enough absolute constant $\varepsilon > 0$ as long as $d\leq n^\gamma$ for a small enough absolute constant $\gamma > 0.$ This concludes the proof of \Cref{lemma:main-informal}.
\end{proof}

\subsection{Putting Everything Together}\label{sec:everything}
We now use the above bounds to show that over the balanced slice (the set of points with Hamming weight $n/2$), we have the optimal distance lemma for low-degree polynomials. The main result of this section is the following lemma.




\begin{thmbox}
\begin{theorem}[Distance lemma over the balanced slice]\label{lemma:balanced-slice}
There exists an absolute constant $\varepsilon > 0$ so that the following holds. Fix an arbitrary Abelian group $G$ and fix a degree parameter $d \in \mathbb{N}$ where $d \leq n^{\varepsilon}$. For every even natural number $n$, and for every non-zero degree-$d$ polynomial $P(\mathbf{x}) \in \mathcal{P}_{d}(n,n/2, G)$,
\begin{align*}
    \Pr_{\mathbf{x} \sim U_{n,n/2}  }[P(\mathbf{x}) \neq 0] \; \geq \; \dfrac{1}{2^{d}} \cdot \paren{1 - \dfrac{1}{n^{\Omega(1)}} }
\end{align*}
\end{theorem}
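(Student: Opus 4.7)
The plan is to directly combine the two technical lemmas that have been set up in this section, namely the upper bound on the two-step collision probability given by \Cref{lemma:main-informal} and the matching lower bound given by \Cref{lem:lower-bd}. Let $S\subseteq \Boo^{n}_{n/2}$ denote the set of non-zeros of $P$ and let $\rho := |S|/\binom{n}{n/2}$ be its density. The theorem is then equivalent to showing that $\rho \ge 2^{-d}(1-n^{-\Omega(1)})$.

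First I would verify that $\rho$ is large enough to invoke the main lemma. By the suboptimal distance lemma (\Cref{lemma:suboptimal-DLSZ-slice}) applied to the balanced slice, we already know that $\rho \ge \binom{n-2d}{n/2-d}/\binom{n}{n/2} \ge 4^{-d}$, which is exactly the hypothesis needed for \Cref{lemma:main-informal}. Next, I would apply \Cref{lemma:main-informal} (valid for $d \le n^{\gamma}$ via the Johnson-scheme proof, or for $d\le C\log n$ via the Cayley-graph proof) to obtain
\begin{equation*}
    \Pr_{\substack{\mathbf{x}\sim U_{n,n/2}\\ \mathbf{y}\sim W(\mathbf{x})}}[\mathbf{x}\in S \text{ and } \mathbf{y}\in S] \;\le\; \rho^{2}\paren{1+\frac{1}{n^{\varepsilon}}}.
\end{equation*}
In parallel, I would apply \Cref{lem:lower-bd}, which (by restricting $P$ to a random subcube inside the balanced slice and invoking the standard ODLSZ lemma, \Cref{thm:DLSZ}) gives the lower bound
\begin{equation*}
    \Pr_{\substack{\mathbf{x}\sim U_{n,n/2}\\ \mathbf{y}\sim W(\mathbf{x})}}[\mathbf{x}\in S \text{ and } \mathbf{y}\in S] \;\ge\; \rho\cdot \frac{1}{2^{d}}.
\end{equation*}

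Chaining the two inequalities yields $\rho\cdot 2^{-d} \le \rho^{2}(1+n^{-\varepsilon})$, and dividing by $\rho$ (which is positive, since $P$ is a non-zero polynomial on the slice) gives
\begin{equation*}
    \rho \;\ge\; \frac{1}{2^{d}}\cdot \frac{1}{1+n^{-\varepsilon}} \;\ge\; \frac{1}{2^{d}}\paren{1-\frac{1}{n^{\Omega(1)}}},
\end{equation*}
which is exactly the desired bound. The degree restriction $d\le n^{\varepsilon}$ of \Cref{lemma:balanced-slice} inherits from the corresponding restriction in \Cref{lemma:main-informal}, and the hypothesis on $G$ being an arbitrary Abelian group is inherited from \Cref{lem:lower-bd} (since the indicator set $S$ makes sense regardless of the target group, and the lower bound is proved via ODLSZ over $G$ applied to the restriction $Q$ of $P$ to a random subcube).

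There is essentially no new obstacle at this stage, since all the difficult content has been absorbed into \Cref{lemma:main-informal} (the upper bound on the second-largest eigenvalue / the hypercontractive strengthening) and into the applicability of the suboptimal bound to guarantee $\rho \ge 4^{-d}$. The only subtle point to double-check in writing up the proof is that the same constant $\varepsilon$ can be chosen so that both the degree restriction and the error term match; this can be arranged by taking $\varepsilon$ to be the smaller of the constants furnished by \Cref{lemma:main-informal}.
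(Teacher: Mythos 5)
Your proposal is correct and follows essentially the same route as the paper: invoke \Cref{lemma:suboptimal-DLSZ-slice} to certify the density hypothesis $\rho \ge 4^{-d}$ of \Cref{lemma:main-informal}, sandwich the two-step collision probability between the lower bound of \Cref{lem:lower-bd} and the upper bound of \Cref{lemma:main-informal}, and divide by $\rho > 0$. The only small addition you make beyond the paper's terse write-up is the explicit remark that the ODLSZ-based argument of \Cref{lem:lower-bd} goes through for an arbitrary Abelian group even though that lemma is stated over $\mathbb{R}$ — a worthwhile clarification, since the indicator set $S$ and hence the sandwich argument depend only on where $P$ is non-zero, not on the target group.
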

\end{thmbox}

\begin{proof}[Proof of \Cref{lemma:balanced-slice}]
Letting $S \subseteq \{0,1\}^n_{n/2}$ denote the set of points on the balanced slice on which $P$ evaluates to a non-zero value. From \Cref{lemma:suboptimal-DLSZ-slice}, the set $S$ satisfies the density lower required in \Cref{lemma:main-informal}. Combining~\Cref{lemma:main-informal} and~\Cref{lem:lower-bd}, we obtain
\begin{align*}
    \dfrac{|S|}{\binom{n}{n/2}} \cdot \dfrac{1}{2^d} \le \Pr_{\substack{\mathbf{x} \sim U_{n,n/2} \\ \mathbf{y} \sim W(\mathbf{x})}}[\mathbf{x} \in S \; \text{ and } \; \mathbf{y} \in S] \le \paren{\dfrac{|S|}{\binom{n}{n/2}}}^2 \cdot \paren{1+\frac{1}{n^\varepsilon}},
\end{align*}
 where $\varepsilon$ is a sufficiently small constant. Hence, $|S|/\binom{n}{n/2} \; \ge \; \frac{1}{2^d}\cdot \paren{1-\frac{1}{n^{\Omega(1)}}}$.
\end{proof}

\section{Arbitrary Slices} \label{sec:all-slices}
In this section we prove a distance lemma for slices over arbitrary Abelian groups $G$. As discussed in the proof overview, the proof has three key steps:
\begin{enumerate}
    \item First, we prove a distance lemma over cyclic groups of prime power order for some fixed set of slices, which we refer to as ``good'' slices. We prove this in \Cref{lemma:good-slice-positive-char}.
    \item Secondly, we prove a distance lemma over cyclic groups of prime power order for any slice by reducing it to one of the good slices. We prove this in \Cref{lem:all-slices}.
    \item In the end, we show that any Abelian group can be assumed to be a finitely generated Abelian group. To prove the distance lemma for a finitely generated Abelian group, it suffices to have the distance lemma over cyclic groups of prime power order, which we prove in \Cref{lemma:DLSZ-slice-finite-fields}.
\end{enumerate}

We will start by proving a distance lemma for slices over cyclic groups of prime power order.

\subsection{Cyclic Groups of Prime Power Order}
In this subsection we will prove the distance lemma for slices over $\mathbb{Z}_{p^{\ell}}$ for some prime $p$ and natural number $\ell$. The main result of this subsection is the following lemma.

\begin{thmbox}
\begin{restatable}[Distance lemma for cyclic groups]{lemma}{cyclicgroups}\label{lemma:DLSZ-slice-finite-fields}
There exists an absolute constant $\varepsilon > 0$ so that the following holds. Fix a cyclic group $\mathbb{Z}_{q}$ where $q = p^{\ell}$ for some prime $p$ and a degree parameter $d \in \mathbb{N}$. For all natural numbers $n$ and $k$ such that $1+d\le k^\varepsilon$ and $k\leq n/2$, the following is true.\newline
For a non-zero degree-$d$ polynomial $P: \Boo^{n}_{k} \to \mathbb{Z}_{q}$, 
\begin{align*}
    \Pr_{\mathbf{x} \sim U_{n,k}}[P(\mathbf{x}) \neq 0] \; \geq \; \alpha^{d} \paren{1- \dfrac{1}{k^{\Omega(1)}}}, \quad \quad \text{ where } \alpha := \frac{k}{n}.
\end{align*}
\end{restatable}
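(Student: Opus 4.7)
The plan is to reduce the lemma to the balanced slice distance lemma (\Cref{lemma:balanced-slice}) in two stages, using the auxiliary notion of a \emph{good} slice. A slice $(n,k)$ is called good (for degree $d$ over $\mathbb{Z}_{p^\ell}$) if the homogeneous multilinear monomials of degree at most $d$ form a basis for the space of polynomial functions on $\{0,1\}^n_k$. Via an argument essentially due to Wilson, extended to positive characteristic using Lucas's theorem (\Cref{lemma:lucas}), this is captured by the arithmetic condition $\binom{k}{j} \not\equiv 0 \pmod p$ for every $j \leq d$, as \Cref{claim:homogeneous-spanning-positive-char} asserts.

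For the first stage (good slices; to be captured in \Cref{lemma:good-slice-positive-char}), I would apply to a non-zero degree-$d$ polynomial $P$ on a good slice $(n,k)$ the random restriction that picks a uniformly random $2k$-subset $T \subseteq [n]$ and sets $x_i = 0$ for each $i \notin T$. The restricted polynomial lives on the balanced slice $\{0,1\}^T_k \cong \{0,1\}^{2k}_k$. Using the basis property, a Schwartz--Zippel-style argument shows that $P|_T \not\equiv 0$ with probability at least $(2k/n)^d(1-o(1))$; morally, this is the probability that a fixed degree-$d$ monomial index set lies inside a random $2k$-subset, namely $\binom{n-d}{2k-d}/\binom{n}{2k}$, and the basis property prevents top-degree cancellations. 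Conditioning on survival, \Cref{lemma:balanced-slice} applied to $\{0,1\}^{2k}_k$ yields an additional factor of $(1-o(1))/2^d$, giving $\Pr_{\mathbf{x} \sim U_{n,k}}[P(\mathbf{x}) \ne 0] \geq (k/n)^d(1-o(1))$ on good slices.

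For the second stage (reducing arbitrary slices to good slices; \Cref{lem:all-slices}), I would use Lucas's theorem to locate a good slice $k' \in [k-O(d),k]$ by adjusting the low-order $p$-ary digits of $k$, which is possible because $1+d \leq k^\varepsilon$ with $\varepsilon$ chosen small enough. Then I would realize $\mathbf{x} \sim U_{n,k}$ by first sampling a uniformly random $m$-subset $S \subseteq [n]$ with $m := k-k'$, setting $x_i = 1$ for $i \in S$, and drawing the remaining coordinates from $U_{n-m, k'}$. For each $S$ with $P_S \not\equiv 0$, Stage 1 applied to the good slice $(n-m, k')$ gives
\begin{equation*}
\Pr_{\mathbf{y} \sim U_{n-m, k'}}[P_S(\mathbf{y}) \ne 0] \;\geq\; \paren{\dfrac{k'}{n-m}}^d(1-o(1)) \;\geq\; \paren{\dfrac{k}{n}}^d \paren{1 - O\paren{\dfrac{d^2}{k}}},
\end{equation*}
where the second inequality uses $k - k' = O(d)$ together with $d^2/k = o(1)$ (guaranteed by taking $\varepsilon < 1/2$). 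Averaging over $S$ yields $\Pr_{\mathbf{x} \sim U_{n,k}}[P(\mathbf{x}) \ne 0] \geq \Pr_S[P_S \not\equiv 0] \cdot (k/n)^d(1-o(1))$.

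The main obstacle I expect is to show $\Pr_S[P_S \not\equiv 0] \geq 1 - 1/k^{\Omega(1)}$. The sub-optimal distance lemma (\Cref{lemma:suboptimal-DLSZ-slice}) directly gives only $\Pr_S[P_S \not\equiv 0] \geq 4^{-d}$, which is too weak. My plan is to bootstrap the good-slice bound by applying Stage 1 to the faces $F_S := \{\mathbf{x} \in \{0,1\}^n_k : x_i = 1 \text{ for all } i \in S\}$ and using a covering/double-counting argument: if $\Pr_S[P_S \equiv 0]$ were not $o(1)$, then $P$ would vanish on a combinatorially large union of face-subcubes, contradicting a lower bound on $|\{\mathbf{x} : P(\mathbf{x}) \ne 0\}|$ obtained by iterating the good-slice bound on smaller subfaces and chaining it with \Cref{lemma:suboptimal-DLSZ-slice}. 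Formalizing this bound carefully while tracking the $(1-o(1))$ factors to arrive at $1 - 1/k^{\Omega(1)}$ is the principal technical challenge of Stage 2.
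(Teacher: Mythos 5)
Your Stage 1 (the good-slice distance lemma via a random restriction to a $2k$-subset and the homogeneous-monomial basis) matches the paper's \Cref{lemma:good-slice-positive-char} and \Cref{claim:homogeneous-spanning-positive-char} closely, and your overall two-stage structure (good slices, then nearby-slice reduction) is the same decomposition the paper uses. The genuine gap is precisely the step you flag as the ``principal technical challenge'' of Stage 2: you have the right target $\Pr_S[P_S \not\equiv 0] \geq 1 - 1/k^{\Omega(1)}$, but the route you sketch (bootstrapping the good-slice bound through faces and a covering/double-counting argument) is not carried out and, as phrased, is circular-looking: you want to apply Stage 1 on faces defined by setting coordinates to $1$, but the slice parameters of those faces need not be good, which was exactly the point of Stage 2. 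The paper closes the gap differently and more cheaply. It sets the $m = k - k'$ coordinates to $1$ \emph{one at a time}, and for a single coordinate $s$ it calls $s$ ``bad'' if every ${\bf a} \in \{0,1\}^n_k$ with $a_s = 1$ has $P({\bf a}) = 0$. Every nonzero of $P$ must then avoid all bad coordinates, so if there are $\ell$ bad coordinates, the number of nonzeroes is at most $\binom{n-\ell}{k}$; comparing this with the \emph{sub-optimal} lower bound $\binom{n-2d}{k-d}$ from \Cref{lemma:suboptimal-DLSZ-slice} (via \Cref{clm:non-roots}) forces $\ell < \lfloor n/\sqrt{k} \rfloor$. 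So each single step survives with probability at least $1 - 1/\sqrt{k}$, and chaining $c = O(d)$ steps gives the $1 - c/n^{0.1}$ error term, with no need to bootstrap Stage 1 inside Stage 2. Two smaller issues: (i) the comparison $\binom{n-\ell}{k} < \binom{n-2d}{k-d}$ fails when $k$ is tiny relative to $n$, so the paper handles $k \leq n^{1/3}$ separately using \Cref{lemma:suboptimal-DLSZ-slice} directly (in that regime $\binom{n-2d}{k-d}/\binom{n}{k}$ is already $(k/n)^d(1-o(1))$), a case split your write-up omits; (ii) your characterization of a good slice as ``$\binom{k}{j} \not\equiv 0 \pmod p$ for all $j \leq d$'' is not the right condition --- what the spanning argument needs is $\binom{k-i}{d-i} \not\equiv 0 \pmod p$ for all $0 \leq i < d$, which is encoded in \Cref{defn:good-slices} via the $p$-ary digits of $k$ and $d$; your condition is strictly stronger (e.g.\ $p=2$, $d=2$, $k=6$ is good in the paper's sense but has $\binom{6}{1}$ even), which could make the ``nearby good slice'' harder to find than it needs to be.
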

\end{thmbox}

We will start with the definition of good slices. The idea is that these slices admit a nice basis which can be used to reduce the problem from a good slice to the balanced slice over a smaller dimensional cube, which we have proved in \Cref{lemma:balanced-slice}.\\

\begin{definition}[Good slices]\label{defn:good-slices}
Fix a prime $p$ and a degree parameter $d \in \mathbb{N}$.
An integer $k\geq d$ is said to be $(d,p)$-\textit{good} if the $p$-ary expansion of $k$ agrees with that of $d$ in all the digits up to the leading digit of $d$, and is greater than equal to $d$ in the leading digit. More formally, if $k = \sum_{j = 0}^m a_j p^j$ and $d = \sum_{j=0}^\ell b_j p^j$ with $a_j, b_j \in \{0,\ldots, p-1\},$ with $b_\ell > 0$, then $a_j = b_j$ for all $j < \ell$ and $a_\ell \geq b_\ell.$
\end{definition}

For a degree parameter $d$, let $\mathcal{H}_{d}$ denote the set of homogeneous monomials in $\set{x_{1},\ldots,x_{n}}$, i.e.
\begin{align*}
    \mathcal{H}_{d} \; = \; \setcond{\prod_{i \in T} x_{i}}{T \subseteq [n], |T| = d}
\end{align*}
Since every $x_{i} \in \Boo$, we only work with multilinear monomials. So for convenience, we will identify monomials with sets and vice-versa and for a set $T \subseteq [n]$, let $x^{T} := \prod_{i \in T} x_{i}$.\newline

\noindent
We will now show that if $k$ is $(d,p)$-good, then the set of degree-$d$ homogeneous multilinear monomials $\mathcal{H}_{d}$ form a `basis' for degree-$d$ polynomials on the $k^{th}$ slice in the following sense.\\

\begin{lemma}
    \label{lem:basis}
    Let $q$ be a power of prime $p$ as above. Fix $n,k,d$ such that $d\le k\leq n-d$ and assume that $k$ is $(d,p)$-good. Then, any function in $\mathcal{P}_d(n,k,\mathbb{Z}_q)$ can be written uniquely as a linear combination of the monomials in $\mathcal{H}_d.$
\end{lemma}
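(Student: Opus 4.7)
The plan is to establish the two required statements separately---(a) that $\mathcal{H}_d$ spans $\mathcal{P}_d(n,k,\mathbb{Z}_q)$ and (b) that the evaluations of the monomials in $\mathcal{H}_d$ on $\{0,1\}^n_k$ are $\mathbb{Z}_q$-linearly independent---and to reduce both to a single arithmetic input: under $(d,p)$-goodness of $k$, every binomial coefficient $\binom{k-s}{d-s}$ for $s \in \{0,1,\ldots,d\}$ is coprime to $p$, hence a unit in $\mathbb{Z}_q$. I would verify this by writing $\binom{k-s}{d-s} = \binom{(k-d)+(d-s)}{d-s}$ and applying Kummer's theorem (equivalently, \Cref{lemma:lucas}): since $a_j = b_j$ for $j < \ell$ and $a_\ell \ge b_\ell$, the base-$p$ digits of $k - d$ vanish below position $\ell$ and equal $a_\ell - b_\ell$ at position $\ell$, while $d - s \le d < p^{\ell+1}$ forces the digit of $d - s$ at position $\ell$ to be at most $\lfloor d / p^\ell \rfloor = b_\ell$. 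Hence no carry arises when adding $k - d$ and $d - s$ in base $p$, and the binomial coefficient is a unit.

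For the spanning statement (a), I would use the pointwise identity
\[
    \binom{k-s}{d-s}\, x^S \;=\; \sum_{\substack{T \subseteq [n]\setminus S \\ |T| = d-s}} x^{S \cup T} \qquad \text{on } \{0,1\}^n_k,
\]
valid for every $S$ of size $s \le d$: evaluating at the indicator of $A \in \binom{[n]}{k}$ with $S \subseteq A$ gives $\binom{k-s}{d-s}$ on both sides (by counting $(d-s)$-subsets of $A \setminus S$), and both sides vanish when $S \not\subseteq A$. Multiplying by the unit $\binom{k-s}{d-s}^{-1} \in \mathbb{Z}_q$ expresses each sub-$d$ monomial $x^S$ as a $\mathbb{Z}_q$-linear combination of monomials in $\mathcal{H}_d$, so an arbitrary representation of $f \in \mathcal{P}_d(n,k,\mathbb{Z}_q)$ can be rewritten term-by-term in $\mathcal{H}_d$.

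For the uniqueness statement (b), I would appeal to Wilson's theorem~\cite{WILSON90} on the inclusion matrix $M^n_{d,k}$ (rows indexed by $d$-subsets, columns by $k$-subsets, with entry $1$ precisely when the row-set is contained in the column-set): its Smith normal form over $\mathbb{Z}$ is diagonal with entries $\binom{k-i}{d-i}$ for $i = 0, 1, \ldots, d$ (with the standard multiplicities summing to $\binom{n}{d}$). By the arithmetic step above, each of these diagonal entries is a unit in $\mathbb{Z}_q$, so $M^n_{d,k}$ has full row rank $\binom{n}{d}$ over $\mathbb{Z}_q$. Since the rows of $M^n_{d,k}$ are exactly the evaluation vectors on $\{0,1\}^n_k$ of the monomials in $\mathcal{H}_d$, $\mathbb{Z}_q$-linear independence is immediate, and combined with (a) this yields unique representation.

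The main obstacle is the arithmetic verification that all the $\binom{k-i}{d-i}$ are units in $\mathbb{Z}_q$---this is the only step where the precise form of the $(d,p)$-good hypothesis enters, and the rest of the argument is a clean assembly of the inclusion identity with Wilson's Smith-form computation.
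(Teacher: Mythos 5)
Your proposal is correct, and it is a genuinely different route from the paper's on the linear-independence half of the argument. The spanning step (a) is essentially the same as the paper's Claim~\ref{claim:homogeneous-spanning-positive-char}: the paper uses Lucas's theorem while you use the equivalent Kummer carry argument, and both hinge on the same inclusion identity and the same fact that each $\binom{k-i}{d-i}$ is a unit in $\mathbb{Z}_q$ under $(d,p)$-goodness. Where you depart is step (b): the paper never proves linear independence of the evaluation vectors directly; instead it establishes the cardinality lower bound $|\mathcal{P}_d(n,k,\mathbb{Z}_q)|\geq q^{\binom{n}{d}}$ (\Cref{lem:dim-wilson}, via a self-contained inductive injection $R = P + x_n Q$) and then concludes uniqueness by a counting argument, since $|\mathcal{H}_d|=\binom{n}{d}$. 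You instead invoke Wilson's Smith normal form for the inclusion matrix $M^n_{d,k}$ as a black box, observe that the invariant factors $\binom{k-i}{d-i}$ are all units in $\mathbb{Z}_q$, and conclude full row rank, hence $\mathbb{Z}_q$-linear independence of the rows; note here you implicitly use (correctly) that $a D = 0$ forces $a = 0$ when every diagonal entry of $D$ is a unit, which handles the fact that $\mathbb{Z}_q$ is not a field. Your argument is more direct and conceptually cleaner, but it relies on Wilson's Smith normal form theorem, which is a substantially heavier import than the paper's short inductive dimension count; the paper's version has the advantage of being fully self-contained (the injection and base case fit in half a page of the appendix), at the cost of being less transparent about why linear independence holds.
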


To prove the above lemma, we will first show that there are at least $q^{\binom{n}{d}}$ distinct degree-$d$ polynomial functions on the slice, and then that the monomials in $\mathcal{H}_d$ span all these functions. Since $|\mathcal{H}_{d}|$ is exactly $\binom{n}{d}$, these two statements immediately implies \Cref{lem:basis}. 

We start with the lower bound on $|\mathcal{P}_d(n,k,\mathbb{Z}_q)|.$ The proof is implicit in the work of Wilson \cite{WILSON90}. For the sake of completeness, we give a proof in \Cref{app:wilson-proof}.\\

\begin{restatable}[Number of degree-$d$ polynomials on the slice]{lemma}{degdslicedimension}\label{lem:dim-wilson}
\cite{WILSON90}. For every degree parameter $d \in \mathbb{N}$ and for every slice parameter $k$ such that $d\leq \min\{k,n-k\}$, the number of distinct degree-$d$ polynomial functions on $\Boo^{n}_{k}$ is at least $q^{\binom{n}{d}}$. 
\end{restatable}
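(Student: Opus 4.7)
The plan is to show that $\mathcal{P}_d(n,k,\mathbb{Z}_q)$ contains a free $\mathbb{Z}_q$-submodule of rank $\binom{n}{d}$. Since $\mathbb{Z}_q = \mathbb{Z}/p^\ell$, this is equivalent to finding $\binom{n}{d}$ evaluation vectors (over $\mathbb{Z}$) of degree-$\le d$ polynomial functions on $\Boo^{n}_{k}$ such that, together with $\binom{n}{d}$ suitably chosen slice points, they yield an $\binom{n}{d}\times \binom{n}{d}$ integer submatrix of the evaluation matrix $E$ with determinant $\pm 1$. Here $E$ has rows indexed by subsets $T\subseteq [n]$ of size at most $d$ and columns indexed by $k$-subsets $K\subseteq [n]$, with entry $[T\subseteq K]$. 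Existence of such a unimodular submatrix directly yields $|\mathcal{P}_d(n,k,\mathbb{Z}_q)| \ge q^{\binom{n}{d}}$ because the associated evaluation vectors generate a free $\mathbb{Z}_q$-module of rank $\binom{n}{d}$ inside $\mathcal{P}_d(n,k,\mathbb{Z}_q)$.

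To find this submatrix I would invoke Wilson's Smith Normal Form theorem for inclusion matrices. Wilson showed that, for each $i\le d$, the inclusion matrix $W_{i,k}$ (the sub-block of $E$ with row-weights exactly $i$) has Smith Normal Form over $\mathbb{Z}$ with diagonal entries $\binom{k-j}{i-j}$ for $j=0,1,\ldots,i$, each appearing with multiplicity $\binom{n}{j}-\binom{n}{j-1}$. The crucial case is $j=i$, which contributes $\binom{k-i}{0}=1$ with multiplicity $\binom{n}{i}-\binom{n}{i-1}$. Aggregating across $i=0,1,\ldots,d$ and using the telescoping identity $\sum_{i=0}^d\bigl(\binom{n}{i}-\binom{n}{i-1}\bigr)=\binom{n}{d}$, this provides exactly enough unit invariant factors to yield the desired $\binom{n}{d}\times\binom{n}{d}$ unimodular submatrix.

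The main obstacle is compatibility across levels: for each $i$, Wilson's theorem gives $\binom{n}{i}-\binom{n}{i-1}$ rows of $W_{i,k}$ whose suitable projections form a unit-determinant minor within that level, but I need these level-$i$ blocks to assemble into a single unimodular minor of the full $E$. This uses the $S_n$-equivariant decomposition of the ambient lattice $\mathbb{Z}^{\binom{n}{k}}$: the unit-invariant contribution coming from $W_{i,k}$ lives in the integer lattice of the irreducible $S_n$-isotypic component $M^{(n-i,i)}$ (of dimension $\binom{n}{i}-\binom{n}{i-1}$), and distinct isotypic components are $\mathbb{Z}$-orthogonal. Making Wilson's diagonalizing basis explicit at the integer level lets one concretely identify the $\binom{n}{d}\times\binom{n}{d}$ submatrix with determinant $\pm 1$, thereby completing the proof.
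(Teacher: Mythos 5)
Your high-level reduction is sound: if the evaluation matrix $E$ (rows indexed by multilinear monomials of degree at most $d$, columns by the slice points, entry $[T\subseteq K]$) admits a $\binom{n}{d}\times\binom{n}{d}$ integer submatrix of determinant $\pm 1$, the associated polynomials generate a free $\mathbb{Z}_q$-submodule of rank $\binom{n}{d}$, which gives the bound. The gap is in the step you yourself flag as the ``main obstacle'': assembling the unit invariant factors of each homogeneous block $W_{i,k}$ ($i=0,\dots,d$) into a single unimodular minor of the stacked matrix $E$. The $S_n$-equivariance argument you offer does not close it. The Smith normal form transformation of $W_{i,k}$ is not $S_n$-equivariant, so there is no a priori reason the unit-invariant part should ``live in'' the $i$-th isotypic piece (also note that $M^{(n-i,i)}$ conventionally denotes the permutation module of dimension $\binom{n}{i}$; the multiplicity $\binom{n}{i}-\binom{n}{i-1}$ is the dimension of the Specht module $S^{(n-i,i)}$). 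More importantly, even if one could exhibit, inside the row lattice of each $W_{i,k}$, a saturated rank-$\bigl(\binom{n}{i}-\binom{n}{i-1}\bigr)$ sublattice $U_i$ contained in the integer points of the $i$-th rational isotypic component, it would not follow that $U_0+\cdots+U_d$ is saturated in $\mathbb{Z}^{\binom{n}{k}}$: sums of saturated sublattices sitting in mutually orthogonal rational subspaces are generically not saturated (in $\mathbb{Z}^2$, the sublattices $\mathbb{Z}(1,1)$ and $\mathbb{Z}(1,-1)$ are each saturated and mutually orthogonal, yet their sum is the index-$2$ even sublattice). ``$\mathbb{Z}$-orthogonality'' of isotypic components is not a lattice direct-sum decomposition, so the closing sentence of your sketch asserts, rather than establishes, exactly the hard content of the claim.

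The paper's proof is much lighter and sidesteps all of this: induction on $(k-d)(n-k-d)$. In the base case $d=\min\{k,n-k\}$ one shows that every function on $\Boo^n_k$ already has degree at most $d$, by realizing the delta functions as $x_1\cdots x_d$ or $(1-x_1)\cdots(1-x_d)$ up to permutation of coordinates. The inductive step is an explicit injection $\mathcal{P}_d(n-1,k,\mathbb{Z}_q)\times\mathcal{P}_{d-1}(n-1,k-1,\mathbb{Z}_q)\hookrightarrow\mathcal{P}_d(n,k,\mathbb{Z}_q)$, $(P,Q)\mapsto P+x_nQ$, invertible by restricting first to $x_n=0$ and then to $x_n=1$. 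In your matrix language this is precisely the block-triangular decomposition of $E(n,k,d)$ obtained by splitting columns according to the value of $x_n$ and rows according to whether $x_n$ divides the monomial, with diagonal blocks $E(n-1,k,d)$ and $E(n-1,k-1,d-1)$ and a zero block below; a unimodular minor of $E(n,k,d)$ is assembled recursively from unimodular minors of these two diagonal blocks. If you want to salvage the Wilson-SNF route you would need to verify a genuine compatibility statement about Wilson's diagonalizing bases across the levels $i=0,\dots,d$, which is considerably more work than this two-line induction.
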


We now show that $\mathcal{H}_{d}$ is a spanning set for $\mathcal{P}_d(n,k,\mathbb{Z}_q)$, which is our next claim. This proof is also inspired by~\cite{WILSON90} who proves this in the case when the polynomials have coefficients that are real numbers.

\noindent
\begin{claim}\label{claim:homogeneous-spanning-positive-char}
Fix a cyclic group $\mathbb{Z}_{q}$ where $q$ is a power of prime $p$. Fix a degree parameter $d \in \mathbb{N}$. For all natural numbers $n,k$ such that $d\le k \leq (n-d)$ and $k$ is $(d,p)$-good, the following holds.\newline
The set $\mathcal{H}_{d}$ of homogeneous degree-$d$ monomials is a spanning set for $\mathcal{P}_{d}(n,k,\mathbb{Z}_{q})$.
\end{claim}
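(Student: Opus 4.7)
The plan is to reduce every multilinear monomial $x^{T}$ of degree $t \le d$, viewed as a function on $\Boo^{n}_{k}$, to a $\mathbb{Z}_{q}$-linear combination of elements of $\mathcal{H}_{d}$. Since every function in $\mathcal{P}_{d}(n,k,\mathbb{Z}_{q})$ is a sum of such monomials, this immediately gives that $\mathcal{H}_{d}$ is a spanning set.

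The starting point is the following identity: for any $T \subseteq [n]$ with $|T| = t \le d$, as functions on $\Boo^{n}_{k}$,
\[
\sum_{\substack{S \subseteq [n], \, |S| = d \\ T \subseteq S}} x^{S} \; = \; \binom{k-t}{d-t} \cdot x^{T}.
\]
To see this, factor the left-hand side as $x^{T} \cdot \sum_{U \subseteq [n]\setminus T,\, |U| = d-t} x^{U}$. If $T$ is contained in $\mathbf{x}^{-1}\set{1}$, then $x^{T}=1$ and the inner sum counts $(d-t)$-subsets of $\mathbf{x}^{-1}\set{1} \setminus T$, of which there are $\binom{k-t}{d-t}$; otherwise $x^{T}=0$ and both sides vanish.

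The main step is to show that $\binom{k-t}{d-t}$ is a unit in $\mathbb{Z}_{q}$ for every $t \in \set{0,1,\ldots,d-1}$, which (since $q$ is a power of $p$) is equivalent to $\binom{k-t}{d-t} \not\equiv 0 \pmod{p}$. Via the symmetry $\binom{k-t}{d-t} = \binom{k-t}{k-d}$ and Lucas's theorem (\Cref{lemma:lucas}), it suffices to check that every $p$-ary digit of $k-t$ is at least the corresponding digit of $m := k-d$. Write $d = \sum_{j=0}^{\ell} b_{j} p^{j}$ and $k = \sum_{j \ge 0} a_{j} p^{j}$ as in \Cref{defn:good-slices}. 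Then $m$ has digit $0$ at every position $j < \ell$, digit $a_{\ell} - b_{\ell}$ at position $\ell$, and digit $a_{j}$ at each position $j > \ell$. Since $t < d < p^{\ell+1}$, subtracting $t$ from $k$ leaves the digits at positions $> \ell$ untouched, so those digits of $k-t$ agree with those of $m$ exactly. A short case analysis at position $\ell$, depending on whether the subtraction $d - t$ borrows past position $\ell$, shows that the digit of $k - t$ at position $\ell$ equals $a_{\ell}$ or $a_{\ell} - 1$; both values are at least $a_{\ell} - b_{\ell}$, since $b_{\ell} \ge 1$ by the choice of $\ell$ as the leading digit of $d$. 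This yields the required nonvanishing modulo $p$.

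With invertibility of $\binom{k-t}{d-t}$ in hand, the displayed identity rearranges to
\[
x^{T} \; = \; \binom{k-t}{d-t}^{-1} \sum_{\substack{|S| = d \\ T \subseteq S}} x^{S},
\]
writing every degree-$t$ monomial $x^{T}$ as a $\mathbb{Z}_{q}$-linear combination of elements of $\mathcal{H}_{d}$. Applying this to every monomial appearing in a given $P \in \mathcal{P}_{d}(n,k,\mathbb{Z}_{q})$ completes the spanning argument. The main technical obstacle is the digit-level Lucas computation in the previous paragraph; the $(d,p)$-goodness hypothesis is tailored precisely to control the possible borrows at and above position $\ell$ so that this comparison of digits goes through.
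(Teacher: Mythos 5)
Your overall strategy is the same as the paper's: reduce every lower-degree monomial to a linear combination of elements of $\mathcal{H}_d$ via the identity
\[
\sum_{\substack{|S|=d \\ T\subseteq S}} x^S = \binom{k-t}{d-t}\, x^T \quad\text{on }\{0,1\}^n_k,
\]
and then invoke Lucas's theorem together with $(d,p)$-goodness to show $\binom{k-t}{d-t}$ is a unit in $\mathbb{Z}_q$. The paper phrases the Lucas check as comparing digits of $(d-t)$ against $(k-t)$; you use the symmetric form $\binom{k-t}{k-d}$ and compare digits of $(k-d)$ against $(k-t)$ --- these are equivalent.

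There is, however, a local arithmetic error in your digit analysis at position $\ell$. You assert that ``the digit of $k-t$ at position $\ell$ equals $a_\ell$ or $a_\ell - 1$,'' and conclude from this that it is at least $a_\ell - b_\ell$. The first statement is false whenever $b_\ell \ge 2$: the digit of $k-t$ at position $\ell$ can be anything in the range $[a_\ell - b_\ell,\, a_\ell]$. Concretely, take $p=3$, $d = 6 = (20)_3$ (so $\ell=1$, $b_1=2$), and $k=6$ (which is $(6,3)$-good with $a_1=2$). For $t=4$ we get $k-t = 2 = (02)_3$, whose digit at position $1$ is $0 = a_1-2$, not $a_1$ or $a_1-1$. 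So as written this step does not check out.

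Fortunately the conclusion you actually need --- that the digit of $k-t$ at position $\ell$ is at least $a_\ell - b_\ell$, the corresponding digit of $m=k-d$ --- is still true and can be argued cleanly by writing $k-t = m + (d-t)$. Since $m$ has digit $0$ at every position $< \ell$ and $d-t < p^{\ell+1}$ with $(d-t)_\ell \le b_\ell$, the addition produces no carry into or out of position $\ell$, and the digit there is exactly $(a_\ell - b_\ell) + (d-t)_\ell \ge a_\ell - b_\ell$. With this correction, the rest of your argument (agreement at positions $>\ell$, triviality at positions $<\ell$, and the Lucas application) goes through and recovers the paper's claim.
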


\paragraph{Proof Idea:}Every monomial of degree strictly less than $d$ can be expressed as a linear combination of monomials of degree exactly equal to $d$ using the fact that we are working over the slice $\Boo^{n}_{k}$. As we are working over a group of prime power order, we have to be careful about the coefficients arising while expressing lower degree monomials using homogeneous monomials. We will use that $k$ is $(d,p)$-good and Lucas's theorem (\Cref{lemma:lucas}) in a crucial way to argue about the coefficients.\\

\begin{proof}[Proof of \Cref{claim:homogeneous-spanning-positive-char}]
Consider any monomial $\mathfrak{m}$ of degree $0\leq i < d$. For simplicity in notations, assume without loss of generality that $\mathfrak{m} = x_{1} x_{2} \cdots x_{i}$. Let $\mathcal{H}_{d}|_{\mathfrak{m}}$ denote the subset of $\mathcal{H}_{d}$ which is divisible by $\mathfrak{m}$, i.e.
\begin{align*}
    \mathcal{H}_{d}|_{\mathfrak{m}} \; := \; \setcond{x^{T} \in \mathcal{H}_{d}}{\set{1,\ldots,i} \subset T}
\end{align*}
Since every monomial in $\mathcal{H}_{d}|_{\mathfrak{m}}$ is divisible by $\mathfrak{m}$, it is easy to verify that over the slice $\Boo^{n}_{k}$, the following identity holds:
\begin{align*}
    \sum_{x^{T} \in \mathcal{H}_{d}|_{\mathfrak{m}}} x^{T} 
    \; =  \; \mathfrak{m} \sum_{\substack{T' \subseteq [n] \setminus [i] \\ |T'| = d - i}} x^{T'} \; = \; \; \mathfrak{m} \cdot \binom{k - i}{d - i} 
\end{align*} 
To write the monomial $\mathfrak{m}$ as a linear combination of monomials in $\mathcal{H}_{d}$, we need the integer $\binom{k-i}{d-i}$ to be invertible in the ring $\mathbb{Z}_q$. This happens, exactly when $\binom{k-i}{d-i}$ is non-zero modulo the prime $p$.\\

\noindent
Using Lucas's theorem \Cref{lemma:lucas}, we now argue that if the slice $k$ is $(d,p)$-good, then for all $0 \leq i < d$,
\begin{align*}
    \binom{k-i}{d-i} \nequiv 0 \mod{p}
\end{align*}

Let $r = p^{\ell}$ be the smallest power of $p$ strictly greater than $d$. From \Cref{defn:good-slices}, we know that $k \equiv d \mod{p^{\ell}}$. Fix any $0 \leq i \leq (d-1)$. If we represent $(k-i)$ and $(d-i)$ in $p$-ary representation, then, we get,
\begin{align*}
    (d-i) \; = \; \sum_{j = 0}^{\ell-1} b_{j} p^{j} + \sum_{j = \ell}^{m} 0 p^{j}, \quad \quad \quad
    (k-i) \; = \; \sum_{j = 0}^{\ell-1} a_{j} p^{j} + \sum_{j = \ell}^{m} a_{j} p^{j}
\end{align*}
One can verify that $a_{j} = b_{j}$ for all $0\leq j \leq (\ell-2)$ and $a_{\ell - 1}  \geq b_{\ell - 1}$.\footnote{This is true by assumption for $i=0.$ The fact that it is also true when $i > 0$ follows from elementary properties of subtraction.} Thus by Lucas's theorem \Cref{lemma:lucas}, $\binom{k-i}{d-i} \nequiv 0 \mod{p}$. This concludes our proof that a degree strictly less than $d$ monomial can be expressed as a linear combination of monomials in $\mathcal{H}_{d}$.
\end{proof}

Note that \Cref{lem:dim-wilson} and \Cref{claim:homogeneous-spanning-positive-char} together imply \Cref{lem:basis}.

Now we are ready to prove the following distance lemma for degree-$d$ polynomials for good slices.\newline
The proof will be by a random restriction, which allows us to reduce to the case of \Cref{lemma:balanced-slice}. It random restriction sets a uniformly random set of $(n-2k)$ variables to $0$ (i.e. we are reducing from $\Boo^{n}_{k}$ to $\Boo^{2k}_{k}$). The main step is to argue that a non-zero polynomial on the $k^{th}$ slice continues to be a non-zero polynomial on the balanced slice (in a smaller dimension) with good enough probability. For this, we use $\mathcal{H}_{d}$ as a basis $\mathcal{P}_{d}(n,k,\mathbb{Z}_{q})$ for good slices.\\

\noindent
\begin{thmbox}
\begin{lemma}[Distance Lemma over good slices]\label{lemma:good-slice-positive-char}
There exists an absolute constant $\gamma > 0$ so that the following holds.
Fix a cyclic group $\mathbb{Z}_{q}$ where $q$ is a power of a prime number $p$. Fix a degree parameter $d \in \mathbb{N}$. For all natural numbers $n,k$ such that $1+d\le k^\gamma$ and $k\leq n/2$ and $k$ is $(d,p)$-good, the following holds.\newline
For every non-zero degree-$d$ polynomial $P(\mathbf{x}) \in \mathcal{P}_{d}(n,k,\mathbb{Z}_{q})$, 
\begin{align*}
    \Pr_{\mathbf{x} \sim U_{n,k}}[P(\mathbf{x}) \neq 0] \; \geq \; \alpha^{d}\paren{1  - \dfrac{1}{k^{\Omega(1)}}}, \text{ where } \alpha := \dfrac{k}{n}
\end{align*}
\end{lemma}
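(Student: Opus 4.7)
The plan is to perform a random restriction that reduces the problem on $\{0,1\}^n_k$ to the balanced slice $\{0,1\}^{2k}_k$, where Theorem~\ref{lemma:balanced-slice} already yields a tight bound. Concretely, I will sample $\mathbf{x} \sim U_{n,k}$ in two stages: first choose a uniformly random $A \subseteq [n]$ with $|A|=2k$, then choose $\mathbf{y}$ uniformly from $\{0,1\}^A_k$ and set $x_i = y_i$ for $i\in A$ and $x_i = 0$ for $i\notin A$. A short calculation shows that this procedure does indeed produce the uniform distribution on $\{0,1\}^n_k$. Writing $P|_A$ for the restriction of $P$ obtained by zeroing out the coordinates outside $A$, we have $P(\mathbf{x}) = P|_A(\mathbf{y})$, and $P|_A \in \mathcal{P}_d(2k,k,\mathbb{Z}_q)$.

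The key observation is that, since $k$ is $(d,p)$-good, Lemma~\ref{lem:basis} applies simultaneously on $\{0,1\}^n_k$ and on the smaller balanced slice $\{0,1\}^{2k}_k$. Therefore we may uniquely expand $P = \sum_{T\in \mathcal{H}_d} c_T\, x^T$ with at least one $c_{T_0}\neq 0$. Because the same monomial basis is valid on the balanced slice, the restricted polynomial $P|_A = \sum_{T\in \mathcal{H}_d,\, T\subseteq A} c_T\, x^T$ is non-zero in $\mathcal{P}_d(2k,k,\mathbb{Z}_q)$ as soon as $T_0 \subseteq A$. Hence
\begin{align*}
\Pr_{\mathbf{x}\sim U_{n,k}}[P(\mathbf{x})\neq 0] \;\geq\; \Pr_A[T_0\subseteq A]\cdot \min_{A\,:\,P|_A\ne 0}\Pr_{\mathbf{y}\sim U_{2k,k}}[P|_A(\mathbf{y})\neq 0].
\end{align*}

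For the first factor I will compute directly
\begin{align*}
\Pr_A[T_0\subseteq A] \;=\; \frac{\binom{n-d}{2k-d}}{\binom{n}{2k}} \;=\; \prod_{i=0}^{d-1}\frac{2k-i}{n-i} \;\geq\; \paren{\frac{2k}{n}}^d\cdot \prod_{i=0}^{d-1}\paren{1-\frac{i}{2k}} \;\geq\; \paren{\frac{2k}{n}}^d\paren{1-\frac{d^2}{2k}},
\end{align*}
using $k\leq n/2$ to bound each $(1-i/n)^{-1} \geq 1$ and then the standard $(1-x)^d \geq 1 - dx$ estimate. Since the hypothesis gives $1+d \leq k^{\gamma}$, the last factor is $1 - k^{-\Omega(1)}$ for $\gamma$ small enough. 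For the second factor I invoke Theorem~\ref{lemma:balanced-slice} on the balanced slice of dimension $2k$, which gives $\Pr_{\mathbf{y}}[P|_A(\mathbf{y})\neq 0] \geq 2^{-d}(1- (2k)^{-\Omega(1)})$ whenever $P|_A \neq 0$; this requires $d \leq (2k)^{\varepsilon}$, which is ensured by choosing $\gamma < \varepsilon$. Multiplying the two bounds and using $(2k/n)^d \cdot 2^{-d} = (k/n)^d = \alpha^d$ yields the desired inequality $\alpha^d(1-k^{-\Omega(1)})$.

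The main subtlety — and really the only non-routine step — is justifying that the monomial basis $\mathcal{H}_d$ simultaneously works on both the ambient slice and the restricted balanced slice, so that selecting a single ``witness'' monomial $T_0$ from the support of $P$ truly forces $P|_A\neq 0$ as a function on $\{0,1\}^{2k}_k$. This is exactly what $(d,p)$-goodness of $k$ buys us via Lemma~\ref{lem:basis}: the good-ness depends only on $k$ and $d$, not on the ambient dimension, so the basis property transfers to the balanced slice. The remaining work is the elementary estimate of $\binom{n-d}{2k-d}/\binom{n}{2k}$ and a careful combination of the two error terms $d^2/(2k)$ and $(2k)^{-\Omega(1)}$ into a single $k^{-\Omega(1)}$ loss, which is straightforward given the hypothesis $1+d \leq k^{\gamma}$ for small constant $\gamma$.
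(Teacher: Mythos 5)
Your proposal is correct and follows essentially the same strategy as the paper: a random restriction to a $2k$-coordinate subset, the observation that $(d,p)$-goodness of $k$ makes the homogeneous monomial basis $\mathcal{H}_d$ valid on both $\{0,1\}^n_k$ and $\{0,1\}^{2k}_k$ so that a surviving witness monomial forces $\tilde P \ne 0$, the binomial-ratio estimate $\binom{n-d}{2k-d}/\binom{n}{2k}\ge (2k/n)^d(1-d^2/2k)$, and finally invoking the balanced-slice distance lemma. The only (cosmetic) difference is how you bound the binomial ratio --- a product-form estimate versus the paper's $((2k-d)/(n-d))^d$ route --- which yields the same error term.
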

\end{thmbox}
\begin{proof}[Proof of \Cref{lemma:good-slice-positive-char}]
We start by describing the random process to reduce the problem from $k^{th}$ slice to the balanced slice in a smaller dimension cube.

\paragraph{\underline{Random process and the new polynomial}}Sample a random subset $T \subseteq [n]$ of size exactly $2k$ and set all the variables NOT in $T$ to $0$. Let $\Tilde{P}(y_{1},\ldots,y_{2k})$ be the resulting polynomial in $2k$ variables.\footnote{We identify the elements in $T$ with $[2k]$ in a canonical way.} Note that $\deg(\Tilde{P}) \leq \deg(P) = d$.\\
We will now argue that if $P(x_{1},\ldots,x_{n})$ is a non-zero degree-$d$ polynomial in $\mathcal{P}_{d}(n,k,\mathbb{Z}_{q})$, then $\Tilde{P}(y_{1},\ldots,y_{2k})$ is a non-zero degree-$d$ polynomial in $\mathcal{P}_{d}(2k,k,\mathbb{Z}_{q})$ with some good enough probability.\\
\begin{claim}\label{claim:non-zero-most-vars-killed}
Let $P(\mathbf{x}) \in \mathcal{P}_{d}(n,k,\mathbb{Z}_{q})$ be a non-zero polynomial. Then $\Tilde{P}$, as defined above, is a non-zero polynomial over $\Boo^{2k}_{k}$ with probability at least $(2k/n)^{d} \cdot (1 - (d^{2}/2k))$ over the randomness of set $T$.
\end{claim}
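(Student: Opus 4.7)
The plan is to leverage \Cref{lem:basis} twice: once to select a distinguished nonzero monomial appearing in $P$, and once to guarantee that the randomly restricted polynomial remains nonzero as a function on $\{0,1\}^{2k}_k$ whenever that monomial's support survives. Because $k$ is assumed to be $(d,p)$-good and $d\leq k \leq n-d$, \Cref{lem:basis} yields a unique expansion
\begin{align*}
    P(\mathbf{x}) \;=\; \sum_{|S|=d} c_S\, x^S
\end{align*}
in the homogeneous monomial basis $\mathcal{H}_d$, and since $P$ is a nonzero function at least one coefficient, say $c_{S^*}$, is nonzero.

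Next, I would analyze what happens to this expansion under the random restriction. Zeroing out the coordinates outside $T$ kills every monomial $x^S$ with $S\not\subseteq T$, leaving the formal polynomial $\tilde P(\mathbf{y})=\sum_{S\subseteq T,\, |S|=d} c_S\, y^S$. The key step is to argue that $\tilde P$ is the zero function on $\Boo^{2k}_{k}$ iff all the surviving $c_S$ vanish. For this I would apply \Cref{lem:basis} a second time, now in dimension $2k$ on the balanced slice: its hypotheses remain satisfied since $k$ is still $(d,p)$-good (the property only depends on $k$, $d$, $p$) and $d\leq k \leq 2k-d$ using $k\geq d$ (a consequence of the ambient hypothesis $1+d\leq k^\gamma$). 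The uniqueness clause then forces $\tilde P$ to be nonzero as a function on $\Boo^{2k}_{k}$ exactly when some $c_S$ with $S\subseteq T$ is nonzero. In particular, $\tilde P$ is nonzero whenever $S^*\subseteq T$.

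It therefore suffices to lower bound $\Pr_T[S^*\subseteq T]$ for $T$ chosen uniformly from $\binom{[n]}{2k}$. This is a direct computation:
\begin{align*}
    \Pr_T[S^*\subseteq T] \;=\; \frac{\binom{n-d}{2k-d}}{\binom{n}{2k}} \;=\; \prod_{i=0}^{d-1}\frac{2k-i}{n-i} \;\geq\; \paren{\frac{2k}{n}}^{d}\prod_{i=0}^{d-1}\paren{1-\frac{i}{2k}} \;\geq\; \paren{\frac{2k}{n}}^{d}\paren{1-\frac{d^{2}}{2k}},
\end{align*}
using $(2k-i)/(n-i)\geq (2k-i)/n=(2k/n)(1-i/(2k))$ for the first inequality and the Weierstrass bound $\prod_{i<d}(1-i/(2k))\geq 1-\sum_{i<d} i/(2k)=1-d(d-1)/(4k)\geq 1-d^{2}/(2k)$ for the second.

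The argument is short once \Cref{lem:basis} is in place; the only conceptual subtlety---and the step I expect to be most error-prone---is remembering that the balanced-slice distance lemma requires $\tilde P$ to be nonzero \emph{as a function} on $\Boo^{2k}_{k}$ rather than merely as a formal sum of monomials, and that it is exactly the uniqueness half of \Cref{lem:basis}, applied on the smaller slice, that upgrades the formal statement to the functional one.
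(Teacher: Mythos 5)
Your proposal is correct and follows essentially the same approach as the paper's proof: expand $P$ uniquely in the homogeneous basis $\mathcal{H}_d$ via \Cref{lem:basis}, pick a monomial $\mathfrak m = x^{S^*}$ with nonzero coefficient, lower-bound $\Pr_T[S^*\subseteq T]$, and invoke the uniqueness clause of \Cref{lem:basis} a second time on $\{0,1\}^{2k}_k$ to conclude $\tilde P$ is a nonzero function whenever $c_{S^*}$ survives. The only cosmetic difference is that you bound $\binom{n-d}{2k-d}/\binom{n}{2k}$ via the term-by-term product and Weierstrass inequality rather than the paper's $\left(\frac{2k-d}{n-d}\right)^d$ shortcut; both yield the same $(2k/n)^d(1-d^2/(2k))$ bound.
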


\begin{proof}[Proof of \Cref{claim:non-zero-most-vars-killed}]
Firstly, represent the polynomial $P(x_{1},\ldots,x_{n})$ as a unique linear combination of monomials in $\mathcal{H}_{d}$ using \Cref{claim:homogeneous-spanning-positive-char}. Let $\mathfrak{m}$ be a monomial in $\mathcal{H}_{d}$ which has a non-zero coefficient in the polynomial $P(\mathbf{x})$. Assume without loss of generality that $\mathfrak{m} = x_{1} x_{2} \cdots x_{d}$. The probability over the choice of $T$ that $\set{1,\ldots,d} \subset T$ is:
\begin{align*}
\displaystyle\binom{n-d}{2k-d}/\displaystyle\binom{n}{2k}
\end{align*}
Since $d \leq 2k \leq n$, we have the following inequality:
\begin{align*}
    \displaystyle\binom{n-d}{2k-d} \; \geq \; \displaystyle\binom{n}{2k} \cdot \paren{\dfrac{2k-d}{n-d}}^{d}
\end{align*}
Using the inequality $(1-x)^{m} \geq (1 - mx)$ and upper bounding $(n-d)^{d}$ by $n^{d}$, we get the following inequality:
\begin{align*}
    \displaystyle\binom{n-d}{2k-d} \; \geq \; \paren{\dfrac{2k}{n}}^{d} \cdot \paren{1 - \dfrac{d^{2}}{2k}}
\end{align*}
By \Cref{lem:basis}, the probability of $\Tilde{P}$ is a non-zero polynomial function over $\Boo^{2k}_{k}$ is at least the probability that the monomial $\mathfrak{m}$ has non-zero coefficient in $\Tilde{P}$, and this is at least $(2k/n)^{d} \cdot (1-(d^{2}/2k))$. This finishes the proof of \Cref{claim:non-zero-most-vars-killed}.
\end{proof}

Note that if $\Tilde{P}(\Tilde{\mathbf{a}}) \neq 0$ for some $\Tilde{\mathbf{a}} \in \Boo^{2k}_{k}$, then $P(\mathbf{a}) \neq 0$ where $\mathbf{a}$ is obtained from $\Tilde{\mathbf{a}}$ and fixing the coordinates not in $T$ to $0$. It is also easy to see that for a random choice of $T$, if $\Tilde{\mathbf{a}} \sim U_{2k,k}$, then the corresponding $\mathbf{a} \sim U_{n,k}$. Thus we get,
\begin{align*}
    \Pr_{\mathbf{x} \sim U_{n,k}}[P(\mathbf{x}) \neq 0] \; \geq & \; \Pr_{T}[\Tilde{P} \; \text{ doesn't vanish on } \; \Boo^{2k}_{k}]\\ &~~~~~\cdot ~\Pr_{\mathbf{y} \sim U_{2k,k}}[\Tilde{P}(\mathbf{y}) \neq 0~|~\Tilde{P} \; \text{ doesn't vanish on } \; \Boo^{2k}_{k}] .
\end{align*}
By using $d\le k^{\varepsilon'}$ for sufficiently small $\varepsilon'$ and applying the distance lemma for the balanced slices \Cref{lemma:balanced-slice} on $\Tilde{P} \in \mathcal{P}_{d}(2k,k,d)$ and \Cref{claim:non-zero-most-vars-killed}, we get,
\begin{align*}
    \Pr_{\mathbf{x} \sim U_{n,k}}[P(\mathbf{x}) \neq 0] \; \geq \; \frac{1}{2^d}\paren{1 - \dfrac{1}{k^{\Omega(1)}}} \cdot  \paren{\dfrac{2k}{n}}^{d} \cdot \paren{1 - \dfrac{d^{2}}{2k}} \;
    \geq \; \paren{\dfrac{k}{n}}^{d}\paren{1 - \dfrac{1}{k^{\Omega(1)}}}.
\end{align*}
 This finishes the proof of \Cref{lemma:good-slice-positive-char}.
\end{proof}

Now we will show how to reduce a non-good slice to a good slice by randomly fixing some $\bigO(d)$ many variables to $1$. We will prove the following lemma.\newline

\begin{lemma}[Reducing any slice to a good slice]
    \label{lem:all-slices} Fix a cyclic group $\mathbb{Z}_{q}$ where $q$ is a power of a prime number $p$, a degree parameter $d \in \mathbb{N}$. Let $n$ and $k\in [n^{1/3},n/2]$ be positive integers, and let $0\le c \le 2d \le k^{\varepsilon}$ for a sufficiently small constant $\varepsilon>0$. Let $\beta \in (0,(k/n)^d)$ be such that for every non-zero polynomial $Q({\bf x}) \in \cP_d(n-c,k-c,\Z_q)$, it holds that
         \begin{align*}
            \Pr_{{\bf x} \sim U_{(n-c),(k-c)}}[Q({\bf x}) \ne 0] \ge \beta.
    \end{align*}
     Then for every non-zero polynomial $P({\bf x}) \in \cP_d(n,k,\Z_q)$, it holds that
    \begin{align*}
        \Pr_{{\bf x} \sim U_{n,k}}[P({\bf x}) \ne 0] \ge \beta \paren{1 -  {\dfrac{c}{{n^{0.1}}}}}.
    \end{align*}
\end{lemma}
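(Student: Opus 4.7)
The plan is to mimic the random-restriction strategy from \Cref{lemma:good-slice-positive-char}, but in reverse: set $c$ uniformly chosen variables to $1$ (rather than $0$) in order to pass from slice $(n,k)$ to slice $(n-c,k-c)$, where the hypothesis provides the distance bound $\beta$. Sampling $\mathbf{x} \sim U_{n,k}$ factors as choosing a uniformly random size-$c$ subset $T \subseteq [n]$ and, independently, $\mathbf{y} \sim U_{n-c,k-c}$ on $[n]\setminus T$, with $\mathbf{x}|_T = 1_T$ and $\mathbf{x}|_{[n]\setminus T} = \mathbf{y}$. Writing $P_T := P|_{x_T = 1_T}$, this yields
\begin{align*}
\Pr_{\mathbf{x} \sim U_{n,k}}[P(\mathbf{x}) \neq 0] \;=\; \E_T\bigl[\Pr_{\mathbf{y}}[P_T(\mathbf{y}) \neq 0]\bigr] \;\geq\; \beta \cdot \Pr_T\bigl[P_T \not\equiv 0 \text{ on slice } (n-c,k-c)\bigr],
\end{align*}
so it suffices to bound the ``collapse probability'' $\Pr_T[P_T \equiv 0 \text{ on slice}]$ by $c/n^{0.1}$.

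To bound the collapse probability I would reveal the elements of $T = \{t_1,\ldots,t_c\}$ one at a time and apply a union bound. Let $P_j$ be the polynomial obtained after setting $x_{t_1} = \cdots = x_{t_j} = 1$, let $E_j$ be the event that $P_j$ is non-zero on slice $(n-j,k-j)$, and for a non-zero $Q \in \cP_d(m, \ell, \Z_q)$ define the \emph{bad set} $B(Q) := \{i \in [m] : Q|_{x_i = 1} \equiv 0 \text{ on slice } (m-1, \ell-1)\}$. Then $\Pr[\neg E_j \mid E_{j-1}] = |B(P_{j-1})|/(n-j+1)$, so by a union bound it suffices to show $|B(P_{j-1})|/(n-j+1) \leq 1/n^{0.1}$ at each step.

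The heart of the proof is the following quantitative bound on $|B(Q)|$: for any non-zero $Q \in \cP_d(m, \ell, \Z_q)$ with $2d \leq \ell \leq m/2$, $|B(Q)| \leq \bigO\bigl(dm\log m / \ell\bigr)$. Two observations yield this bound: (i) if $i \in B(Q)$ then $Q$ vanishes on every slice point with $z_i = 1$, so the support of $Q$ on the slice lies in $\{z : z_{B(Q)} = 0\} \cap \{0,1\}^m_\ell$, forcing $|\mathrm{supp}(Q)| \leq \binom{m-|B(Q)|}{\ell}$; (ii) the sub-optimal distance lemma (\Cref{lemma:suboptimal-DLSZ-slice}) forces $|\mathrm{supp}(Q)| \geq \binom{m-2d}{\ell-d}$. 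Comparing these via standard binomial estimates (namely $\binom{m-|B|}{\ell}/\binom{m}{\ell} \leq e^{-\ell|B|/m}$ and $\binom{m-2d}{\ell-d}/\binom{m}{\ell} \gtrsim (\ell(m-\ell)/(4m^2))^d$) yields the claimed bound on $|B(Q)|$. Plugging in $\ell = k - j \geq k - c \geq n^{1/3}/2$, $d \leq k^\varepsilon$, and $m \leq n$, we obtain $|B(P_{j-1})|/(n-j+1) \leq \bigO\bigl(n^\varepsilon \log n / n^{1/3}\bigr) \leq 1/n^{0.1}$ for sufficiently small $\varepsilon > 0$. Summing over $j$ gives $\Pr[\neg E_c] \leq c/n^{0.1}$ and hence $\Pr_{\mathbf{x} \sim U_{n,k}}[P(\mathbf{x}) \neq 0] \geq \beta(1 - c/n^{0.1})$, as desired. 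The main obstacle is the sharp bound $|B(Q)| \leq \bigO(dm\log m/\ell)$: the naive support-confinement bound $|B(Q)| \leq m - \ell$ alone is far too weak, and it is the combination with the sub-optimal distance lemma that drives the proof.
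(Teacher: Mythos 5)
Your proof is correct and takes essentially the same approach as the paper: reveal $c$ coordinates one at a time, call an index bad if setting that coordinate to $1$ collapses the restricted polynomial to zero on the shrunken slice, and bound the number of bad indices by combining the support-confinement observation (a bad index forces the support inside $\{x : x_i = 0\}$) with the sub-optimal distance lemma (\Cref{lemma:suboptimal-DLSZ-slice}), then union/product over the $c$ steps. The only cosmetic difference is that the paper picks the threshold $\lfloor n/\sqrt{k}\rfloor$ and shows by contradiction (their Claim on $\binom{n-\ell}{k}$ vs.\ $\binom{n-2d}{k-d}$) that there cannot be that many bad indices, whereas you solve the binomial comparison directly for an explicit $\bigO(dm\log m/\ell)$ bound; both give the required $1/n^{0.1}$ per-step failure probability.
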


Together with~\Cref{lemma:good-slice-positive-char}, this completes the proof of \Cref{lemma:DLSZ-slice-finite-fields} as shown below. We recall the statement first.\\

\begin{thmbox}
\cyclicgroups*
\end{thmbox}

\begin{proof}[Proof of~\Cref{lemma:DLSZ-slice-finite-fields}]  
We first argue that we can assume that $k\ge n^{1/3}$. Otherwise,~\Cref{lemma:suboptimal-DLSZ-slice} suffices to give us the desired bound. Indeed, if $k \le n^{1/3}$, we have
$$\frac{{{n-2d}\choose k-d}}{{n\choose k}} \ge \paren{\frac{k-d}{n}}^d\paren{\frac{n-2k}{n}}^d \ge \paren{\frac{k}{n}}^d\paren{1-\frac{d^2}{k}}\paren{1-\frac{2kd}{n}}\ge \paren{\frac{k}{n}}^d \paren{1-\frac{1}{k^{\Omega(1)}}}.$$

Hence, for the rest of the proof, we will assume that $k\le n^{1/3}$. We show below that it suffices to show that for every slice $k > d$, there exists $c\in [0,2d]$ such that the slice $(k-c)$ is $(d,p)$-good (see \Cref{defn:good-slices}). Assuming this, the premise of \Cref{lem:all-slices} is true with 
\begin{align*}
    \beta = \paren{\frac{k-c}{n-c}}^d \, \paren{1-  \, \dfrac{1}{(k-c)^\gamma}},
\end{align*}
for some constant $\gamma \in (0,1)$, by the distance lemma for good slices (\Cref{lemma:good-slice-positive-char}). 

The conclusion of~\Cref{lem:all-slices} then implies that
    \begin{align*}
        \Pr_{{\bf x}\sim U_{n,k}}[P({\bf x}) \ne 0] \; \geq \; \bigg(\frac{k-c}{n-c}\bigg)^d\paren{1-\frac{1}{(k-c)^\gamma}}\paren{1-\frac{c}{n^{0.1}}} \ge\paren{\frac{k}{n}}^d \paren{1- \frac{1}{k^{\Omega(1)}}},
    \end{align*} using $d\le k^{\varepsilon}$ and $c\le 2d$.
    Hence, it only remains to show that there exists a $c\in [0,2d]$ such that the slice $(k-c)$ is $(d,p)$-good.

     \noindent
    Let $d=\sum_{j=0}^\ell b_j p^j$ and $k=\sum_{j=0}^m a_j p^j$ be the $p$-ary representations of $d$ and $k$ respectively, with $b_\ell, a_m >0$ and $m \ge \ell$ (since $k \ge d$). We first note that for $c_1 \in [0,p^\ell-1]$ such that $c_1 \equiv k-d \mod{p^\ell}$, we have $k_1 := k-c_1 \equiv d \mod{p^\ell}.$ Or equivalently for $k_1 = \sum_{j=0}^m u_j p^j$ in $p$-ary representation, we have
     \begin{align}
        \label{eqn:cond-1}
        u_j = b_j,\text{~for all~}j<\ell.
     \end{align}
     We now show that there exists a $c_2 \in [0,d]$ such that when $k_2 := k_1-c_2$ is expressed in its $p$-ary representation as $k_2 = \sum_{j=0}^{m} v_j p^j$, it holds that $v_\ell \ge b_\ell$ and $v_j = u_j$, for all $j<\ell$. We have two cases:
     \begin{itemize}
        \item {\bf Case 1: $u_\ell \ge b_\ell$.} In this case, we can take $c_2=0$ and $k_2=k_1$.
        \item {\bf Case 2: $u_\ell < b_\ell$.} We take $c_2=(u_\ell+1)p^\ell \le b_\ell\cdot p^\ell \le d$. In this case, the $p$-ary representation of $k_2$ as described above satisfies $v_j = u_j$ for all $j < \ell$ and $v_\ell = p-1.$
     \end{itemize}
     Taking $c=c_1+c_2$, we see that $k_2=k-c$ is indeed $(d,p)$-good, with $c \le 2d$.
     
     This finishes the proof of \Cref{lemma:DLSZ-slice-finite-fields}.
\end{proof}

We now prove \Cref{lem:all-slices}.

\begin{proof}[Proof of~\Cref{lem:all-slices}]
    Note that the proof is trivial for $c=0$. We will now show the lemma for $c\ge 1$. Consider the following way of sampling a point ${\bf x}$ from $U_{n,k}$. 
    
    \begin{enumerate}
        \item Initialize $S=\emptyset$.
        \item Choose $s \in [n]\setminus S$ uniformly at random, set $x_{s} = 1$, and add $s$ to $S$.
        \item Repeat Step 2 until $|S|=c$. 
        \item Set the co-ordinates of $\mathbf{x}$ not in $S$ according to the distribution $U_{n-c,k-c}$.
        \item Output ${\bf x}$.
    \end{enumerate}

    Let $P_1\in \cP_d(n-1,k-1,\Z_q),P_2\in \cP_d(n-2,k-2,\Z_q),\dots,P_c\in \cP_d(n-c,k-c,\Z_q)$ be the restrictions of the polynomial $P$ on the respective domains obtained by setting $x_{s_1}=1$, $x_{s_2}=1,\dots,x_{s_c}=1$ successively by the above random procedure, so that $S=\{s_1,s_2,\dots,s_c\} \in {[n] \choose c}$. 
    
    We claim that $P_1$ is non-zero on $\{0,1\}^{n-1}_{k-1}$, with high probability over the choice of $s_1$. To prove this, let us call an index $s\in [n]$ ``bad'' if for all ${\bf a}\in \{0,1\}^n_k$ such that $a_s=1$, we have that $P({\bf a})=0$. We observe that the probability of $P_1$ being entirely zero over $\{0,1\}^{n-1}_{k-1}$ is equal to the probability of a uniformly random $s \in [n]$ being bad. We show below that the number of such bad indices $s \in [n]$ is at most $\ell:=\lfloor{n/\sqrt{k}}\rfloor $. Towards a contradiction, suppose there are some $\ell$ bad indices $i_1,i_2,\dots,i_\ell \in [n]$. This means that if at least one of these co-ordinates takes value 1, $P({\bf x})$ evaluates to $0$. Thus the number of non-zeroes of $P$ in $\{0,1\}^n_k$ is upper bounded by the number of points that take the value $0$ on all these bad indices, i.e., ${n-\ell \choose k}$. However, by the weak distance lemma (\Cref{lemma:suboptimal-DLSZ-slice}), we know that the number of non-zeroes has to be at least ${n-2d \choose k-d}$. This yields a contradiction as we have ${n-\ell \choose k} < {n-2d \choose k-d}$ by the following claim. 
    
    \begin{claim}
        \label{clm:non-roots} 
        For every sufficiently large integer $n$ and arbitrary integers $k\in [n^{1/4},n/2]$, $d\in [1, k^{0.1}]$, and $\ell = \lfloor{n/\sqrt{k}}\rfloor$, we have
         $${n-\ell \choose k} < {n-2d \choose k-d}.$$
    \end{claim}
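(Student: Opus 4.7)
The plan is to express both binomial coefficients as $\binom{n}{k}$ times a correction factor, and then show that the LHS shrinks exponentially (roughly like $e^{-\sqrt{k}}$) while the RHS shrinks only polynomially (by $d$ applications of a factor of order $k/n$). In the given parameter regime the exponential dominates the polynomial, yielding the strict inequality.

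First, I would upper bound the LHS. Expanding gives
\[
\binom{n-\ell}{k} \;=\; \binom{n}{k}\cdot \prod_{i=0}^{\ell-1}\paren{1-\frac{k}{n-i}} \;\leq\; \binom{n}{k}\cdot \paren{1-\frac{k}{n}}^{\ell}\;\leq\; \binom{n}{k}\cdot e^{-k\ell/n}.
\]
Since $\ell \geq n/\sqrt{k}-1$ and $k\leq n/2$, we have $k\ell/n \geq \sqrt{k}-1/2$, so $\binom{n-\ell}{k}\leq \binom{n}{k}\cdot e^{-\sqrt{k}+1/2}$.

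Next I would lower bound the RHS by direct expansion:
\[
\binom{n-2d}{k-d} \;=\; \binom{n}{k}\cdot \prod_{i=0}^{d-1}\frac{k-i}{n-i}\cdot \prod_{i=0}^{d-1}\frac{n-k-i}{n-d-i}.
\]
Since $d\leq k^{0.1}\leq k/2$, the first product is at least $((k-d)/n)^{d}\geq (k/(2n))^{d}$. For the second product, the elementary inequality $(n-k-i)/(n-d-i)\geq 1/2$ (which is equivalent to $n-2i\geq n-d-i$, valid because $n-k\geq n/2$ and $i<d$) yields a lower bound of $(1/2)^{d}$. Combining, $\binom{n-2d}{k-d}\geq \binom{n}{k}\cdot (k/(4n))^{d}$.

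Taking the ratio of the two bounds, it suffices to verify $(k/(4n))^{d}\cdot e^{\sqrt{k}-1/2}>1$, i.e.\ $\sqrt{k}-\tfrac{1}{2} > d\cdot \log(4n/k)$. Using $k\geq n^{1/4}$, we get $\sqrt{k}\geq n^{1/8}$, while $d\cdot \log(4n/k) \leq k^{0.1}\log(4n) \leq n^{0.025}\cdot O(\log n) = o(n^{1/8})$. Hence the inequality holds for all sufficiently large $n$. The argument has no substantial obstacle: the comfortable gap between the exponents $1/8$ (governing $\sqrt{k}$) and $0.025$ (governing $d$) provides ample slack, so only elementary bookkeeping is required.
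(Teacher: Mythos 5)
Your proposal is correct and takes essentially the same route as the paper: both reduce the two binomials to a common term, bound the resulting correction factors by $e^{-\Theta(\sqrt{k})}$ and $(k/n)^{O(d)}$ respectively, and then check that the exponential decay beats the polynomial loss in the regime $d\le k^{0.1}$, $k\ge n^{1/4}$ (the paper normalizes by $\binom{n-2d}{k}$ rather than $\binom{n}{k}$, which is cosmetic). One small slip in your final line: $k\ge n^{1/4}$ gives $k^{0.1}\ge n^{0.025}$, not $\le$; the clean fix is to compare $\sqrt{k}/k^{0.1}=k^{0.4}\ge n^{0.1}$ directly against $\log(4n/k)\le \log(4n)$, which repairs the step without changing the conclusion.
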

    \begin{proof}
        We have $${n-\ell \choose k} \le {n-2d \choose k}\cdot \bigg(1-\frac{k}{n}\bigg)^{\ell -2d} < {n-2d\choose k-d}\cdot n^d \cdot \bigg(1-\frac{k}{n}\bigg)^{\ell -2d}.$$ Hence, it suffices to show that $(1-\frac{k}{n})^{\ell-2d}\le \frac{1}{n^d}$. Using $\ell - 2d \ge n/(4\sqrt{k})$, indeed we have that $(1-\frac{k}{n})^{\ell-2d} \le (1-\frac{k}{n})^{n/(4\sqrt{k})} \le e^{-{\sqrt{k}/4}} \le \frac{1}{n^d}$, using $d\le k^{0.1}$ and $k \ge n^{1/4}$.
    \end{proof}
    By using the notation $P' \notequiv 0$ to denote that a polynomial $P'\in \cP_d(n',k',\Z_q)$ has at least one non-zero evaluation over the underlying slice,
    we have $$\Pr_{s_1 \sim [n]}[P_1 \notequiv 0] \ge 1-\frac{\ell}{n} \ge 1-\frac{1}{\sqrt{k}} \ge 1-\frac{1}{n^{1/6}}.$$

    By the same argument for $P_2,\dots,P_c$, we get with probability at least $1-\frac{1}{(n-1)^{1/6}}$ (over the choice of $S$), that $P_c$ has at least one non-zero evaluation over $\{0,1\}^{n-c}_{k-c}$. Here, we note that we will need to show that~\Cref{clm:non-roots} is also applicable if we replace $n$ and $k$ with $n-i$ and $k-i$ for every $i\in [c]$; this follows as $c \le 2d\le k^{\varepsilon}$, so $d\le (k-i)^{0.1}$ and $k-i \in [(n-i)^{1/4},(n-i)/2]$. Hence, at each step $i\in [1,c-1]$, conditioned on $P_i$ being non-zero over $\{0,1\}^{n-i}_{k-i}$, we get that $P_{i+1}$ is non-zero over $\{0,1\}^{n-i-1}_{k-i-1}$ with probability at least $1-\frac{1}{(n-i)^{1/6}} \ge 1-\frac{1}{n^{0.1}}$.

    Finally, applying the premise of the lemma statement (\Cref{lem:all-slices}) for $Q=P_c$, we get the desired bound. More formally, we obtain \\
    \begin{align*}   
        \Pr_{{\bf x}\sim U_{n,k}}[P({\bf x}) \ne 0] & \ge \Pr_{s_1 \sim [n]}[P_1 \notequiv 0] \cdot \Pr_{s_2 \sim [n]\setminus \{s_1\}}[P_2 \notequiv 0~\mid~P_1 \notequiv 0] \cdot \dots \\ &~~~~~~~\dots \cdot \Pr_{s_c \sim [n]\setminus \{s_1,\dots,s_{c-1}\}}[P_c \notequiv 0~|~P_{c-1} \notequiv 0] \cdot \Pr_{{\bf y}\sim U_{n-c,k-c}}[P_c({\bf y})\ne 0~\mid~P_c \notequiv 0]\\
        & \ge \bigg(1-\frac{1}{n^{0.1}}\bigg)^{c} \cdot \beta\\
        & \ge \beta \paren{1- \frac{c}{n^{0.1}}}.
    \end{align*}
    This concludes the proof of~\Cref{lem:all-slices}.
\end{proof}

\subsection{General Abelian Groups}\label{ssec:general-proof}
In this subsection, we use the distance lemma for slices over cyclic groups of prime power order \Cref{lemma:DLSZ-slice-finite-fields} to get a distance lemma for slices over arbitrary Abelian groups. We prove \Cref{thm:main} now.

\begin{proof}[Proof of \Cref{thm:main}]
By negating all the variables if necessary, we can assume that $k\leq n/2$. Let $P(\mathbf{x}) \in \mathcal{P}_{d}(n,k,G)$ be the following polynomial:
\begin{align*}
    P({\bf x}) = \sum_{\substack{I \subseteq [n] \\ |I| \leq d}} a_I \cdot x^I,
\end{align*}
where $x^{I} := \prod_{j \in I} x_{j}$. We first argue that it suffices to prove \Cref{thm:main} for \textbf{finite} Abelian groups. Then we will use the fundamental theorem of finite Abelian groups and our distance lemma for cyclic groups (\Cref{lemma:DLSZ-slice-finite-fields}) to finish the proof.

\paragraph{\underline{Reducing to finitely generated Abelian groups}}
Define $G'$ to be the subgroup of $G$ generated by the coefficients of $P$, i.e.
\begin{align*}
    G' =  \left\langle \setcond{a_I}{I \subseteq [n], |I| \leq d} \right\rangle
\end{align*}
Observe that we can treat the polynomial $P(\mathbf{x})$ as a degree-$d$ polynomial over the group $G'$ with a non-zero evaluation on the slice $\{0,1\}^n_k$, i.e. $P(\mathbf{x}) \in \mathcal{P}_{d}(n,k,G')$. Thus we have reduced to the case of finitely generated Abelian groups. Using the structure theorem of finitely generated Abelian groups, we know that $G'$ is isomorphic to 
\begin{align*}
    G' \; \cong \; \Z_{p_1^{\ell_1}} \times \dots \times \Z_{p_s^{\ell_s}}\times \Z^r
\end{align*}
for some prime numbers $p_1,\dots,p_s$, positive integers $\ell_1,\dots,\ell_s$ and $r\in \Z_{\ge 0}$.

\paragraph{\underline{Reducing to finite Abelian groups}}We now show that there exists a \emph{finite} Abelian group $G''$ and a polynomial $P'' \in \cP_d(n,k,G'')$ such that the following holds:
\begin{align*}
    \Pr_{{\bf x} \sim U_{n,k}}[P({\bf x}) \ne 0] = \Pr_{{\bf x} \sim U_{n,k}}[P''({\bf x}) \ne 0]
\end{align*}
If $r=0$, we can take $G'' = G'$ and we are done.\newline
Otherwise, let $M$ be a prime number greater than the absolute values of the last $r$ co-ordinates (using the isomorphism between $G$ and $\Z_{p_1^{\ell_1}} \times \dots \times \Z_{p_s^{\ell_s}}\times \Z^r $) of the evaluations $P({\bf x})$ for all ${\bf x} \in \{0,1\}^n_k$. Then, we take $G'' := \Z_{p_1^{\ell_1}} \times \dots \times \Z_{p_s^{\ell_s}}\times \Z_M^r $. Let $a_{I} = (a_{I}(1),\ldots,a_{I}(s+r)) \in G'$ be a coefficient of $P(\mathbf{x})$. Define the coefficient $a''_{I} \in G''$ as follows:
\begin{align*}
    a''_{I} \, := \, (a_I(1),a_I(2),\dots,a_I(s), a_I(s+1) \mod M, \dots, a_I(s+r) \mod M)
\end{align*}
Now define the polynomial $P''(\mathbf{x})$ as follows:
\begin{align*}
    P''({\bf x}) = \sum_{\substack{I \subseteq [n] \\ |I| \leq d}} a''_I \cdot x^I,
\end{align*}
where $x^{I} := \prod_{j \in I} x_{j}$. For every ${\bf x} \in \{0,1\}^n_k$.
we have that $P({\bf x}) = 0 \iff P''({\bf x}) = 0$, since by the definition of $M$, the last $r$ co-ordinates of $P({\bf x})$ can only take values strictly in between $-M$ and $M$. Thus we have reduced to finite Abelian group $G''$.

\paragraph{\underline{Cyclic groups of prime power order}}We will now argue that we can further reduce it to the case of cyclic groups of prime power order. For simplicity of notation, let the primes $p_{s+1} = \dots = p_{s+r} = M$, exponents $\ell_{s+1} = \dots = \ell_{s+r} = 1$. We thus have $G'' \cong \mathbb{Z}_{p_{1}^{\ell_{1}}} \times \cdots \times \mathbb{Z} _{p_{s+r}^{\ell_{s+r}}}$ and $P''({\bf x}) \in \cP_{d}(n,k,G'')$ is non-zero on $\{0,1\}^n_k$.
This means that there exists $j \in [s+r]$ such that the polynomial $P''(\mathbf{x})$ is a non-zero degree-$d$ polynomial on $\Boo^{n}_{k}$ over the cyclic group $\mathbb{Z}_{p_{j}^{\ell_{j}} }$.\\

Using \Cref{lemma:DLSZ-slice-finite-fields}, we know there exists a constant $\varepsilon$ such that $P''(\mathbf{x})$ is non-zero on at least $\alpha^{d}(1 - \bigO(1/k^{\varepsilon})) \cdot \binom{n}{k}$ points  over the cyclic group $\mathbb{Z}_{p_{j}^{\ell_{j}} }$ where $\alpha = k/n$. This implies that the polynomial $P''(\mathbf{x})$ is non-zero on at least $\alpha^{d}(1 - \bigO(1/k^{\varepsilon})) \cdot \binom{n}{k}$ points over the group $G''$. As we argued above, this in particular implies that the polynomial polynomial $P(\mathbf{x})$ is non-zero on at least $\alpha^{d}(1 - \bigO(1/k^{\varepsilon}))$ fraction of points on the slice $\Boo^{n}_{k}$ over the Abelian group $G$. This finishes the proof of \Cref{thm:main}.
\end{proof}

\section{Low-degree Functions Over Slices}\label{sec:junta}
In this section we will give a simple proof of a lemma of Filmus and Ihringer~\cite{Filmus-Ihringer19} following the proof idea of Nisan and Szegedy~\cite{Nisan-Szegedy}. We will first give a couple of definitions and set the notations for this section.\\

\noindent
For a function $f(x_{1},\ldots,x_{n})$ on the slice $\Boo^{n}_{k}$ with coefficients in $\mathbb{R}$, and for any two coordinates $i,j \in [n] \times [n]$, define $f^{(ij)}$ to be the function where we swap the $i^{th}$ variable with the $j^{th}$ variable in the function $f$.\newline
\begin{definition}[Influence]\label{defn:influence}
Let $f:\Boo^{n} \to \mathbb{R}$ be a function on the slice $\Boo^{n}_{k}$. For $(i,j) \in [n] \times [n]$, the $(i,j)^{th}$-influence of $f$, denoted by $\mathrm{Inf}_{ij}(f)$, is defined as,
\begin{align*}
    \mathrm{Inf}_{ij}(f) \; := \; \dfrac{1}{4} \; \Pr_{\mathbf{x} \sim U_{n,k}}[f(\mathbf{x}) \neq f^{(ij)}(\mathbf{x})]
\end{align*}
The total influence of $f$, denoted by $\mathrm{Inf}(f)$, is defined as,
\begin{align*}
    \mathrm{Inf}(f) \; := \; \dfrac{1}{n} \sum_{1 \leq i < j \leq n} \mathrm{Inf}_{ij}(f)
\end{align*}
\end{definition}
Note that if $i = j$ in the above definition, then $\mathrm{Inf}_{ii}(f) = 0$ and it does not contribute anything towards the total influence.\\

A key lemma in the proof of \cite{Filmus-Ihringer19} is a lower bound on every non-zero influence (see \cite[Lemma 3.1]{Filmus-Ihringer19}). They showed that there exists a constant $\alpha$ such that every non-zero influence of a degree-$d$ polynomial on the balanced slice is at least $\alpha^{d}$. The proof of this lemma in~\cite{Filmus-Ihringer19} uses analytic techniques such as the Log-Sobolev inequality on the Boolean slice~\cite{LeeYau} and the Hypercontractive inequality~\cite{DiaconisSaloffCoste}. Using our distance lemma for the balanced slices (\Cref{lemma:balanced-slice}), we can improve the lower bound to almost $1/2^{d}$ (which is easily seen to be tight up to constant factors). Note that the main result of this section only holds for degree $d \leq C \log n$ for some absolute constant $C > 0$, it suffices to use the simpler proof of \Cref{lemma:balanced-slice}. We state the lemma below.\\

\begin{thmbox}
\begin{lemma}[Lower bound on influences]\label{lemma:influence-lower-bound}
Let $f(x_{1},\ldots,x_{n})$ be a non-zero degree-$d$ function on the balanced slice $\Boo^{n}_{n/2}$. Then for every $(i,j) \in [n] \times [n]$ for which $\mathrm{Inf}_{ij}(f) > 0$, the following holds:
\begin{align*}
    \mathrm{Inf}_{ij}(f) \; \geq \;  \dfrac{1}{4} \cdot \dfrac{1}{2^{d}} \cdot \paren{1 - \dfrac{1}{n^{\varepsilon}}}
\end{align*}
for some absolute constant $\varepsilon > 0.$
\end{lemma}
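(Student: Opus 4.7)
The plan is to deduce \Cref{lemma:influence-lower-bound} as a short corollary of the distance lemma for the balanced slice (\Cref{lemma:balanced-slice}) applied over $\mathbb{R}$. Given a non-zero degree-$d$ function $f \colon \Boo^{n}_{n/2} \to \mathbb{R}$ and indices $i,j \in [n]$ with $\mathrm{Inf}_{ij}(f) > 0$, the key idea is to consider the ``discrete derivative'' $g(\mathbf{x}) := f(\mathbf{x}) - f^{(ij)}(\mathbf{x})$ and show that it is itself a non-zero degree-$d$ polynomial on the balanced slice, so that the distance lemma applies to it.

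First, I would check that $g$ fits the hypotheses of \Cref{lemma:balanced-slice}. Swapping the $i$-th and $j$-th variables is a linear (and degree-preserving) operation on multilinear polynomials, so $f^{(ij)} \in \mathcal{P}_d(n,n/2,\mathbb{R})$ as well, and therefore $g = f - f^{(ij)} \in \mathcal{P}_d(n,n/2,\mathbb{R})$. The assumption $\mathrm{Inf}_{ij}(f) > 0$ from \Cref{defn:influence} translates into $\Pr_{\mathbf{x} \sim U_{n,n/2}}[f(\mathbf{x}) \neq f^{(ij)}(\mathbf{x})] > 0$, which is exactly saying that $g$ does not vanish identically on $\Boo^{n}_{n/2}$.

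Second, since the lemma is stated in the regime $d \leq C \log n$ (well inside $d \leq n^{\varepsilon}$ for the absolute constant $\varepsilon$ provided by \Cref{lemma:balanced-slice}), applying \Cref{lemma:balanced-slice} directly to $g$ gives
\begin{align*}
\Pr_{\mathbf{x} \sim U_{n,n/2}}[g(\mathbf{x}) \neq 0] \;\geq\; \frac{1}{2^{d}} \cdot \paren{1 - \frac{1}{n^{\Omega(1)}}}.
\end{align*}
Combining this with the definition
\begin{align*}
\mathrm{Inf}_{ij}(f) \;=\; \frac{1}{4}\,\Pr_{\mathbf{x} \sim U_{n,n/2}}[f(\mathbf{x}) \neq f^{(ij)}(\mathbf{x})] \;=\; \frac{1}{4}\,\Pr_{\mathbf{x} \sim U_{n,n/2}}[g(\mathbf{x}) \neq 0]
\end{align*}
immediately yields the desired lower bound $\mathrm{Inf}_{ij}(f) \geq \tfrac{1}{4} \cdot \tfrac{1}{2^d} \cdot (1 - 1/n^{\varepsilon})$ for a suitable absolute constant $\varepsilon > 0$.

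There is essentially no technical obstacle here once the right auxiliary polynomial $g$ is identified: the entire content of the lemma is packed into \Cref{lemma:balanced-slice}. The only thing worth double-checking is the degree-preservation under the coordinate swap and the equivalence between ``non-zero influence'' and ``$g \not\equiv 0$ on the slice,'' both of which are immediate. I would emphasize in the write-up that this is precisely the payoff of pushing the distance lemma from the $4^{-d}$ bound of \Cref{lemma:suboptimal-DLSZ-slice} down to essentially $2^{-d}$: the Nisan--Szegedy-style argument transfers verbatim from the cube to the balanced slice and yields the tight $2^{-d}$ lower bound on influences, replacing the hypercontractive machinery used in~\cite{Filmus-Ihringer19}.
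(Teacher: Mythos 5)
Your proof is correct and follows exactly the same route as the paper: form the difference $g = f - f^{(ij)}$, note it is a non-zero degree-$d$ polynomial on the slice precisely because $\mathrm{Inf}_{ij}(f)>0$, and apply \Cref{lemma:balanced-slice} to get the $2^{-d}(1-n^{-\varepsilon})$ lower bound on $\Pr[g\neq 0]$, dividing by $4$. No divergence from the paper's argument.
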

\end{thmbox}

\begin{proof}
Fix some pair $(i,j) \in [n] \times [n]$ with $\mathrm{Inf}^{(ij)}(f) > 0$ and consider the polynomial $g_{ij}$ on the balanced slice, defined as follows: $g_{ij}(\mathbf{x}) := f(\mathbf{x}) - f^{(ij)}(\mathbf{x})$.\newline
Observe that since $f$ is a degree-$d$ polynomial, $f^{(ij)}$ is also a degree-$d$ polynomial, which means $g_{ij}$ is also a degree-$d$ polynomial on the balanced slice. Since the influence $\mathrm{Inf}_{ij}(f) > 0$, this means that $g_{ij}$ is non-zero on the slice $\Boo^{n}_{n/2}$. Now using our distance lemma on the balanced slice \Cref{lemma:balanced-slice},
\begin{gather*}
    \Pr_{\mathbf{x} \sim U_{n,n/2}}[f(\mathbf{x}) \neq f^{(ij)}(\mathbf{x})] \; = \; \Pr_{\mathbf{x} \sim U_{n,n/2}}[g_{ij}(\mathbf{x}) \neq 0] \; \geq \; \dfrac{1}{2^{d}} \cdot \paren{1 - \dfrac{1}{n^{\varepsilon}}} \\
    \Rightarrow \mathrm{Inf}_{ij}(f) \; \geq \;  \dfrac{1}{4} \cdot \dfrac{1}{2^{d}} \cdot \paren{1 - \dfrac{1}{n^{\varepsilon}}}
\end{gather*}
for some absolute constant $\varepsilon > 0.$
\end{proof}

Filmus and Ihringer \cite[Lemma 3.3]{Filmus-Ihringer19} use this lower bound on non-zero influences to get a bound on junta of degree-$d$ polynomials on the balanced slice. Using the above-mentioned improved lower bound on non-zero influence, we can also improve the bounds in \cite[Lemma 3.3]{Filmus-Ihringer19}.\\

\begin{lemma}\label{lem:improved-FI}
There exists an absolute constant $C > 0$ such that for all degree parameters $d \in \mathbb{N}$ such that $d \leq C \log n$, the following holds. Every degree-$d$ polynomial on the slice $\Boo^{n}_{n/2}$ is a $\eta(d)$-junta, where
\begin{align*}
    \eta(d) = \bigO(d \cdot 2^{d}).
\end{align*}
\end{lemma}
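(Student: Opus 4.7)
The plan is to parallel the Nisan-Szegedy junta argument on the Boolean cube, using the improved per-pair influence lower bound from \Cref{lemma:influence-lower-bound} in place of the cube estimate, thereby following the template of \cite[Lemma 3.3]{Filmus-Ihringer19}. I will need two ingredients: (a) the standard upper bound $\sum_{i<j}\mathrm{Inf}_{ij}(f) \le d\cdot n$ for any Boolean-valued $f$ of degree at most $d$ on the balanced slice, which follows from the Johnson-scheme eigendecomposition of \Cref{lemma:bose-mesner-eigenspaces} (each level-$t$ piece contributes proportionally to $t$ to the total influence and $f$ is supported on levels $\le d$); and (b) the lower bound $\mathrm{Inf}_{ij}(f) \ge \frac{1}{4\cdot 2^d}(1-o(1))$ whenever it is non-zero, which is already proved.

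The heart of the argument is a combinatorial claim relating relevance to non-zero pairwise influence. Let $J \subseteq [n]$ denote the set of coordinates on which $f$ truly depends. I will show that $\mathrm{Inf}_{ij}(f) > 0$ for every pair with $i \in J$ and $j \notin J$. The reason is that $f$ is determined by $(x_k)_{k\in J}$, so on inputs with $x_i \ne x_j$ the swap $x_i \leftrightarrow x_j$ acts as a pure flip of the $i$-th junta bit; by definition of $J$, some such flip changes $f$, and provided $|J| \le n/2$ any witnessing junta assignment can be extended to a balanced slice point.

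Assembling the pieces, with $|J| \le n/2$ (the opposite regime is handled separately, since the same argument also bounds $n-|J|$ and the range $d \le C\log n$ constrains when this case can arise at all), the number of pairs $(i,j)$ with $i\in J$ and $j\notin J$ is $|J|(n-|J|) \ge |J|\cdot n/2$, so
\[
\frac{|J|\cdot n}{8\cdot 2^d}\,(1-o(1)) \;\le\; \sum_{i<j}\mathrm{Inf}_{ij}(f) \;\le\; d\cdot n,
\]
which gives $|J| = O(d\cdot 2^d)$ as required. The main point requiring care is the combinatorial claim above and making the notion of ``relevant coordinate'' precise on the slice (where polynomial representations are non-unique); Filmus's harmonic basis provides a clean way to pin this down. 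The remaining chain of inequalities is routine.
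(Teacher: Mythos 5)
Your approach differs from the paper's. The paper (following Filmus--Ihringer) builds the graph $G$ on $[n]$ whose edges are the high-influence pairs, takes a maximal matching $M$ in $G$, uses maximality to argue $[n]\setminus V(M)$ is an independent set so that $f$ is a $|V(M)|=2|M|$-junta, and then bounds $|M|=O(d\cdot 2^d)$ by the influence upper and lower bounds. You instead fix a (minimal) junta set $J$ directly, argue that \emph{all} cross pairs $(i,j)$ with $i\in J,\ j\notin J$ have positive influence, and count $|J|\,|J^c|\cdot\Omega(2^{-d})\le dn$. The cross-pair claim itself can be made rigorous (for a minimal $J$, if $\mathbf{x},\mathbf{y}$ on the slice agree on $J\setminus\{i\}$ and $f(\mathbf{x})\neq f(\mathbf{y})$, one can always re-route a transposition through some $j\notin J$ to produce a balanced witness; the balance constraint works out for any $|J|\le n-1$, not only $|J|\le n/2$).

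The genuine gap is the dichotomy you wave at in parentheses. Your inequality yields $\min\{|J|,\,n-|J|\}=O(d\cdot 2^d)$, i.e.\ either $|J|$ is small or $J$ is nearly all of $[n]$; it does not rule out the second branch, and ``the range $d\le C\log n$ constrains when this case can arise'' is exactly the conclusion, not an argument for it. The maximal-matching route does not suffer from this: there, for each matching edge $(i,j)$ and \emph{every} $k\neq i,j$ (not just $k\notin J$), a transposition-composition argument shows at least one of $\mathrm{Inf}_{ik},\mathrm{Inf}_{jk}$ is positive, so each matching edge witnesses $\Omega(n)$ high-influence pairs (including pairs \emph{within} $V(M)$). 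Summed against $\sum_{i<j}\mathrm{Inf}_{ij}\le dn$ this gives $|M|=O(d\cdot 2^d)$ outright, with no case split. Your bipartite count only sees pairs straddling $J$ and $J^c$, which is precisely what collapses when $J^c$ is small. To close the gap you should either switch to the matching argument, or augment your count with pairs inside $J$ via the same transposition-composition trick; as written, the proof is incomplete.
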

\begin{proof}
The proof is essentially the same proof as in \cite{Filmus-Ihringer19}, except for one inequality which can be improved using \Cref{lemma:influence-lower-bound}. Let $f(\mathbf{x}) \in \mathcal{P}_{d}(n,n/2,\mathbb{R})$ and let $G$ be a graph on the vertex set $[n]$ where $(i,j)$ is an edge if $\mathrm{Inf}_{ij}(f) \geq 1/2^{d} \cdot (1 - 1/n^{\varepsilon})$, where $\varepsilon$ is the absolute constant from \Cref{lemma:influence-lower-bound}. Let $M$ be a maximal matching in $G$. We now proceed similar to the proof in \cite[Lemma 3.3]{Filmus-Ihringer19} and we request the reader to refer \cite{Filmus-Ihringer19} as we just highlight the changes in the proof here.

Using \Cref{lemma:influence-lower-bound}, we get the following two inequalities upper and lower bounding the influence:
\begin{gather*}
    \paren{\dfrac{1}{4} \cdot \dfrac{1}{2^{d}} \cdot \paren{1 - \dfrac{1}{n^{\varepsilon}}} } \cdot \paren{1 - \dfrac{1}{n}} \cdot M \; \leq \; \mathrm{Inf}(f) \; \leq \; d \\
    \Rightarrow M \leq \bigO(d \cdot 2^{d}),
\end{gather*}
where we used the assumption $d \leq C \log n$ in upper bounding $1/n^{\varepsilon}$ by $\frac{1}{10} \cdot \frac{1}{2^{d}}$. Following the argument of~\cite{Filmus-Ihringer19}, this gives us that $f$ is a $2M$-junta, i.e., a $\bigO(d\cdot 2^d)$-junta.
\end{proof}

As already noted in the introduction, a stronger upper bound of $\eta(d) = \bigO(2^d)$ follows from the work of~\cite{Filmus-Ihringer19, CHS-20} (and can also be obtained by plugging \Cref{lemma:influence-lower-bound} in place of \cite[Lemma 3.3]{Filmus-Ihringer19} in the proof of~\cite{Filmus-Ihringer19}). The advantage here is the relatively simple proof following exactly the template of~\cite{Nisan-Szegedy}.

\section{Improved Bound for Linear Functions}
\label{sec:deg1}

In this section, we show how to obtain an improvement over our distance lemma (\Cref{thm:main}) for the case of linear polynomials, i.e., $d=1$. In particular, we will show the following.\\

\begin{thmbox}
\begin{theorem}
    \label{thm:deg1} Let $G$ be an arbitrary Abelian group, and $n\ge 8$ and $k \in [n-1]$ be positive integers. Then, for every polynomial $P({\bf x}) \in \cP_1(n,k,G)$ that is non-zero on $\{0,1\}^n_k$, we have
    $$\Pr_{{\bf x}\sim \{0,1\}^n_k}[P({\bf x}) \ne 0] \ge \frac{t-1}{n},$$ where $t = \min\{k,n-k\}$.
\end{theorem}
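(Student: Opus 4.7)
The plan is to prove the theorem by induction on $n$, with the small base cases $n \in \{8, 9\}$ handled directly (by brute-force enumeration, since the number of orbits of the symmetric group $\prod_j S_{m_j}$ acting on the slice is tiny). First, by negating variables I may assume $k \le n/2$, so $t = k$; the cases $k \le 1$ are vacuous. Next, using the slice identity $\sum_i x_i = k$, I shift all coefficients $a_1, \dots, a_n$ by a common element of $G$ so that WLOG $0 \in G$ is a most common value among them; let $m \ge 1$ be its multiplicity and $T = \{i : a_i \ne 0\}$, so $|T| = n - m$.

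The degenerate subcases $|T| \in \{0, 1\}$ are disposed of directly: if $|T| = 0$ the polynomial is a nonzero constant, giving $\Pr[P \ne 0] = 1$; if $|T| = 1$ with $P = a_0 + b \, x_{i_0}$, then the zero set is either empty or a single sub-slice, yielding $\Pr[P \ne 0] \ge \min(k,n-k)/n \ge (k-1)/n$. The main case is $|T| \ge 2$. Here I select $i_0 \in T$ (with $a_{i_0} = b \ne 0$) and $j_0 \notin T$ (with $a_{j_0} = 0$), which exists since $m \ge 1$, and condition on the pair $(x_{i_0}, x_{j_0})$. Writing $R_0({\bf y}) = a_0 + \sum_{i \notin \{i_0, j_0\}} a_i y_i$ and $R_1 = R_0 + b$, the polynomial $P$ restricts to $R_{\epsilon_1}$ on the sub-slice $\{0,1\}^{n-2}_{k-\epsilon_1-\epsilon_2}$ for each $(x_{i_0}, x_{j_0}) = (\epsilon_1, \epsilon_2)$. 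The critical structural observation is that $R_0$ and $R_1$ differ by the nonzero element $b$, so on the middle sub-slice $\{0,1\}^{n-2}_{k-1}$ (which appears for both $(0,1)$ and $(1,0)$), no point $y$ can simultaneously satisfy $R_0(y) = 0$ and $R_1(y) = 0$; this yields at least $\binom{n-2}{k-1}$ combined nonzero contributions from the two middle sub-cases.

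For the extreme sub-cases $(0,0)$ and $(1,1)$, the assumption $|T| \ge 2$ forces $R_0$ and $R_1$ respectively to be non-identically-zero on $\{0,1\}^{n-2}_k$ and $\{0,1\}^{n-2}_{k-2}$ (otherwise all the remaining coefficients would coincide with $0$, collapsing $|T|$ to $\le 1$), so the inductive hypothesis applies to these sub-cases. The main obstacle will be showing that combining the $\binom{n-2}{k-1}$ middle contribution with the two inductive bounds actually reaches the target $(k-1)\binom{n}{k}/n$; this is tightest for $k$ near $n/2$, where the naive integer accounting can fall short by $O(1)$ points. To close this gap I plan to sharpen the middle observation as follows: if $R_0$ attains any value on $\{0,1\}^{n-2}_{k-1}$ outside $\{0, -b\}$ (the generic situation when $|T|\ge 2$ and the coefficients do not lie in a rigid arithmetic configuration), then that point $y$ contributes two nonzero sub-cases instead of one, supplying the missing contribution. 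The exceptional case where $R_0$ takes only values in $\{0, -b\}$ is highly rigid (it forces every non-anchor coefficient $a_i$ to lie in $\{-b, 0, b\}$), and this narrow family of polynomials can be handled either by choosing the anchor pair $(i_0, j_0)$ differently, or by a short direct count against the explicit structure.
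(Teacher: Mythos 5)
Your approach is genuinely different from the paper's. The paper writes $P = \sum_i a_i x_i + c$, reduces to showing that the sum of $k$ randomly drawn (without replacement) coefficients differs from $-c$ with probability at least $(k-1)/n$, and proves this by a probabilistic ``bucketing'' argument: partition $[n]$ into $k$ buckets of size $m=n/k$ (after reducing to $k \mid n$ via a preliminary sampling step), show that with probability at least $1-1/n$ some bucket contains two distinct $a_i$'s, and then argue a $1/m$ success probability conditioned on the rest by re-randomizing within that one bucket. Your plan instead conditions on a pair of anchor coordinates and inducts on $n$.

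However, as you already suspected, the accounting in your plan does not close near $k=n/2$, and the gap is genuine. Take $n=10$, $k=5$. The target is $\frac{k-1}{n}\binom{n}{k} = \frac{4}{10}\cdot 252 = 100.8$, so you need at least $101$ non-zero points. Your three sources give: middle slices, $\binom{8}{4} = 70$; the $(0,0)$ subcase on $\{0,1\}^8_5$, where $t' = \min(5,3)=3$, so the inductive bound is $\frac{t'-1}{8}\binom{8}{5} = \frac{2}{8}\cdot 56 = 14$; and the $(1,1)$ subcase on $\{0,1\}^8_3$, again $14$. Total $98 < 101$. The deficit arises precisely because stepping to $n-2$ with $k$ unchanged (resp.\ $k-2$) drops $t'$ by $2$ when $k$ was close to $n/2$, so the inductive bound weakens faster than the slice shrinks. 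Your proposed sharpening (count $y$ in the middle slice with $R_0(y)\notin\{0,-b\}$ twice) is the right instinct, but it is not a proof: in the ``rigid'' exceptional family where all $a_i\in\{-b,0,b\}$, you would need a concrete combinatorial count, and that family includes exactly the kind of configurations the paper has to treat by hand in its Claim~5.3 (e.g.\ two coefficients equal to one value, $n-2$ equal to another, where the probability is only $\tfrac12-\tfrac{1}{2(n-1)}$). Saying this case ``can be handled either by choosing the anchor pair differently, or by a short direct count'' is a placeholder for an argument that is not obviously short; you would at minimum need to verify, for every $k$, that either the sharpening recovers the full integer deficit or the rigid structure yields the bound directly. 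As written, the proposal has a real gap here.
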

\end{thmbox}

This is an improvement over~\Cref{thm:main} as we have an additive term of $1/n$ as opposed to $1/n^\varepsilon$ for some constant $\varepsilon \in (0,1)$. In particular, in the regime $k\leq n^\delta$ for small enough $\delta \in (0,1)$, the above theorem gives a lower bound of $1/n^{1-\delta}$, while~\Cref{thm:main} fails to give anything non-trivial.

We note that this is also an improvement over the weak distance lemma shown by~\cite{ABPSS25-ECCC} (i.e.,~\Cref{lemma:suboptimal-DLSZ-slice}) as it gives a lower bound of $\frac{k(n-k)}{n(n-1)}$ which is less than $\frac{k-1}{n}$ for $k \ge \sqrt{n}+1$. In particular, taking $P({\bf x}) = x_1$, we see that our bound is almost tight. Furthermore, the additive term of $1/n$ in the above theorem cannot be avoided (at least for $k=n/2$ and up to a constant factor) as the polynomial $P({\bf x}) = x_1+x_2+1$ in $\cP_1(n,k=n/2,\Z_2)$ which is non-zero with probability $\frac{1}{2}-\frac{1}{2(n-1)}$.

We will now prove~\Cref{thm:deg1}.

\begin{proof}[Proof of~\Cref{thm:deg1}]
   Let $P({\bf x}) = a_1 x_1 + a_2 x_2 + \dots + a_n x_n + c$, where $a_1,\dots,a_n, c \in G$. We may (and we will) assume that $a_1,a_2,\dots,a_n$ are not all equal (abbreviated as NAE from now), as otherwise the polynomial always evaluates to a constant over $\{0,1\}^k$ since $\sum_{i\in [n]} x_i = k$ for all ${\bf x} \in  \{0,1\}^n_k$. Hence the desired probability is equal to 1 since we are guaranteed that $P({\bf x}) \ne 0$ for at least one point ${\bf x} \in \{0,1\}^n_k$.  We may further assume that $k\le n/2$, as otherwise, we can consider the evaluations of the polynomial $P(1-x_1,\dots,1-x_n)$ over ${\bf x} \in \{0,1\}^n_{n-k}$.  

    We divide the proof into two cases depending on whether $k$ divides $n$:
    
    {\bf Case 1: $k$ divides $n$.} That is, $n=mk$ for some integer $m \ge 2$. Let $M \in [n]^{k \times m}$ be the matrix formed by arranging $[n]$ according to a uniformly random permutation.  Denoting the $j$-th column of $M$ by $M_j$, let $\su(M_j) = \sum_{i\in M_j} a_i$. We note that
    \begin{align}\label{eqn:prob:nonzero}\Pr_{x \sim \{0,1\}^n_k}[P({\bf x}) \ne 0] = \Pr_M[\su(M_1) \ne -c],\end{align} as both LHS and RHS are essentially picking $k$ elements (without replacement) from $a_1,\dots,a_n$ uniformly at random and checking whether their sum is not equal to $-c$. We construct $M$ by the following random process:
    \begin{itemize}
        \item Partition $[n]$ into $k$ buckets, $B_1,B_2,\dots,B_k \subseteq [n]$, each of size $m$, uniformly at random. 
        \item Set the $i$-th row of $M$ to be a uniformly random permutation of $B_i$, for each $i\in [k]$ independently.   
    \end{itemize}
    We say that an index $i\in [k]$ is ``good'' if the elements $(a_{j})_{j\in B_i}$ are NAE and ``bad'' otherwise. Since $a_1,\dots,a_n$ are NAE, one might expect that there is at least one good index with high probability (over the randomness of the first step i.e., choosing the buckets $B_1,\dots,B_k$). That is, we give an upper bound on 
    $$\Pr_{B_1,\dots,B_k}[\text{all the indices in }[k] \text{ are bad}].$$

    Note that for the above probability to be positive, the number of times each $a_i$ appears in $(a_i)_{i\in [n]}$ must be a multiple of $m$. Thus we may assume that the multiset $(a_i)_{i\in [n]}$ is of the form: $b_1$ (taken $f_1m$ times), $b_2$ (taken $f_2m$ times), $\dots b_\ell$ (taken $f_\ell m$ times), where $\ell \ge 2$ and $b_1,\dots,b_\ell$ are mutually distinct elements of $G$ and $f_1,f_2,\dots,f_\ell \ge 1$.

    We handle the case of $\ell = m =2$ and either $f_1=1$ or $f_2=1$ separately by the following claim.

    \begin{claim}
        \label{clm:small-case}
        If $\ell=m=2$ and at least one of $f_1$ or $f_2$ is equal to 1, then $$\Pr_{{\bf x}\sim \{0,1\}^n_k}[P({\bf x})\ne 0]\ge \frac{1}{2}-\frac{1}{2(n-1)} \ge \frac{k-1}{n}.$$
    \end{claim}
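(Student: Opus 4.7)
The plan is to use the slice constraint to collapse $P$ to a polynomial of a single new variable taking only three values, and then do a short case analysis. Since $\ell = m = 2$ and $n = 2k$, the multiset of coefficients of $P$ consists of an element $b_1 \in G$ appearing $2f_1$ times and an element $b_2 \in G$ appearing $2f_2$ times, with $f_1 + f_2 = k$. By the hypothesis (and after swapping $b_1, b_2$ if necessary) we may take $f_1 = 1$, so $b_1$ appears exactly twice; call those two coordinates $i_1, i_2 \in [n]$. Using $\sum_j x_j = k = n/2$ on the slice, we rewrite
\[
P(\mathbf{x}) \;=\; (b_1 - b_2)(x_{i_1} + x_{i_2}) \;+\; c'', \qquad c'' := \tfrac{n}{2}\cdot b_2 + c \in G.
\]
Let $s := x_{i_1} + x_{i_2} \in \{0,1,2\}$ and $d := b_1 - b_2 \in G$, so that $P(\mathbf{x})$ depends on $\mathbf{x}$ only through $s$, taking the three values $c''$, $c''+d$, $c''+2d$ according to $s \in \{0,1,2\}$.

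Next I would compute the distribution of $s$ under $\mathbf{x} \sim U_{n,n/2}$ by a direct count:
\[
\Pr[s=0] \;=\; \Pr[s=2] \;=\; \frac{\binom{n-2}{n/2}}{\binom{n}{n/2}} \;=\; \frac{(n/2)(n/2 - 1)}{n(n-1)} \;=\; \frac{n-2}{4(n-1)},
\]
and consequently $\Pr[s=1] = 1 - 2\cdot\frac{n-2}{4(n-1)} = \frac{n}{2(n-1)}$. Since $P$ is assumed non-zero on the slice, at least one of $c''$, $c''+d$, $c''+2d$ is non-zero. I then case-analyze how many of these three group elements vanish:
\begin{itemize}
    \item If all three are non-zero, then $\Pr[P \ne 0] = 1$.
    \item If exactly one is zero, then $\Pr[P \ne 0] \ge 1 - \max_s \Pr[s] = 1 - \frac{n}{2(n-1)} = \frac{n-2}{2(n-1)}$.
    \item If two of the three are zero, then writing out $(c'' + d) - c'' = d$ and $(c'' + 2d) - (c''+d) = d$ shows the only possibility (since $P$ is not identically zero) is $c'' = c''+2d = 0$ with $2d = 0$ but $d \ne 0$. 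In that case $P(\mathbf{x}) \ne 0$ exactly when $s = 1$, so $\Pr[P \ne 0] = \frac{n}{2(n-1)} \ge \frac{n-2}{2(n-1)}$.
\end{itemize}
In every case $\Pr[P \ne 0] \ge \frac{n-2}{2(n-1)} = \frac{1}{2} - \frac{1}{2(n-1)}$, which is the desired bound. The final comparison $\frac{1}{2} - \frac{1}{2(n-1)} \ge \frac{k-1}{n}$ is immediate upon substituting $k = n/2$, which gives $\frac{k-1}{n} = \frac{1}{2} - \frac{1}{n} \le \frac{1}{2} - \frac{1}{2(n-1)}$ for $n \ge 2$.

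The only substantive subtlety lies in the ``two of three zero'' subcase: one must observe that the differences between consecutive elements of $\{c'', c''+d, c''+2d\}$ are both equal to $d$, so two zeros force $2d = 0$, a scenario that is genuinely possible in groups containing 2-torsion (e.g.\ $\mathbb{Z}_2$) and must be handled, but which fortunately gives an even better bound. All remaining steps are routine binomial bookkeeping.
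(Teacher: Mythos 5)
Your proof is correct and follows essentially the same strategy as the paper: collapse $P$ on the slice to a function of $s = x_{i_1} + x_{i_2}$ alone, note the three possible values $c''$, $c''+d$, $c''+2d$ differ by $d \ne 0$, and conclude from the distribution of $s$ that the non-zero probability is at least $\min\{\Pr[s=1],\Pr[s\ne 1]\}= \frac{1}{2}-\frac{1}{2(n-1)}$. The paper phrases the key observation as a single implication (the middle value vanishing forces the other two to be non-zero), whereas you spell the casework out more explicitly, but the argument is the same.
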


    Hence, for the rest of the argument, we will assume that $m,\ell$ and $f_i$'s are such that the premise of~\Cref{clm:small-case} are not met. We then bound the bad probability as follows:
    \begin{claim}\label{clm:bad-prob}
        $$\Pr_{B_1,\dots,B_k}[\text{all the indices in }[k] \text{ are bad}] = \frac{{k \choose f_1,~f_2 ,~\dots,~f_\ell}}{{n \choose f_1m,~f_2 m,~\dots,~f_\ell m}} \le \frac{1}{n}.$$
    \end{claim}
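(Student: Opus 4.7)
The plan is to prove the claim in two stages: first establish the equality for the probability, then prove the inequality $\le 1/n$.

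For the equality, I would count labeled partitions of $[n]$ into $k$ ordered buckets of size $m$ each, totalling $n!/(m!)^k$. A partition is all-bad precisely when every bucket is monochromatic under the coloring $(a_i)_{i\in [n]}$. To count all-bad partitions, I would first assign a color to each bucket (giving $\binom{k}{f_1,\ldots,f_\ell}$ options since exactly $f_j$ buckets must receive color $j$), and then for each color $j$ independently partition its $f_j m$ elements into $f_j$ ordered buckets of size $m$ ($(f_j m)!/(m!)^{f_j}$ ways each). After using $\prod_j (m!)^{f_j} = (m!)^k$, the ratio collapses to $\binom{k}{f_1,\ldots,f_\ell}/\binom{n}{f_1 m,\ldots,f_\ell m}$, as claimed.

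For the inequality, which I recast as $\binom{n}{f_1 m,\ldots,f_\ell m}\ge n\binom{k}{f_1,\ldots,f_\ell}$, I plan to use an injection argument. Let $\mathcal{A}$ be the set of all $\ell$-colorings of $[n]$ with multiplicities $(f_1 m,\ldots,f_\ell m)$, and let $\mathcal{B}$ be the $\binom{k}{f_1,\ldots,f_\ell}$ ``block'' colorings in which each group $G_i = \{(i{-}1)m+1,\ldots,im\}$ is monochromatic. A ``near-block'' is obtained from $\sigma\in\mathcal{B}$ by swapping the colors of two differently-colored elements of $\sigma$; for each $\sigma$, the number of unordered differently-colored pairs equals $\tfrac{1}{2}(n^2 - m^2\sum_c f_c^2) = m^2 \sum_{c<c'} f_c f_{c'}$. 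I would then argue: for $m\ge 3$, every near-block $\tau$ has exactly two ``bad'' groups, each with a well-defined majority color ($m-1$ elements) and minority element (one element), so that $(\sigma,\{i,j\})$ is uniquely recoverable from $\tau$ and the map $(\sigma,\{i,j\})\mapsto \tau$ is injective, giving near-block count $|\mathcal{B}|\cdot m^2\sum_{c<c'} f_c f_{c'}$. For $m=2$, however, a size-$2$ bad group splits $1$--$1$, majority and minority are indistinguishable, and the map becomes $2$-to-$1$, giving only $|\mathcal{N}|\ge 2|\mathcal{B}|\sum_{c<c'} f_c f_{c'}$ near-blocks.

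Since blocks and near-blocks are disjoint subsets of $\mathcal{A}$, this yields $|\mathcal{A}|/|\mathcal{B}|\ge 1+m^2\sum_{c<c'} f_c f_{c'}$ for $m\ge 3$ and $\ge 1+2\sum_{c<c'} f_c f_{c'}$ for $m=2$. It then remains to lower bound $\sum_{c<c'} f_c f_{c'}$: for $\ell\ge 3$, the bound $\max_c f_c\le k-\ell+1\le k-2$ gives $\sum_c f_c^2\le k(k-2)$ and hence $\sum_{c<c'} f_c f_{c'}\ge k$; for $\ell=2$ we have $f_1 f_2\ge k-1$ unconditionally, and $f_1 f_2\ge 2(k-2)\ge k$ when $\min(f_1,f_2)\ge 2$ (forcing $k\ge 4$). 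Plugging in, for $m\ge 3$ the bound $1+m^2(k-1)\ge km$ reduces to $m(k(m-1)-m)\ge 0$, which holds for $m\ge 3,k\ge 2$; for $m=2$ in the non-excluded regime, $1+2k>2k=n$ trivially. The main obstacle is the $m=2$ case, where the swap-injection loses a factor of two because the majority--minority distinction collapses in a bad group of size $2$; this loss is precisely what necessitates the exclusion in \Cref{clm:small-case}, since in the configuration $\ell=m=2$ with $\min(f_1,f_2)=1$ one has $\sum_{c<c'} f_c f_{c'}=k-1<k$ and the bound fails (e.g., at $k=m=2,f_1=f_2=1$, one has $|\mathcal{A}|/|\mathcal{B}|=3<4=n$).
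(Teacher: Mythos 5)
Your equality argument matches the paper's counting in spirit (the paper states the equality without detailed justification, but the labeled-buckets count you give is the natural one and is correct). Your inequality argument, however, takes a genuinely different route from the paper's. The paper splits the same way into $m=2$ and $m\ge 3$, but treats them with unrelated tools: for $m\ge 3$ it invokes the algebraic multinomial bound $\binom{km}{f_1m,\ldots,f_\ell m}\ge\binom{k}{f_1,\ldots,f_\ell}^m$ and then $A/A^m\le 1/k^{m-1}\le 1/(km)$; for $m=2$ it forgets colors down to a $2$-set partition $S_1,S_2$ of the color classes and shows $\binom{k}{f}/\binom{n}{2f}\le 1/n$ for $f\ge 2$ by a monotonicity argument. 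Your approach instead bounds $|\mathcal{A}|/|\mathcal{B}|$ by a single combinatorial injection (blocks plus near-blocks obtained by one color swap), with the only dichotomy being whether the swap map is $1$-to-$1$ ($m\ge 3$) or merely $2$-to-$1$ ($m=2$). I checked the details: the unordered cross-color pair count $m^2\sum_{c<c'}f_cf_{c'}$ is right, injectivity/$2$-to-$1$-ness via recovering the two non-monochromatic groups and their majority/minority structure is sound (the two swapped elements necessarily lie in distinct groups), and your lower bounds $\sum_{c<c'}f_cf_{c'}\ge k-1$ (always) and $\ge k$ (whenever $\ell\ge 3$, or $\ell=2$ with $\min f_i\ge 2$) are correct, which is exactly what is needed to conclude $1+m^2(k-1)\ge n$ for $m\ge 3$ and $1+2k> n$ for $m=2$.

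What each approach buys: the paper's $m\ge 3$ argument is slicker algebraically but, as stated, only literally covers $k^{m-2}\ge m$; it relies on the ambient $n\ge 8$ hypothesis to exclude $(k,m)=(2,3)$ (and has a small typo in its coverage statement, presumably meaning $k=2,m\ge 4$). Your near-block argument covers all $m\ge 2$, $k\ge 2$ uniformly, doesn't lean on $n\ge 8$, and makes visible exactly where the exclusion of the $\ell=m=2$, $\min f_i=1$ configuration is forced: the factor-of-two loss when a size-$2$ group has no majority is the structural reason $\sum_{c<c'}f_cf_{c'}=k-1$ is one short, and your explicit counterexample ($|\mathcal{A}|/|\mathcal{B}|=3<4$ at $k=m=2$) confirms the bound genuinely fails there. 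This is a nice self-contained alternative that explains the case split rather than just routing around it.
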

    
    Given~\Cref{clm:bad-prob}, we conclude that with probability at least $1-1/n$, there exists at least one index $i\in [k]$ such that $(a_j)_{j\in B_i}$ are NAE; suppose $h_1 \ne h_2 \in (a_j)_{j\in B_i} \subseteq  G$ be such elements. Recall that we permute the elements corresponding to $B_i$ randomly to form the $i$-th row of the matrix $M$. As this is performed independently across the rows, we may fix an arbitrary permutation of all the rows except the $i$-th one and argue a lower bound on the probability of $P({\bf x})$ being non-zero by~\eqref{eqn:prob:nonzero}. Let $g_1,\dots,g_i,\dots,g_k \in G$ be the corresponding group elements in the first column, where $(g_j)_{j\ne i}$'s are some constants and $g_i$ is picked uniformly at random from the multiset $(a_j)_{j\in B_i}$. We have $\su(M_1) = g_1 + \dots + g_k$. Notice that the sum corresponding to $g_i = h_1$ and $g_i = h_2$ are not equal as $h_1 \ne h_2$. Hence,
    $$\Pr_{g_i}[g_1 + \dots + g_k \ne -c] \ge 1/m.$$ Therefore, we get $$\Pr_M[\su(M_1)\ne -c~|~\text{there exists a good index }i\in [k]]\ge 1/m = k/n.$$ Combining with~\Cref{clm:bad-prob}, this gives us that $\Pr_M[\su(M_1) \ne -c] \ge (k-1)/n$, thus finishing the proof of the theorem (when $k$ divides $n$).

    We now move to the case when $k$ does not divide $n$. 
    
    {\bf Case 2: $k$ does not divide $n$.} Suppose $n = mk +k'$ for some positive integers $m$ and $k' < k$. Since $k \le n/2$, this implies $m\ge 2$. Similar to the previous argument, we will analyze the probability of $P({\bf x})$ being non-zero using the group elements. In particular, we would like to lower bound the probability of $\sum_{i\in A} a_i \ne -c$, where $A \subseteq [n]$ is subset of size $k$ chosen uniformly at random. We will first sample $A$ by the following process. 
    \begin{itemize}
        \item[--] Choose a uniformly random subset $B$ of $[n]$ of size $mk$.
        \item[--] Choose a uniformly random subset $A$ of $B$ of size $k$. 
    \end{itemize} We claim that with probability at least $mk/n$, the elements $(a_i)_{i\in B}$ are NAE. More formally, we claim that $\Pr_{B \sim {[n]\choose mk}}[(a_i)_{i\in B}\text{~are NAE}] \ge mk/n$. To show this, suppose $g_1 \ne g_2 \in G$ be two distinct elements in $(a_i)_{i\in [n]}$. We have two cases.
    \begin{itemize}
        \item {\bf Case (i): There exists a $g\in G$ that occurs at least $mk$ times in $(a_i)_{i\in [n]}$.} Since $g_1 \ne g_2$, at least one of these two elements is not equal to $g$; say $g \ne g_1$ without loss of generality. We claim that $g_1 \in (a_i)_{i\in B}$ implies $(a_i)_{i\in B}$ are NAE. To see this, we note that $n-mk = k' < mk$, so $g$ must always appear in $(a_i)_{i\in B}$. Hence, $(a_i)_{i\in B}$ are NAE whenever $g_1 \in (a_i)_{i\in B}$. As $B$ is a uniformly random subset of size $mk$, this event (i.e., $g_1 \in (a_i)_{i\in B}$) happens with probability at least $mk/n$. 
        \item {\bf Case (ii): No such element exists.} In this case, $(a_i)_{i\in B}$ are NAE for all choices of $B \in {[n]\choose mk}$. 
    \end{itemize}
    Therefore, $\Pr_{B \sim {[n]\choose mk}}[(a_i)_{i\in B}\text{~are NAE}] \ge mk/n$. Conditioned on $B$ satisfying the condition that $(a_i)_{i\in B}\text{~are NAE}$, we note that $\sum_{i\in A} a_i \ne -c$ with probability at least $(k-1)/(mk)$ by using the fact that we have already established in \Cref{thm:deg1} for the case when $k$ divides $n$. Hence, the final probability is
    \begin{align*}
        \Pr_{A\sim {[n]\choose k}}\bigg[\sum_{i\in A}a_i \ne -c\bigg] & \ge \Pr_{B\sim {[n] \choose mk}}\bigg[(a_i)_{i\in B}\text{ are NAE}\bigg]\cdot \Pr_{A \sim {B \choose k}}\bigg[\sum_{i\in A} a_i \ne -c~\mid~(a_i)_{i\in B}\text{ are NAE}\bigg]\\
        & \ge \frac{mk}{n}\cdot \frac{k-1}{mk} = \frac{k-1}{n}.
    \end{align*}
\end{proof} 

We end with the proofs of~\Cref{clm:small-case} and~\Cref{clm:bad-prob}.

\begin{proof}[Proof of~\Cref{clm:small-case}]
    We have $k=n/2$. Without loss of generality, suppose $f_1=1$. Recall that the multiset $(a_i)_{i\in [n]}$ is equal to $b_1$ repeated $2f_1 = 2$ times and $b_2$ repeated $2f_2 = n-2$ times, for some $b_1\ne b_2 \in G$. We may further assume that $P({\bf x}) = a_1x_1 + a_2 x_2 + \dots + a_n x_n +c$ where $a_1=a_2=b_1$ and $a_3=a_4=\dots=a_n = b_2$. As $\sum_{i=1}^n x_i = k$ for all points ${\bf x} \in \{0,1\}^n_k$, we have  
    \begin{align*}P({\bf x}) & = b_1(x_1 + x_2) + b_2 (x_3 + x_4 + \dots + x_n) +c \\
    & = b_1(x_1 + x_2) + b_2 (k-x_1-x_2) + c \\
    & = {(b_1 - b_2)} (x_1 + x_2) + {kb_2 + c}.
    \end{align*}
    Let $b:=b_1-b_2$ and $b':=kb_2+c$. Then we have, $P({\bf x}) = \begin{cases}  b',\text{~if~}x_1=x_2=0,\\
    b + b',\text{~if~}x_1+x_2 =1,\\
    2b + b',\text{~otherwise.}
     \end{cases}$ 
     
     Since $b\ne 0$, we have the implication $$b+b' =0 \implies (b' \ne 0 \text{~and~} 2b+b' \ne 0).$$ 
     Hence, $$\Pr_{{\bf x}\sim \{0,1\}^n_{n/2}}[P({\bf x})\ne 0] \ge \min\bigg\{\Pr_{{\bf x}\sim \{0,1\}^n_{n/2}}[x_1 + x_2 =1], ~\Pr_{{\bf x}\sim \{0,1\}^n_{n/2}}[x_1 + x_2 \ne 1]\bigg \} = \frac{1}{2}-\frac{1}{2(n-1)}.$$
\end{proof}

Now, we prove~\Cref{clm:bad-prob}.

\begin{proof} [Proof of~\Cref{clm:bad-prob}]
    We consider two cases depending on the value of $m$.
    \begin{itemize}
        \item {\bf Case 1: $m=2$.} We have $n=2k$. We note that at least one of the two conditions below must be met:
        \begin{itemize}
            \item $\ell \ge 3$, or 
            \item $\ell = 2$ and $f_1,f_2 \ge 2$.
        \end{itemize}
        Regardless of which of the above two conditions is satisfied, we can always partition the multiset $(a_i)_{i\in [n]}$ into two sub(multi)sets $S_1 \subseteq G$ and $S_2 \subseteq G$, 
        such that $4\leq |S_1|\leq n/2$ 
        and for every $g_1 \in S_1$ and $g_2 \in S_2$, we have that $g_1 \ne g_2$. We now note that a necessary condition for an index $i\in [k]$ to be bad is $\{a_j|j\in B_i\}$ being a subset (as a multiset) of $S_1$ or $S_2$. Hence, the probability that all $i\in [k]$ are bad is at most ${{k\choose f}}/{{n \choose 2f}}$, where $f:=|S_1|/2 \in [2,k/2]$.  As ${{k\choose f}}/{{n \choose 2f}}$ is an increasing function of $f$ when $f \le k/2$, we can lower bound it by the value corresponding to $f=2$, i.e., $\frac{{k \choose 2}}{{n \choose 4}} = \frac{3}{(n-1)(n-3)} \le \frac{1}{n}$.
        \item {\bf Case 2: $m \ge 3$.} We note that the numerator of the fraction in~\Cref{clm:bad-prob} is $A:={k \choose f_1,~f_2,~\dots,~f_\ell} \ge k$ (as each $f_i \ge 1$) and the denominator is ${km \choose f_1 m,~f_2 m,~\dots,~f_\ell m} \ge {k \choose f_1,~f_2,~\dots,~f_\ell}^m = A^m$ by a simple counting argument. Hence, we have
        \begin{align*}
            \frac{{k \choose f_1,~f_2,~\dots,~f_\ell}}{{km \choose f_1 m,~f_2 m,~\dots,~f_\ell m}} \le \frac{A}{A^m}
            \le \frac{1}{k^{m-1}}
             \le \frac{1}{mk} = \frac{1}{n}.
        \end{align*}
        The last inequality follows using $k^{m-2} \ge m$ for $k,m \ge 3$ and for $k=2,m=4$.
    \end{itemize}
\end{proof}

\medskip

\printbibliography[
heading=bibintoc,
title={References}
] 

\addtocontents{toc}{\protect\setcounter{tocdepth}{1}}
\appendix

\section{Appendix}
\subsection{Proofs of \Cref{clm:eig-1} and \Cref{clm:D-conc}}\label{app:claims-simple-proof}
\begin{proof}[Proof of~\Cref{clm:eig-1}]
    By the definition of the weight function $w$ of the generators, we have
    \begin{align*}
        \mu'_\emptyset = \sum_{{\bf y}\in \{0,1\}^{2m}_e} w({\bf y})
         = \sum_{d=0}^m {2m \choose 2d} \cdot \frac{1}{2^m \cdot {m\choose d}}
         = \sum_{d=0}^m \frac{{2m \choose 2d}}{{m\choose d}^2} \cdot \frac{{m\choose d}}{2^m}
         & \le \max_{d \in [0..m]}\bigg(\frac{{2m \choose 2d}}{{m\choose d}^2}\bigg) \cdot \sum_{d=0}^m \frac{{m\choose d}}{2^m}\\
        & = \max_{d \in [0..m]}\bigg(\frac{{2m \choose 2d}}{{m\choose d}^2}\bigg).
    \end{align*}
    Now, for every $d\in [0..m]$, 
    $$\frac{{2m \choose 2d}}{{m\choose d}^2} = \frac{(2m)!d!^2(m-d)!^2}{(2d)!(2m-2d)!m!^2} = \frac{{2m \choose m}}{{2d \choose d}{2m-2d \choose m-d}}=O\bigg(\frac{2^{2m} }{\sqrt{m}}\cdot \frac{\sqrt{d}}{ 2^{2d}} \cdot \frac{\sqrt{m-d}}{2^{2m-2d}}\bigg)=O\bigg(\sqrt{\frac{d(m-d)}{m}}\bigg) \le O(\sqrt{m}),$$ where the last inequality uses the AM-GM inequality. Therefore, $0<\mu'_\emptyset \le O(\sqrt{m})$. 
\end{proof}

Finally, we prove~\Cref{clm:D-conc}.

\begin{proof}[Proof of~\Cref{clm:D-conc}]
    Letting $B:=\{d\in [0..m]~|~|2d-m|>\sqrt{50m\log m}\}$, we can explicitly express the probability as $$\Pr_{{\bf x}\sim \mathcal{D}}[||{\bf x}| - m| > \sqrt{50 m \log m}] = \sum_{d\in B}{2m \choose 2d}\cdot \frac{1}{2^m {m\choose d}} = \sum_{d\in B}\frac{{2m \choose 2d}}{{m\choose d}^2}\cdot \frac{{m\choose d}}{2^m} \le O(\sqrt{m})\cdot \sum_{d\in B}\frac{{m\choose d}}{2^m},$$ by using the bound $\frac{{2m \choose 2d}}{{m\choose d}^2} \le O(\sqrt{m})$ from the proof of~\Cref{clm:eig-1}. To bound the second factor $\sum_{d\in B} \frac{{m\choose d}}{2^m}$, we use a Chernoff bound for the sum of $m$ i.i.d.~copies of a uniformly random Boolean variable. In particular, we get $\sum_{d\in B} \frac{{m\choose d}}{2^m} \le O(1/m^3)$. Hence $\Pr_{{\bf x}\sim \mathcal{D}}[||x|-m|>\sqrt{50m \log m}] \le O(1/m^{2})$.
\end{proof}

\subsection{Proof of \Cref{claim:upper-bound-good-matching}}\label{app:good-matching}

\upperboundgoodmatching*

\begin{proof}[Proof of \Cref{claim:upper-bound-good-matching}]
Let us first how to describe a good matching, or in other words, how to generate a good matching. We first choose a subset $T \subseteq \mathbf{u}^{-1}\set{1}$ of size $0 \leq k \leq t$ which will be matched outside the set $\set{2,4,\ldots,2t}$. To satisfy the good matching condition, it enforces that for every $i \in T$, $\mathcal{M}(2i) \in \set{1,3,\ldots,2t-1} \setminus T$. We choose remaining $k$ vertices from $\mathbf{u}^{-1}\set{0}$ which will be matched outside $\set{1,3,\ldots,2t-1}$. This also enforces that $(t-k) \geq k \Rightarrow k \leq t/2$. Finally, we also account for the possible matching. This gives us:
\begin{equation}
    \Pr_{\mathcal{M}}[\mathcal{M} \; \text{ is a good matching}] \, = \, \sum_{k=0}^{t/2} \dfrac{\binom{t}{k} \binom{t-k}{k} (t-k)! (n/2-t+k)!}{(n/2)!}
\end{equation}
We upper bound $\binom{t}{k}, \binom{t-k}{k}$, and $(t-k)!$ by $t^{t}$ for all $0 \leq k \leq t/2$. Since $t \leq n^{\delta}$, we can upper bound $(n/2-t+k)$ by $(n/2-t/2)!$ for all $0 \leq k \leq t/2$. Using these we get,
\begin{align*}
   \Pr_{\mathcal{M}}[\mathcal{M} \; \text{ is a good matching}] \; \leq \; \dfrac{t^{3t}}{\binom{n/2}{t/2} (t/2)!}
\end{align*}
Employing the standard binomial estimate that $\binom{N}{K} \geq (N/K)^{K}$ and Stirling's approximation, we get,
\begin{align*}
    \Pr_{\mathcal{M}}[\mathcal{M} \; \text{ is a good matching}] \; \leq \; \dfrac{t^{3t}}{n^{t/2}} \; = \; \dfrac{1}{n^{\Omega(t)}}
\end{align*}

\noindent
Now we upper bound the probability that a random matching is a $t$-self good matching. Observe that if a matching $\mathcal{M}$ is a $t$-self good matching, then it implies there exists a subset $\widetilde{T} \subset [t]$ of size $t/2$ such that $\mathcal{M}(2i-1) = 2i$ for $i \in \widetilde{T}$. We can have an arbitrary matching in the remaining $(n/2-t/2)$ vertices in both $L$ and $R$. This gives us:
\begin{align*}
     \Pr_{\mathcal{M}}[\mathcal{M} \; \text{ is a $t$-self good matching}] \; \leq \; \dfrac{\binom{t}{t/2}(n/2-t/2)!}{(n/2)!} \; \leq \; \dfrac{1}{n^{\Omega(t)}}
\end{align*}
This finishes the proof of \Cref{claim:upper-bound-good-matching}.
\end{proof}

\subsection{Proof of \Cref{lem:dim-wilson}}\label{app:wilson-proof}

\degdslicedimension*

The proof is by induction on the parameter $t = (k-d)\cdot(n-k-d).$

The base case of the induction corresponds to the case when $t = 0$, i.e. $d = \min\{k,n-k\}.$ In this case, we want to show that any function $f:\{0,1\}^n_k \rightarrow \mathbb{Z}_q$ is a degree-$d$ polynomial. To show this, it suffices to show that any $\delta$-function on $\{0,1\}^n_k$ can be written as a polynomial of degree at most $d.$ 

Consider without loss of generality the $\delta$-function at the point $\mathbf{a} = 1^k 0^{n-k}.$ If $d = k,$ then the monomial $x_1\cdots x_d$ computes exactly this $\delta$-function. On the other hand, if $d = n-k,$ we can instead use the polyomial $(1-x_1)\cdots (1-x_d).$ In either case, we are done. This proves the base case, i.e. that $|\mathcal{P}_d(n,k,\mathbb{Z}_q)| \geq q^{\binom{n}{d}}$ in this case.

For the induction, assume that $d < \min\{k, n-k\}.$ We claim that 
\begin{equation}\label{eq:Wilson-indn}
|\mathcal{P}_d(n,k,\mathbb{Z}_q)| \geq |\mathcal{P}_d(n-1,k,\mathbb{Z}_q)\times \mathcal{P}_{d-1}(n-1,k-1,\mathbb{Z}_q)| 
\end{equation}
which immediately implies the inductive case using the induction hypothesis\footnote{Note that the induction hypothesis is applicable as either $k-d$ or $n-k-d$ drops by $1$ while the other remains the same.}, as we have
\begin{align*}
    |\mathcal{P}_d(n-1,k,\mathbb{Z}_q)\times \mathcal{P}_{d-1}(n-1,k-1,\mathbb{Z}_q)|  &\geq  |\mathcal{P}_d(n-1,k,\mathbb{Z}_q)|\cdot |\mathcal{P}_{d-1}(n-1,k-1,\mathbb{Z}_q)|\\ 
    &\geq  q^{\binom{n-1}{d}}\cdot q^{\binom{n-1}{d-1}} = q^{\binom{n}{d}}.
\end{align*}

To prove \Cref{eq:Wilson-indn}, we give an injection $\iota$ from the set $\mathcal{P}_d(n-1,k,\mathbb{Z}_q)\times \mathcal{P}_{d-1}(n-1,k-1,\mathbb{Z}_q)$ to the set $\mathcal{P}_d(n,k,\mathbb{Z}_q)$. For each function in $\mathcal{P}_d(n-1,k,\mathbb{Z}_q),$ we fix arbitrarily a polynomial of degree at most $d$ representing this function, and do the same for functions in $\mathcal{P}_{d-1}(n-1,k-1,\mathbb{Z}_q)$ with the degree parameter being $d-1.$

Let $(P,Q)\in \mathcal{P}_d(n-1,k,\mathbb{Z}_q)\times \mathcal{P}_{d-1}(n-1,k-1,\mathbb{Z}_q)$ be the chosen polynomial representations of a pair of functions in the corresponding sets. Define $\iota(P,Q)$ to be 
\[
R(x_1,\ldots, x_n) = P(x_1,\ldots, x_{n-1}) + x_n \cdot Q(x_1,\ldots, x_{n-1}).
\]
The map is injective because the function computed by $P$ (and hence the underlying polynomial which is fixed by the function) can be obtained by restricting $R$ to the points where $x_n = 0$. Further, the function $Q$ can be obtained by evaluating $R$ at the points where $x_n = 1$ and subtracting the value of the polynomial $P$ evaluated at the same point. 

This proves \Cref{eq:Wilson-indn} and concludes the inductive case.
\end{document}